\newtheorem{thm}{Theorem}[section]
\newtheorem{cor}[thm]{Corollary}
\newtheorem{lem}[thm]{Lemma}
\newtheorem{pro}[thm]{Proposition}
\theoremstyle{definition}
\newtheorem{dfn}[thm]{Definition}
\theoremstyle{remark}
\numberwithin{equation}{section}
\begin{document}
\title[Fourier's law and mesoscopic limit equation]{From deterministic dynamics to
  thermodynamic laws II: Fourier's law and mesoscopic limit equation}
\author{Yao Li}
\address{Yao Li: Department of Mathematics and Statistics, University
  of Massachusetts Amherst, Amherst, MA, 01002, USA}
\email{yaoli@math.umass.edu}

\keywords{Fourier's law, dynamical billiards, Markov process, martingale problem}
\begin{abstract}
  This paper considers the mesoscopic limit of a stochastic energy
  exchange model that is numerically derived from deterministic
  dynamics. The law of large numbers and the central limit theorems
  are proved. We show that the limit of the stochastic
  energy exchange model is a discrete heat equation that satisfies Fourier's
  law. In addition, when the system size (number of particles) is
  large, the stochastic energy exchange is approximated by a
  stochastic differential equation, called the mesoscopic limit
  equation. 
\end{abstract}

\thanks{The author is partially supported by NSF DMS-1813246.}
\maketitle
\section{Introduction}
Fourier's law is an empirical law describing the relationship between the thermal conductivity
and the temperature profile. In 1822, Fourier concluded that ``the
heat flux resulting from thermal conduction is proportional to the
magnitude of the temperature gradient and opposite to it in sign''
\cite{fourier1822theorie}. The well-known heat equation is derived based on Fourier's
law. However, the rigorous derivation of Fourier's law from
microscopic Hamiltonian mechanics remains to be a challenge to
mathematicians and physicist \cite{bonetto2000fourier}. This challenge mainly comes from our
limited mathematical understanding to nonequilibrium statistical
mechanics. After the foundations of statistical
mechanics were established by Boltzmann, Gibbs, and Maxwell more than
a century ago, many things about nonequilibrium steady state (NESS)
remains unclear, especially the dependency of key quantities on the
system size $N$. 

There have been several studies that aim to derive Fourier's law from the
first principle. A large class of models \cite{rey2001exponential,
  rey2002fluctuations, eckmann1999non, eckmann1999entropy, eckmann2000non} use anharmonic
chains to describe heat conduction in insulating crystals. The ergodicity
(existence, uniqueness, and the speed of convergence) of nonequilibrium steady states for some (but not all) of anharmonic
chains can be rigorously proved \cite{rey2001exponential, rey2000asymptotic}. Entropy production rate
can also be studied in some cases \cite{rey2002fluctuations,
  ruelle1997entropy, ruelle1996positivity, bonetto2009heat, eckmann1999entropy}. Also, the limiting
dynamics of energy profiles of some weakly interacting Hamiltonian system
follows Ginzburg-Landau dynamics, whose scaling limit is a nonlinear
heat equation \cite{dolgopyat2011energy, liverani2011toward}. But in general, Fourier's law can only be
proved for some simple Hamiltonian models and energy exchange
models \cite{bernardin2005fourier, kipnis1982heat}. Other studies consider dynamical billiards
systems, which largely resembles the heat conduction in ideal gas. Rigorous results beyond ergodicity is extremely difficult
when a system involves multiple interacting particles
\cite{simanyi1999hard, simanyi2003proof, bunimovich1992ergodic}. But
many non-rigorous results are available. For example, many recent
studies \cite{li2013existence, grigo2012mixing, sasada2015spectral} consider the Markov energy exchange models obtained from non-rigorous derivations
in \cite{gaspard2008heat, gaspard2008heat2,
  gaspard2008derivation}. Also see \cite{lepri2003thermal} for a
review of many numerical and analytical results.

The aim of this series of paper is to derive macroscopic thermodynamic
laws, including Fourier's law, from deterministic billiards-like
models. As stated above, a fully rigorous derivation is extremely
difficult due to the limited mathematical understanding to billiards
systems with multiple interacting particles. Hence the philosophy of
this series is to use as much rigorous study as possible, and
connecting gaps between pieces of rigorous works by numerical
results. The subject of this study is a dynamical system that models
heat conduction in gas. Consider a long and thin 2D billiard table that is
connected with two heat baths with different temperatures. Many
disk-shaped moving particles are placed in the tube. Particles move
and interact freely through elastic collisions. When a particle hits
the heat bath, it receive a random force whose statistics depends on
the boundary temperature. Needless to say, this is not a
mathematically tractable problem. We lose control of a particle once
it moves into the tube. 

In \cite{li2018billiards}, we impose a localization to this billiard-like model
by adding a series of barriers into the tube. This divides a tube to a
chain of cells. Particles can collide through opennings on the barrier
but can not pass the barrier. The motivation is that
the mean free path of realistic gas particles is as short as 68 nm at
ambient pressure \cite{jennings1988mean}. Similar idea of localization
is also reported in \cite{caprini2017fourier}. Then we use numerical simulation to
study the statistics of energy exchanges between cells. Because of the
localization, energy exchange can only be made through ``effective
collisions'', which means collisions between two particles from
adjacent cells through the opening on the barrier. The time
distribution of effective collisions and the rule of energy exchange
during an effective collision are studied. A stochastic energy
exchange model is then obtained in \cite{li2018billiards}. Additional numerical
simulation shows that this stochastic energy exchange model preserves
the key asymptotical dynamics of the original billiards-like model.

In this paper, we continue to work on the mesoscopic limit of the stochastic
energy exchange model derived in \cite{li2018billiards} and further studied in
\cite{li2018polynomial}. Still consider the
ideal gas at ambient pressure. If the size of a cell is at the same
scale of the mean
free path, then a cell should contain $10^{4} \sim 10^{5}$
particles. Therefore, we should work on the stochastic energy exchange
model with a large number of particles in each cell. In this senario, each energy
exchange only changes a small proportion of the total cell energy. To
maintain the thermal conductivity unchanged, some geometric rescaling
and time rescaling
is necessary. When the number of particles per cell increases, the
size and mass of each particle must decrease correspondingly. Then we need
to rescale the time if necessary, such that particles can not
pass these openings, but the order of magnitude of the mean heat flux can be
preserved. Let $M$ be the number of particles per cell. The goal of
the rescaling is to make the number of energy exchange per unit
time $O(M)$, and the mean heat flux $O(1)$. 

We work on the stochastic energy exchange model after the
geometric and time rescaling. The rule of energy redistributions still follow
from what we have obtained in \cite{li2018billiards}. The resultant stochastic energy
exchange model resemble a slow-fast dynamical system when there are
many particles in each cell. Small energy
exchanges occur with high frequency. Each energy exchange can be
described by a function of the current energy configuration and a few
i.i.d. random variables. This motivates us to study the law of large
numbers and the central limit theorem when $M$ approaches to
infinity. We call it the mesoscopic limit, because the observable
under consideration is now the total energy of $10^{4} \sim 10^{5}$
particles. Our calculation reveals that the mesoscopic dynamics of the
stochastic energy exchange model mimics
the Landau-Ginzburg dynamics, which appears in the scaling limit of a
number of Hamiltonian systems. 

The technique used in this paper, namely the martingale problem, is
classical. It was proposed in 1970s and successfully used
to study the scaling limit of chemical reaction systems and slow-fast
hyperbolic dynamical 
systems \cite{de2015martingale, anderson2011continuous,
  stroock2007multidimensional}. We first use the result in
\cite{ethier2009markov} to show the tightness of a sequence of random
processes. Then the limit is given by the uniqueness of the solution
to the martingale problem. In some estimations, it is particularly important to
``prescribe the randomness'' to the energy exchange model. This allows
us to ``decouple'' dependent variables after some relaxation. After
decoupling, we can work on independent random variables. We remark
that these
techniques has been applied in our earlier papers
\cite{li2014nonequilibrium, li2018polynomial}. We
set up two martingale problems to prove the law of large numbers and
the central limit theorem respectively.

The law of large number shows that at the infinite-particle limit, the
stochastic energy exchange model converges to a nonlinear discrete heat equation. In
addition, this equation admits a stable equilibrium. The energy flux
starting from this stable equilibrium can be explicitly given. Hence Fourier's law is
easily derived from the equilibrium of this discrete heat
equation. At ambient pressure, $M$
is only $10^{4}$ to $10^{5}$. Therefore, random fluctuations, which is
in the magnitude of $O(M^{-1/2})$, should not be neglected. This motivates
us to further study the central limit theorem.

The central limit theorem shows that the rescaled difference between
the stochastic energy exchange model and the nonlinear discrete heat
equation is given by a timely dependent stochastic differential
equation. Combine estimates from the law of large numbers and the central
limit theorem. Some easy calculations show that the stochastic energy
exchange model is then approximated by a stochastic differential
equation with $O(M^{-1/2})$ random perturbation term. We call this
stochastic differential equation the mesoscopic limit equation. As
will be discussed in the conclusion, the nonequilibrium steady state of this
mesoscopic limit equation can be explicitly approximated by a WKB
expansions. As a result, many properties, including the long range
correlations like the one given in \cite{spohn1983long}, entropy production rates,
and fluctuation-dissipation theorems can be proved by
working on this mesoscopic limit equation. Fourier's law of the NESS
of the stochastic energy exchange model, which is a stronger result
than the Fourier's law proved in this paper, can also be proved. We decide to put these
results into our subsequent work. 

The organization of this paper is as follows. In Section 2, we
review the main result of \cite{li2018billiards}, introduce the model setting, and describe the stochastic energy model
under the geometric rescaling.  Section 3 gives the main
result. The main strategy of proof is described in Section 4. The law of
large numbers and Fourier's law are proved in Section 5. Section 6 is
about the central limit theorem and the mesoscopic limit
equation. Section 7 is the conclusion.

\section{From billiard dynamics to stochastic energy exchange model}
\subsection{Billiards model with time rescaling}
Consider an 1D chain of $N$ billiard tables (see Figure \ref{fig1}) in $\mathbb{R}^{2}$ that
are connected through nearest neighbors, denoted by $\Omega_{1}, \cdots,
\Omega_{N}$. We assume each table is a subset of $\mathbb{R}^{2}$
whose boundary is formed by finitely many piecewise $C^{3}$ curves
that are either flat or convex inward. This assumption makes the billiard system
chaotic. Then we place $M$ disk-shaped particles into each cell. The
radius of each particle is $RM^{-1/2}$, and the mass of each particle
is $2M^{-1}$. Hence the total area of particles equals $\pi R^{2}$ and the total
mass of particles equals $2$. In addition, a barrier with a hole is placed
between each adjacent pair of cells. The size of the hole is 
$2(1-\epsilon)RM^{-1/2}$ with $\epsilon \ll 1$, so that particles can
not pass the hole. 

\begin{figure}[htbp]
\centerline{\includegraphics[width = \linewidth]{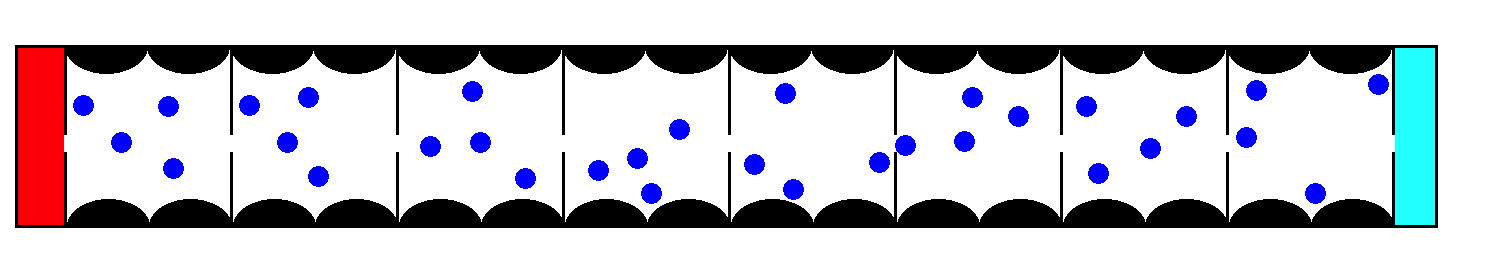}}
\caption{An 1D chain of billiard tables connected with two heat
  baths. $M = 4$ particles are ``trapped'' in each cell. A barrier
  with a hole is placed between adjacent cells, such that particles
  can collide through the hole, but cannot pass it. }
\label{fig1}
\end{figure}

Particles can move freely until colliding with the cell
boundary (including the barrier) or other particles. We assume the
following for this billiard system.
\begin{itemize}
  \item A particle is trapped by barriers and will never leave its
    cell.
\item Particles from neighbor cells can collide through holes on the
  barrier.
\item All collisions are elastic. Particles do not rotate.
\item The billiard system in each cell is chaotic.
\item $R$ is small enough such that particles would not get stuck. 
\item $R$ is small enough such that particles can be completely out of
  reach by their neighbors. 
\end{itemize}

It remains to prescribe the boundary condition. We assume that this
chain is coupled with two heat baths through the left and right
cells. The heat bath is a billiard table with the same geometric
configuration but randomly chosen total kinetic energy. After a
collision between a heat bath particle and a ``regular'' particle, a
random total energy $E_{L}$ (resp. $E_{R}$) is chosen for the left
(resp. right) heat bath from the exponential distribution with mean
$T_{L}$ (resp. $T_{R}$). Then all heat bath particles are
redistributed such that their positions and velocities satisfy the
conditional Liouville measure (conditioning on the conservation of
total energy). The system evolves deterministically between
redistributions of heat bath particles. 

The first paper in this series \cite{li2018billiards} numerically shows the
following results.
\begin{itemize}
  \item The time between two consecutive collisions through the
    barrier is exponentially distributed with a rate that can be
    approximated by $\min\{E_{1}, E_{2}\}$ if $\min \{E_{1} , E_{2}\}
    \ll 1$, where $E_{1}$ and $E_{2}$ represent the
    total energy in two cells respectively.
\item The energy carried by the particle that participates a collision
  through the barrier can be approximated by a Beta distribution with
  parameters $(1, M-1)$.
\item The energy redistribution during a collision can be approximated
  by a uniform random redistribution. 
\end{itemize} 
None of these approximation is precise. But further studies in
\cite{li2018billiards} confirms that these approximations preserve both the
asymptotic dynamics and the scaling of the thermal conductivity. 

One thing not studied in \cite{li2018billiards} is the asymptotic dependence of
collision rate on $M$. Heuristically, when $M$ is large, the mean
energy carried by each particle is only $O(M^{-1})$. In order to model
the heat conduction, we need to rescale the time to make $O(M)$
collisions per unit time. In this paper, we consider the problem at
two different time scales. Let $\phi(M)$ be the number of collisions
per unit time depending on $M$. The time rescaling $t \rightarrow
t/\phi(M)$ gives the slow scale problem, at which only $O(1)$ collision
through the barrier occurs per unit time. The time rescaling $t
\rightarrow M \phi(M)^{-1}t$ gives the fast scale problem, at which
the collision rate is $O(M)$. Our
fundamental goal is to study the limit laws of the fast scale
problem. But the slow-scale problem makes many calculations and
explanations easier. Then the time distribution between two consecutive
collisions be an exponential distribution with rate $f(E_{1}, E_{2})$ (resp. $Mf(E_{1}, E_{2})$) for the
slow (resp. fast) scale problem, where $f$ is a rate function, and
$E_{1}, E_{2}$ are the total energy stored in corresponding
cells. Note that in this paper we consider a generic rate
function $f(E_{1}, E_{2})$ that satisfies a few mild assumptions. If one take the rare collision limit first, then rescale
the time back, as did in \cite{gaspard2008heat, gaspard2008heat2}, the resultant rate function may
be different. 

The explicit formula of $\phi(M)$ is not straightforward. We
demonstrate a billiard system with two cells as an example. The total
area of particles in each cell equals to $\pi$. The geometric
configuration is shown in Figure \ref{fig2} top. Then we record the
number of collisions at the barrier and the energy of particles
that collide at the barrier for $M = 1, 2 \cdots, 50$. In Figure
\ref{fig2} bottom left, we compute the frequency of collisions through
the barrier, which is the time rescaling function $\phi(M)$ needed for the
slow-scale system. The bottom right panel of Figure \ref{fig2} shows
$M$ times the mean energy of particles that collide through the barrier. Although some
bias occurs when there are only a few particles, we can see that the scaling
of mean energy of a particle that participates a collision
is stabilized at $O(M^{-1})$. 
\begin{figure}[htbp]
\centerline{\includegraphics[width = \linewidth]{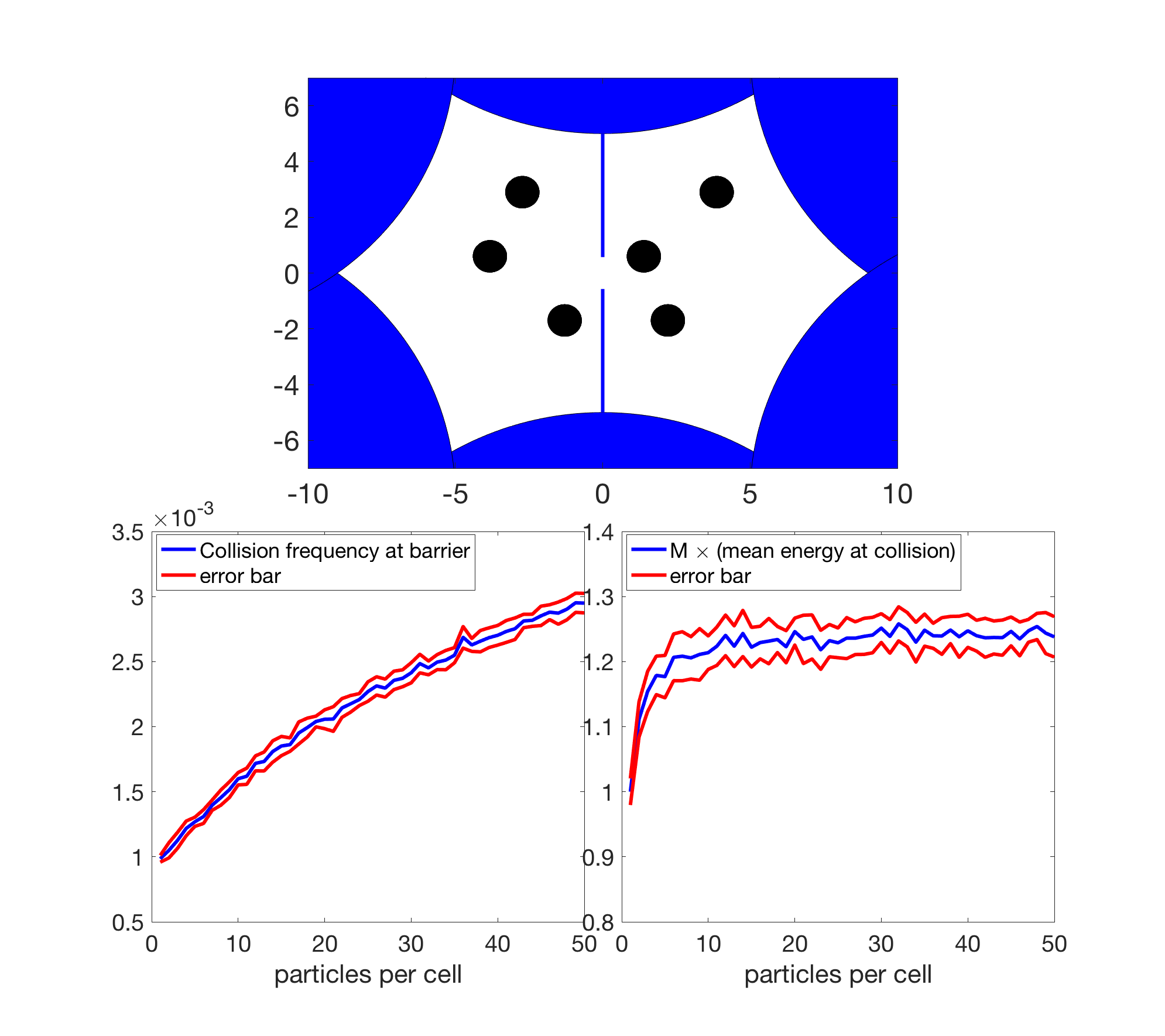}}
\caption{Top: An example of two cells with a barrier and no heat
  bath. Each cell has $3$ particles inside of it. Bottom left:
  Frequency of collision through the barrier. Red plot represents the
  error bar with one standard deviation. Bottom right: $M$ times
  the mean energy of particles that participate collisions through
  the barrier. Red plot represents the
  error bar with one standard deviation. }
\label{fig2}
\end{figure}

\subsection{Stochastic energy exchange model}
After rescaling the time by $\phi(M)$, we obtain the following
slow-scale stochastic energy
exchange model. Consider a chain of $N$ cells $1, 2, \cdots, N$ connected to two heat
baths with temperatures $T_{L}$ and $T_{R}$ respectively. Each cell contains a certain amount of
energy, denoted by $E_{1}, \cdots, E_{N}$. Let $M$ be a model
parameter that corresponds to the number of moving particles of the
original kinetic model. The rule of energy exchange is as follows. 

Assume there exists a rate function $f(E_{1}, E_{2})$ that satisfies
the following three assumptions:
\begin{itemize}
  \item[(a)] $f$ is $C^{1}$ continuous and strictly positive for
    $E_{1}, E_{2} > 0$.
\item[(b)]  $f$ is non-decreasing with respect to both $E_{1}$ and
  $E_{2}$.
\item[(c)] There exists a constant $K < \infty$ such that $f(E_{1},
  E_{2}) < K$ uniformly.
\end{itemize}
The first two assumptions are heuristic. The energy exchange rate must
be positive and continuous. Higher cell energy must have higher energy
exchange rate. The third assumption is technical. In
\cite{li2018billiards} we have showed that the stochastic energy
exchange model admits an invariant probability measure. Hence the
probability of having very high cell energy is low. Assuming all
clock rates being below a large constant will not significantly change
the dynamics. On the other hand, ``overheating'' Poisson clocks will
cause many technical troubles. The aim of this paper is to prove the limit
law. Without assumption (c), this paper will be significantly
distracted the effort of controlling ``overheating'' Poisson clocks.

An exponential clock with rate $f(E_{i}, E_{i+1})$ is associated to a
pair of cells $i$ and $i+1$. All exponential clocks are mutually
independent. When the clock rings, cells $i$ and $i+1$
exchange their energy such that 
$$
  E_{i}' = E_{i} - B_{1}E_{i} + p(B_{1}E_{i} + B_{2}E_{i+1})
$$
$$
  E_{i+1}' = E_{i+1} - B_{2}E_{i+1} + (1 - p)(B_{1}E_{i} +
  B_{2}E_{i+1}) \,,
$$
where $E_{i}'$ and $E_{i+1}'$ denote post-exchange energy, $p$ is a
uniform random variable on $(0, 1)$, and random variables $B_{1}, B_{2}$ are two
independent Beta random variables with parameters $1, M-1$. In other
words, each cell contribute a small proportion of its energy for the
redistribution. The rule of energy exchange with boundary is
similar. Two additional exponential clocks are associated to two ends
of the chain with rates $f(T_{L}, E_{1})$ and $f(E_{N}, T_{R})$. When
the clock rings, rules of update are
$$
  E_{1}' = E_{1} - B_{1}E_{1} + p(B_{1}E_{1} + B_{2}Exp(T_{L}))
$$
and 
$$
  E_{N}' = E_{N} - B_{1}E_{N} + p(B_{1}E_{N} + B_{2}Exp(T_{R})) \,,
$$
where $Exp(\lambda)$ means an exponential random variable with mean
$\lambda$, $B_{1}, B_{2}$, and $p$ are same as before. For the sake of
consistency, sometimes we use notations $E_{0} = T_{L}$ and $E_{N+1} =
T_{R}$. In addition, we denote the exponential clock between $E_{i}$
and $E_{i+1}$ by ``clock $i$''.

It is easy to see that this stochastic energy exchange model gives a
Markov chain $\Phi^{M}_{t}$ on $\mathbb{R}^{N}$, where $M$ is a model
parameter. For the sake of being consistent, we still denote the
$i$-th entry of $\Phi^{M}_{t}$ by $E_{i}(t)$ if it leads to no
confusion. 

\subsection{An alternative description}
We provide the following alternative description of $\Phi^{M}_{t}$ that
fits the calculation in this paper better. Obviously $\Phi^{M}_{t}$ is a Markov jump
process in $\mathbb{R}^{N}$. Assume $0 = t_{0} < t_{1} < \cdots$ are
jump time of $\Phi^{M}_{t}$. Then $t_{i+1} - t_{i}$ has exponential
distribution with rate 
$$
  R(\Phi^{M}_{t_{i}}) = \sum_{i = 1}^{N-1} f(E_{i}(t_{i}), E_{i+1}(t_{i})) + f(T_{L},
  E_{1}(t_{i})) + f(E_{N}(t_{i}), T_{R})  = \sum_{i = 0}^{N} f(E_{i}(t_{i}), E_{i+1}(t_{i}))\,. 
$$
We ``prescribe'' the randomness such that
$$
  t_{i+1} - t_{i} = - R( \Phi^{M}_{t_{i}})^{-1}\log(1 - q_{i}) \,,
$$
where $\{q_{i}\}$ is an i.i.d. sequence of uniform random variables on
$(0, 1)$. 

In addition, we have
$$
  \Phi^{M}_{t_{i}+} = \Phi^{M}_{t_{i}} + X^{M}_{i} \,,
$$
where $X^{M}_{i}$ is a random variable that depends on
$\Phi^{M}_{t_{i}}$ in a way that 
$$
  X^{M}_{i} = \zeta(\Phi^{M}_{t_{i}}, \omega^{M}_{i}) \,,
$$
where $\omega^{M}_{i} = (p^{(i)}_{1}, p^{(i)}_{2}, p^{(i)}_{3},
B^{(i)}_{1}, B^{(i)}_{2})$ is a sequence of i.i.d. random
vectors. $\omega^{M}_{i}$ has five entries, among which $p^{(i)}_{1}, p^{(i)}_{2}, p^{(i)}_{3}$ are three i.i.d. uniform random variables on $(0, 1)$, and
$B^{(i)}_{1}, B^{(i)}_{2}$ are two i.i.d. Beta random variables with
parameters $(1, M-1)$. 

The definition of the function $\zeta$ is as follows. $p^{(i)}_{1}$ is used to select
exponential clocks. For $0 \leq k \leq N$, clock $k$ is chosen if 
$$
  \frac{1}{R( \Phi^{M}_{t_{i}})}\sum_{j = 0}^{k}f(E_{j}(t_{i}), E_{j+1}(t_{i})) \leq p^{(i)}_{1} < \frac{1}{R( \Phi^{M}_{t_{i}})}\sum_{j = 0}^{k+1}
  f(E_{j}(t_{i}), E_{j+1}(t_{i})) \,.
$$
$p^{(i)}_{2}$ is used to choose the heat bath energy if and only if clock
$0$ or clock $N$ is chosen. $p^{(i)}_{3}$ determines the energy
redistribution. And $B_{1}^{(i)}$ and $B_{2}^{(i)}$ are two
Beta random variables involved in the energy exchange event. More
precisely, we have
\begin{equation}
  \label{exchange}
\zeta( \Phi^{M}_{t_{i}}, \omega^{M}_{i}) = \left \{ 
\begin{array}[tb]{lll}
 -J_{k}\mathbf{e}_{k}  + J_{k}\mathbf{e}_{k+1} & \mbox{ if }& \mbox{
                                                              clock }
                                                              k \mbox{
                                                              is
                                                              chosen}\,, 1 \leq k \leq N-1
  \\

J_{0} \mathbf{e}_{1}    &\mbox{ if }& \mbox{ clock } 0 \mbox{ is chosen,}
  \\
-J_{N} \mathbf{e}_{N}&\mbox{ if }&\mbox{ clock } N \mbox{ is chosen,}
\end{array}
\right .
\end{equation}
where $\mathbf{e}_{k}$ is the $k$-th vector of the standard basis, and
$$
  J_{k} = \left \{ 
\begin{array}[tb]{lll}
(1-p_{3}^{(i)})B^{(i)}_{1}E_{k}(t_{i}) - p^{(i)}_{3}B_{2}^{(i)}E_{k+1}(t_{i}) &
\mbox{ if } & 1 \leq k \leq N-1\\
-(1-p_{3}^{(i)})B^{(i)}_{1}T_{L}\log(1 - p^{(i)}_{2}) -
  p^{(i)}_{3}B_{2}^{(i)}E_{1}(t_{i}) &\mbox{ if }&k = 1 \\
(1-p_{3}^{(i)})B^{(i)}_{1}E_{N}(t_{i}) +
  p^{(i)}_{3}B_{2}^{(i)}T_{R}\log(1 - p^{(i)}_{2}) & \mbox{ if }& k =
                                                                  N 
\end{array}
\right.
$$
is the net flux from cell $k$ to cell $k+1$. 

This alternative description is less straightforward. But it
``prescribes'' all randomness in this stochastic energy process. We
will need this soon in our calculations.

\section{Main Result}
$\Phi^{M}(t)$ is a Markov jump process at the slow scale, at which the
energy flux is $O(M^{-1})$ for increasing $M$. The main result of
this paper is about the limit law of the fast scale problem with
$O(1)$ energy flux. To make the limit law work, we consider the
following process $\Theta^{M}(t)$ at the fast scale with
$$
  \Theta^{M}(t_{i}/M) =  \Phi^{M}(t_{i}) \quad \mbox{ for } t_{i} \leq
  t < t_{i+1}
$$
where $0 = t_{0} < t_{1}< \cdots< t_{n} < \cdots$ are energy exchange
times for $\Phi^{M}(t)$. Sample paths of $\Theta^{M}(t)$ are right
continuous with left limits. This makes $\Theta^{M}(t) \in D([0, T])$, the
Skorokhod space on $[0, T]$. 

The first result is about the law of large number of $D[0, T]$. 

{\bf Theorem 1}
{\it
For any finite $T > 0$, 
$$
  \lim_{M \rightarrow \infty}\Theta^{M}(t) = \bar{\Theta}(t)
$$
almost surely for any $t \in [0, T]$, where $\bar{\Theta}(t)$ solves
the ordinary differential equation
\begin{equation}
  \label{thm1ode}
\frac{\mathrm{d}}{\mathrm{d}t} \bar{\Theta}(t) = F( \bar{\Theta}(t)) \,,
\end{equation}
where
$$
F_{i}( \bar{\Theta}(t)) =
\frac{1}{2}f(\bar{\Theta}_{i-1}(t),
\bar{\Theta}_{i}(t))(\bar{\Theta}_{i-1}(t) - \bar{\Theta}_{i}(t)) +
\frac{1}{2}f(\bar{\Theta}_{i}(t),
\bar{\Theta}_{i+1}(t))(\bar{\Theta}_{i+1}(t) - \bar{\Theta}_{i}(t))   
$$ 
for $i = 1,\cdots, N$. Here we use the boundary condition
$\bar{\Theta}_{0} = T_{L}$ and $\bar{\Theta}_{N+1} = T_{R}$. 
}
 
\medskip

The Fourier's law with respect to $\bar{\Theta}(t)$ is
straightforward. We have the following proposition.

\medskip

{\bf Proposition 2}
{\it
The flow determined by equation \eqref{thm1ode} admits a stable equilibrium $E^{*}$. Let
$\kappa$ be the expected energy flux starting from $E^{*}$ (defined in
equation \eqref{kappa}), we have $\kappa = \frac{1}{2}f(T_{L}, T_{L})
+ O(T_{R} - T_{L})$ if $|T_{R} - T_{L}| \ll 1$. 
}

\medskip

Let 
$$
  \Gamma^{M}(t) = \sqrt{M}(\Theta^{M}(t) - \bar{\Theta}(t)) \,.
$$
The following theorem gives the central limit theorem for
$\Theta^{M}(t)$. 

\medskip

{\bf Theorem 3}
{\it
For any finite $T > 0$, 
$$
  \lim_{M\rightarrow \infty} \Gamma^{M}(t) = \bar{\Gamma}(t)
$$
almost surely for any $t \in [0, T]$, where $\bar{\Gamma}(t)$ solves
the time-dependent stochastic differential equation
\begin{eqnarray}
\label{thm2sde}
 \mathrm{d}\bar{\Gamma}(t)& = & DF( \bar{\Theta}(t)) \bar{\Gamma}(t)
                                \mathrm{d}t + H( \bar{\Theta}(t))
                                \mathrm{d} \mathbf{W}_{t} \\\nonumber
\bar{\Gamma}(0) &=& 0 \,,
\end{eqnarray}
where $DF$ is the Jacobian matrix of $F$ given in equation
\eqref{thm1ode}, 
$$
  H( \mathbf{E}) = 
\begin{bmatrix}
V_{0}(T_{L}, E_{1}) & V(E_{1}, E_{2}) &0&
0 &\cdots &\cdots \\
0&V(E_{1}, E_{2})& V(E_{2}, E_{3})
&0&\cdots &\cdots\\
\vdots &\vdots &\vdots&\vdots&\vdots&\vdots\\
0&\cdots&\cdots&0&V(E_{N-1}, E_{N})&V_{N}(E_{N}, T_{R})\\
\end{bmatrix} 
$$
is an $N \times (N+1)$-matrix valued function on $\mathbb{R}^{N}$, 
$$
  V(x_{1},x_{2}) = \sqrt{ f(x_{1},x_{2})\left (\frac{2}{3}x_{1}^{2} - \frac{1}{3}x_{1}x_{2} +
  \frac{2}{3}x_{2}^{2} \right )}\,,
$$
$$
  V_{0}(x_{1},x_{2}) = \sqrt{ f(x_{1},x_{2})\left (\frac{4}{3}x_{1}^{2} - \frac{1}{3}x_{1}x_{2} +
  \frac{2}{3}x_{2}^{2} \right )}\,,
$$
and
$$
  V_{N}(x_{1},x_{2}) = \sqrt{ f(x_{1},x_{2})\left (\frac{2}{3}x_{1}^{2} - \frac{1}{3}x_{1}x_{2} +
  \frac{4}{3}x_{2}^{2} \right )}\,,
$$
and $\mathrm{d} \mathbf{W}_{t}$ is the white noise in
$\mathbb{R}^{N+1}$. 
}

\medskip

Theorem 1 and Theorem 2 imply that $\Theta^{M}(t)$ is approximated by
a stochastic differential equation.  

\medskip

{\bf Proposition 4}
{\it
Let $Z_{t}$ be a stochastic differential equation satisfying 
\begin{equation}
\label{pro4sde}
  \mathrm{d}Z_{t} = F(Z_{t}) \mathrm{d}t + M^{-1/2}
  H(Z_{t})\mathrm{d}W_{t} \,.
\end{equation}
Then we have
\begin{equation}
   \mathbb{E}[ \|\Theta^{M}(t) - Z_{t}\| ]< C M^{-1}, 
\end{equation}
where $C < \infty$ is a constant that is independent of $t \in [0, T]$
and $M$.
}

\medskip

Equation \eqref{pro4sde} is called the {\it mesoscopic limit
  equation}.

\begin{proof}[Proof of main theorems]
Theorem 1 follows from Theorem \ref{LLN}. Lemma 5.9, 5.10, and 5.11 together
imply Proposition 2. Theorem 3 is proved in Section 6 in Theorem
\ref{CLT}. Proposition 4 is Corollary \ref{cor610} in Section 6.
\end{proof}

\section{Strategy of proof}
The proof of limiting laws regarding $\Theta^{M}(t)$ and
$\Gamma^{M}(t)$ can be divided into the following three steps. 

{\bf 1. Tightness.} The first step is to show that the collection of
probability measures on $D([0, T], \mathbb{R}^{N})$ generated by
$\Theta^{M}(t)$ (and $\Gamma^{M}(t)$) is tight. This means
$\Theta^{M}(t)$ (and $\Gamma^{M}(t)$) has accumulation points as $M
\rightarrow \infty$. Throughout this paper, we assume that $D([0, T],
\mathbb{R}^{N})$ is equipped with the canonical Skorokhod metric and the Borel
sigma field from it. The tightness in Skorokhod space follows from the
following well known result.

\begin{thm}[Theorem 3.8.6 of \cite{ethier2009markov}]
\label{kurtztight}
A sequence of $\mathbb{R}^{n}$-valued stochastic process $X_{n}(t), t
\in [0, \infty]$ is tight in $D([0, \infty], \mathbb{R}^{n})$ if and only if 
\begin{itemize}
  \item[(a)] For any $T > 0$, there exist $\beta > 0$ and random variable
    $\gamma_{n}(\delta, T)$ such that for $0 \leq t \leq T$, $0 \leq u
    \leq \delta$, and $0 \leq v \leq \min\{t, \delta \}$,
\begin{align*}
  \mathbb{E}[\min \left \{ q^{\beta}(X_{n}(t+u), X_{n}(t)),
  q^{\beta}(X_{n}(t), X_{n}(t-v))\right \} \,|\, \mathcal{F}^{n}_{t} ]&\leq
  \mathbb{E}[\gamma_{n}(\delta, T) \,|\, \mathcal{F}^{n}_{t}] \\
\lim_{\delta \rightarrow 0} \limsup_{n \rightarrow \infty}
  \mathbb{E}[\gamma_{n}(\delta, T)]& = 0
\end{align*}
\item[(b)]
$$
  \lim_{\delta \rightarrow 0} \limsup_{n\rightarrow \infty}
  \mathbb{E}[ q^{\beta}(X_{n}(\delta), X_{n}(0))] = 0 \,,
$$
where $q(x,y) = \min \{1, |x-y|\}$.
\end{itemize}
\end{thm}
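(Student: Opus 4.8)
The plan is to deduce this from Prohorov's theorem together with the intrinsic (Arzel\`a--Ascoli type) characterization of relatively compact subsets of the Skorokhod space, the only substantive point being the control of path oscillations, since the metric $q(x,y)=\min\{1,|x-y|\}$ is bounded. Because $D([0,\infty],\mathbb{R}^{n})$ is Polish, a family of laws on it is tight if and only if it is relatively compact for weak convergence, and a subset $A$ of that space has compact closure if and only if $\{x(t):x\in A\}$ is relatively compact in $\mathbb{R}^{n}$ for each rational $t$ and, for every $T>0$,
$$
  \lim_{\delta\to0}\,\sup_{x\in A}\,w'(x,\delta,T)=0\,,\qquad w'(x,\delta,T)=\inf_{\{t_{i}\}}\,\max_{i}\,\sup_{s,t\in[t_{i-1},t_{i})}q(x(s),x(t))\,,
$$
the infimum taken over partitions $0=t_{0}<\cdots<t_{m}=T$ with $\min_{i}(t_{i}-t_{i-1})>\delta$. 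So I would reduce the assertion to the equivalence between (a)--(b) and the pair of statements: $\{X_{n}(t)\}$ is tight for each fixed $t$, and $\lim_{\delta\to0}\limsup_{n}\mathbb{P}(w'(X_{n},\delta,T)>\eta)=0$ for all $\eta,T>0$.

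The substantive implication is (a),(b)$\Rightarrow$ the modulus bound. The decisive observation is that the quantity estimated in (a),
$$
  G_{n}(t,\delta):=\sup_{0\le u\le\delta,\ 0\le v\le\min\{t,\delta\}}\min\bigl\{q^{\beta}(X_{n}(t+u),X_{n}(t)),\ q^{\beta}(X_{n}(t),X_{n}(t-v))\bigr\}\,,
$$
is precisely what dominates $w'$ along a mesh-$\delta$ grid: a c\`adl\`ag path has a small modulus exactly when, around each grid point, it oscillates little on at least one of the two one-sided windows, since a genuine jump registers on only one side. Using the conditional estimate $\mathbb{E}[G_{n}(t,\delta)\mid\mathcal{F}^{n}_{t}]\le\mathbb{E}[\gamma_{n}(\delta,T)\mid\mathcal{F}^{n}_{t}]$ one runs a chaining/maximal argument along the grid $\{j\delta:j\delta\le T\}$ to pass from this single-point bound to a bound on the expectation of the relevant supremum over $[0,T]$, then applies Markov's inequality and lets $\delta\to0$ invoking $\lim_{\delta\to0}\limsup_{n}\mathbb{E}[\gamma_{n}(\delta,T)]=0$; condition (b), the $t=0$ endpoint case, controls the first block. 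Tightness of the one-dimensional marginals is either part of the full hypotheses or, in the applications here, immediate, since $\Theta^{M}$ and $\Gamma^{M}$ start from a deterministic configuration and have moments controlled by estimates established elsewhere in the paper; so the equivalence above is the real content.

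The converse (tightness $\Rightarrow$ (a),(b)) is routine: relative compactness of the laws yields a uniform-in-$n$ modulus estimate, and one then takes $\gamma_{n}(\delta,T)$ to be a truncation of $w'(X_{n},2\delta,T)$ (equivalently of $\sup_{t\le T}G_{n}(t,\delta)$), for which the two displayed limits hold. I expect the one genuine difficulty to be the bookkeeping in the chaining step --- turning the single-point conditional control of (a) into a supremum bound over $[0,T]$ without accumulating a constant that diverges as the mesh is refined --- and this is exactly where the ``minimum of a forward and a backward increment'' in (a), as opposed to a plain increment bound, is indispensable: a plain increment bound would force every limit point to have continuous paths, which is false for the jump processes $\Theta^{M}$ and $\Gamma^{M}$. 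Since the statement is Theorem~3.8.6 of \cite{ethier2009markov}, the final write-up simply cites it; the sketch above only records why these are the natural quantities to estimate for the processes at hand.
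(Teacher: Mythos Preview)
The paper does not prove this statement at all: Theorem~\ref{kurtztight} is quoted verbatim as Theorem~3.8.6 of \cite{ethier2009markov} and used as a black box in Section~4 and in Lemmas~\ref{lem57} and~\ref{gammatight}. There is no ``paper's own proof'' to compare against. Your sketch is a reasonable outline of the standard argument in Ethier--Kurtz (Prohorov plus the Skorokhod modulus $w'$, with the two-sided minimum in (a) doing the work of allowing jumps), and your remark that the chaining step is where the bookkeeping lives is accurate; but for the purposes of this paper the correct ``proof'' is simply the citation, which is exactly what the author does.
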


{\bf 2. Martingale problem.} The next step is to show that any
accumulation point of $\Theta^{M}(t)$ (and $\Gamma^{M}(t)$) satisfies
a martingale problem. We have the following definition.

\begin{dfn}
Let $\mathcal{L}$ be a linear operator such that the domain of
$\mathcal{L}$ is a subset of the Banach space of all bounded Borel
measurable functions on $\mathbb{R}^{N}$. A triple $((\Omega,
\mathcal{F}, \mathbf{P}), (\mathcal{F}_{t})_{t\geq 0}, (X_{t})_{t\geq
  0})$ with $((\Omega,
\mathcal{F}, \mathbf{P}), (\mathcal{F}_{t})_{t\geq 0})$ a stochastic
basis and $X_{t}$ a $\mathcal{F}_{t}$ adapted stochastic process on
$\mathbb{R}^{N}$ is a solution of the {\it martingale problem} for
$\mathcal{L}$ if for all $f$ in the domain of $\mathcal{L}$,
$$
  f(X_{t}) - f(X_{0}) - \int_{0}^{t} ( \mathcal{L}f)(X_{s})
  \mathrm{d}s , \quad t \geq 0
$$
is a martingale with respect to $\mathcal{F}_{t}$. 
\end{dfn}

A martingale problem is said to be {\it well posed} if there exists a unique
solution $X_{t}$. Martingale problem is a very powerful tool. An obvious solution to the
martingale problem is the stochastic process whose infinitesimal
generator is $\mathcal{L}$.

{\bf 3. Uniqueness of solution to the martingale problem. } It remains to show that the maringale
problem with respect to $\Theta^{M}(t)$ (and $\Gamma^{M}(t)$) has a
unique solution. In general, let $\mathcal{L}$ be the generator of a
stochastic differential equation, then the martingale problem with
respect to $\mathcal{L}$ has a unique solution if and only if the
corresponding stochastic differential equation has a unique weak
solution. We refer \cite{stroock2007multidimensional} for further
reference regarding the uniqueness of solutions to martingale
problems. The following theorem will be used in our proof.

\begin{thm}[Theorem 10.2.2 of \cite{stroock2007multidimensional}]
\label{book1022}
If for each $T > 0$, there exists a constant $C_{T} < \infty$ such
that
$$
  \sup_{0 \leq t \leq T}\|a(t,x)\| \leq C_{T}(1 + |x|^{2}), \quad
  \mbox{for } x \in \mathbb{R}^{d}
$$
and
$$
  \sup_{0 \leq t \leq T} x \cdot b(t,x) \leq C_{T}(1 + |x|^{2}), \quad
  \mbox{for } x \in \mathbb{R}^{d} \,,
$$
then the martingale problem for generator
$$
  L_{t} = \frac{1}{2}\sum_{i,j}^{N}a_{i,j}(t, \cdot)
  \frac{\partial^{2}}{\partial x_{i} \partial x_{j}} + \sum_{i =
    1}^{N} b_{i}(t, \cdot) \frac{\partial}{\partial x_{i}}
$$
is well posed. 
\end{thm}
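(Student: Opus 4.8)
\emph{Proof strategy.} The two growth hypotheses are precisely the conditions that keep the diffusion from exploding in finite time, so the plan is to split the argument into three parts: first establish local well-posedness for bounded coefficients, then use a Lyapunov estimate to rule out explosion, and finally patch the local solutions into a global one and transfer uniqueness. If convenient, one can append time as an extra coordinate $\tilde X_t=(t,X_t)$ at the outset so that $L_t$ becomes time-homogeneous.

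For the local step I would fix $R>0$ and replace $a,b$ by bounded coefficients $a^R,b^R$ agreeing with them on $[0,T]\times\{|x|\le R\}$. For bounded coefficients the martingale problem is well posed by the classical theory behind \cite{stroock2007multidimensional}: existence comes from tightness of mollified (or Euler) approximations together with uniform second-moment bounds and Prokhorov, the limit being identified as a solution using the continuity of $a$ (plus an absolute-continuity argument for the merely measurable drift in the non-degenerate case), while uniqueness comes from the $L^p$-resolvent estimates of \cite{stroock2007multidimensional}, or simply from pathwise uniqueness of SDEs when the coefficients are regular. Writing $\tau_R$ for the first exit time from $\{|x|\le R\}$, the solutions of these truncated problems are consistent on $[0,\tau_R]$, and therefore patch together into a solution of the martingale problem for $L_t$ defined up to the explosion time $\zeta=\lim_{R\to\infty}\tau_R$.

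To rule out explosion I would apply the defining martingale identity to the Lyapunov function $\phi(x)=1+|x|^2$, localized at $\tau_R\wedge t$ (where the state stays bounded, so every term is integrable). The two hypotheses give $L_t\phi=\mathrm{tr}\,a(t,x)+2\,x\!\cdot\!b(t,x)\le C(1+|x|^2)=C\phi(x)$ on $[0,T]$, whence
\[
  \mathbb E\big[\phi(X_{t\wedge\tau_R})\big]\ \le\ \phi(X_0)+C\int_0^t\mathbb E\big[\phi(X_{s\wedge\tau_R})\big]\,\mathrm{d}s ,
\]
and Gronwall's inequality yields $\mathbb E[\phi(X_{t\wedge\tau_R})]\le\phi(X_0)e^{Ct}$ uniformly in $R$. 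Since the paths are continuous, $\phi(X_{t\wedge\tau_R})\ge 1+R^2$ on $\{\tau_R\le t\}$, so $\mathbb P(\tau_R\le t)\le\phi(X_0)e^{Ct}/(1+R^2)\to 0$ as $R\to\infty$; hence $\zeta=\infty$ almost surely and the solution is global. Uniqueness then follows by the same patching: two solutions of the martingale problem for $L_t$, stopped at $\tau_R$, both solve the uniquely solvable truncated problem, and letting $R\to\infty$ with $\tau_R\uparrow\infty$ identifies them on all of path space.

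The main obstacle is the local well-posedness for bounded coefficients: in the non-degenerate case this rests on the Calder\'{o}n--Zygmund / $L^p$-resolvent machinery that is the technical core of \cite{stroock2007multidimensional}, and if $a$ is allowed to degenerate one needs extra regularity (Lipschitz coefficients, giving pathwise uniqueness)---harmless for the present application, since in \eqref{thm2sde} the drift is affine in the state and the diffusion matrix $H(\bar\Theta(t))$ depends only on the deterministic path $\bar\Theta(t)$, so the coefficients are globally Lipschitz and existence and uniqueness are immediate. By comparison, the non-explosion estimate is routine once one observes that quadratic growth of $a$ together with one-sided quadratic growth of $x\cdot b$ makes $1+|x|^2$ an adequate Lyapunov function.
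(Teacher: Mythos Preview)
The paper does not prove this statement: Theorem~\ref{book1022} is quoted verbatim from \cite{stroock2007multidimensional} and used as a black box in Lemma~\ref{lem65}, with no proof given in the paper. So there is no ``paper's own proof'' to compare against.

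Your sketch is a reasonable outline of the standard Stroock--Varadhan argument (localize to bounded coefficients, use the Lyapunov function $1+|x|^2$ together with the one-sided growth bound on $x\cdot b$ and the quadratic bound on $a$ to rule out explosion via Gronwall, then patch and transfer uniqueness). Your closing remark is also apt: for the only application in the paper, the coefficients of \eqref{CLT} are globally Lipschitz in the state variable (the drift is linear in $\bar\Gamma$ and the diffusion depends only on the deterministic $\bar\Theta(t)$), so one does not actually need the full strength of the Stroock--Varadhan machinery here.
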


\section{Averaging principle and Fourier's law}
\subsection{Law of Large Numbers}
We denote the rescaled expectation of $X^{M}_{i}$ by
$$
  \mathbb{E}[X^{M}_{i}] = \frac{1}{M}\bar{\zeta}(\Phi^{M}_{t_{i}})  \,.
$$ 
It is easy to see that
$$
  \bar{\zeta} = \frac{1}{R(\Phi^{M}_{t_{i}})}
\begin{bmatrix}
\frac{1}{2}f(T_{L}, E_{1}(t_{i}))(T_{L} - E_{1}(t_{i})) +
\frac{1}{2}f(E_{1}(t_{i}), E_{2}(t_{i}))(E_{2}(t_{i}) - E_{1}(t_{i})) \\
\vdots\\
\frac{1}{2}f(E_{k}(t_{i}), E_{k+1}(t_{i}))(E_{k+1}(t_{i}) -
E_{k}(t_{i})) + \frac{1}{2}f(E_{k-1}(t_{i}),
E_{k}(t_{i}))(E_{k-1}(t_{i}) - E_{k}(t_{i})) \\
\vdots\\
 \frac{1}{2}f(E_{N}(t_{i}), E_{1}(t_{i}))(T_{R} - E_{N}(t_{i})) +
 \frac{1}{2}f(E_{N-1}(t_{i}), E_{N}(t_{i}))(E_{N-1}(t_{i}) - E_{N}(t_{i}))
\end{bmatrix}
\,.
$$

The aim of this section is to prove the law of large numbers for
$\Theta^{M}(t)$. 

\begin{thm}
\label{LLN}
For any finite $T > 0$,
$$
  \lim_{M \rightarrow \infty} \Theta^{M}(t) = \bar{\Theta}(t)
$$
almost surely, where $\bar{\Theta}(t)$ solves the ordinary
differential equation
\begin{equation}
\label{averaging}
  \frac{\mathrm{d}}{\mathrm{d}t} \bar{\Theta}(t) = R(\bar{\Theta})
  \bar{\zeta}(\bar{\Theta}) \quad , \quad \bar{\Theta}(0) = \Theta_{0}\,.
\end{equation}
\end{thm}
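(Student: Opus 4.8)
The plan is to run the three-step program of Section 4, exploiting that the limit is deterministic so that the ``martingale problem'' step collapses to uniqueness for the ODE \eqref{averaging}. First I would record the martingale decomposition coming from the alternative description of Section 2.3: if $n(t)$ denotes the number of jumps of $\Theta^{M}$ in $[0,t]$, then $\Theta^{M}(t) = \Theta_{0} + \sum_{i=1}^{n(t)} X^{M}_{i}$ with $X^{M}_{i} = \zeta(\Phi^{M}_{t_{i}},\omega^{M}_{i})$ and the $\omega^{M}_{i}$ i.i.d.; since the fast clock has total rate $M R(\cdot)$ and $R(\cdot) = \sum_{k=0}^{N} f(E_{k},E_{k+1}) \le (N+1)K$ by assumption (c), $n(t)$ is stochastically dominated by a Poisson variable with mean $M(N+1)Kt$, so $\Theta^{M}$ does not explode and makes $O(M)$ jumps per unit time. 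The generator of $\Theta^{M}$ on $g \in C^{3}_{b}(\mathbb{R}^{N})$ is $\mathcal{A}^{M} g(\mathbf{E}) = M R(\mathbf{E})\, \mathbb{E}_{\omega}[g(\mathbf{E}+\zeta(\mathbf{E},\omega)) - g(\mathbf{E})]$; Taylor expanding and using the moment identities $\mathbb{E}_{\omega}[\zeta(\mathbf{E},\omega)] = M^{-1}\bar\zeta(\mathbf{E})$ and $\mathbb{E}_{\omega}[\|\zeta(\mathbf{E},\omega)\|^{k}] = O_{\mathbf{E}}(M^{-k})$ — which follow from $B^{(i)}_{1},B^{(i)}_{2}\sim\mathrm{Beta}(1,M-1)$ having $k$-th moment $k!/(M(M+1)\cdots(M+k-1))$ — gives
$$
  \mathcal{A}^{M} g(\mathbf{E}) = R(\mathbf{E})\bar\zeta(\mathbf{E})\cdot\nabla g(\mathbf{E}) + O(M^{-1}) = \mathcal{L}g(\mathbf{E}) + O(M^{-1}),
$$
where $\mathcal{L}g := F\cdot\nabla g$ is the generator of the flow \eqref{averaging} (note $R(\bar\Theta)\bar\zeta(\bar\Theta) = F(\bar\Theta)$). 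Hence $g(\Theta^{M}(t)) - g(\Theta_{0}) - \int_{0}^{t}\mathcal{A}^{M}g(\Theta^{M}(s))\,\mathrm{d}s$ is an $\mathcal{F}_{t}$-martingale.

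Next, tightness. From $\|X^{M}_{i}\| \le (B^{(i)}_{1}+B^{(i)}_{2})(\|\Theta^{M}(t_{i}/M)\| + T_{L} + T_{R})$ and $\mathbb{E}[B^{(i)}_{1}+B^{(i)}_{2}] = 2/M$ together with the Poisson bound on $n(t)$, a discrete Grönwall estimate yields the a priori bound $\sup_{M}\sup_{0\le t\le T}\mathbb{E}[\|\Theta^{M}(t)\|] < \infty$ (and similarly for higher moments); the ``prescribed randomness'' formulation is what makes this self-referential estimate tractable, since $\omega^{M}_{i}$ is fresh given $\mathcal{F}_{t_{i}/M}$. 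The same ingredients give an increment bound of the form $\mathbb{E}[q(\Theta^{M}(t+u),\Theta^{M}(t)) \mid \mathcal{F}^{M}_{t}] \le C(u + M^{-1})$, so Theorem \ref{kurtztight} (with $\beta = 1$) shows $\{\Theta^{M}\}$ is tight in $D([0,T],\mathbb{R}^{N})$.

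Finally, identification and uniqueness. Along any subsequence with $\Theta^{M}\Rightarrow\bar\Theta$, passing to the limit in the martingale identity — the remainder $\int_{0}^{t}O(M^{-1})\mathrm{d}s$ vanishes, and one localizes so that $g,\mathcal{L}g$ are bounded — shows that $g(\bar\Theta(t)) - g(\bar\Theta(0)) - \int_{0}^{t}\mathcal{L}g(\bar\Theta(s))\mathrm{d}s$ is a martingale for every $g\in C^{\infty}_{c}(\mathbb{R}^{N})$. Since $\mathcal{L}$ is first order, this martingale has zero quadratic variation, hence is a.s. constant; applying this to (localized) coordinate functions forces $\bar\Theta$ to be absolutely continuous with $\dot{\bar\Theta} = F(\bar\Theta)$, i.e. $\bar\Theta$ solves \eqref{averaging}. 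Because $f\in C^{1}$ makes $F$ locally Lipschitz and the a priori bound above precludes blow-up on $[0,T]$, \eqref{averaging} has a unique solution, so every subsequential limit coincides with it and $\Theta^{M}\Rightarrow\bar\Theta$ in $D([0,T],\mathbb{R}^{N})$; as the limit is deterministic this is convergence in probability, and the almost-sure statement at a fixed $t$ then follows from a Borel--Cantelli argument based on the $L^{2}$ fluctuation estimate $\mathbb{E}[\|\Theta^{M}(t)-\bar\Theta(t)\|^{2}] = O(M^{-1})$ (from Doob's inequality for the martingale part, whose predictable quadratic variation is $\sum_{i\le n(t)}\mathbb{E}[\|X^{M}_{i} - \mathbb{E}[X^{M}_{i}\mid\mathcal{F}]\|^{2}] = O(M)\cdot O(M^{-2})$, plus Grönwall), or, when available, from a Kurtz-type exponential bound $\mathbb{P}(\sup_{t\le T}\|\Theta^{M}-\bar\Theta\| > \varepsilon) \le C_{1}e^{-C_{2}M}$.

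The main obstacle is making every ``$O(M^{-1})$'' above genuinely uniform: the generator error $\mathcal{A}^{M}g - \mathcal{L}g$, the tightness modulus, and the Grönwall constants all involve $R(\mathbf{E})$ and moments of $\|\Theta^{M}\|$, so the real work is the uniform-in-$M$ (and ideally exponential) moment control of $\sup_{t\le T}\|\Theta^{M}(t)\|$. Assumption (c) keeps the jump \emph{count} bounded, but the energy lives in the jump \emph{sizes}, so the self-referential energy estimate — handled through the prescribed-randomness decoupling of $X^{M}_{i}$ into a function of $\Phi^{M}_{t_{i}}$ and an independent $\omega^{M}_{i}$ — is the technical heart; the remainder is the standard tightness / martingale-problem machinery.
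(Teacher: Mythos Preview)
Your proposal follows the same three-step program as the paper (tightness via Theorem~\ref{kurtztight}, generator convergence $\mathcal{A}^{M}g \to \mathcal{L}g$, uniqueness of the martingale problem), and the moment/tightness estimates you sketch are precisely what the paper carries out in Lemmas~\ref{beta}--\ref{lem57} and~\ref{martingale}. The one substantive difference is in the \emph{uniqueness} step: you argue that, since $\mathcal{L}$ is first order, the limiting martingale has zero quadratic variation and hence the limit satisfies the ODE pathwise, then invoke $C^{1}$-regularity of $F$ for ODE uniqueness. The paper instead applies the martingale identity directly to coordinate and quadratic functions to compute $\frac{\mathrm{d}}{\mathrm{d}t}\mathbb{E}_{\Theta_{0}}[\|\Theta(t)-\bar\Theta(t)\|^{2}]$, bounds it by $C\,\mathbb{E}_{\Theta_{0}}[\|\Theta(t)-\bar\Theta(t)\|^{2}]$ using only boundedness of $f$, and concludes $\Theta(t)=\bar\Theta(t)$ a.s.\ by Gr\"onwall. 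Your route is more structural; the paper's is more hands-on and avoids having to first verify path continuity of the limit.

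One gap to watch: your Borel--Cantelli upgrade to a.s.\ convergence needs a \emph{summable} tail bound, but $\mathbb{E}[\|\Theta^{M}(t)-\bar\Theta(t)\|^{2}] = O(M^{-1})$ gives via Chebyshev only $\mathbb{P}(\|\cdot\|>\varepsilon)=O(M^{-1})$, and $\sum_{M} M^{-1}$ diverges. You would need a higher-moment estimate (e.g.\ $O(M^{-2})$ for the fourth moment, which the same martingale-plus-Gr\"onwall machinery should deliver) or the Kurtz exponential bound you allude to. The paper, incidentally, does not explicitly close this gap either: its proof yields uniqueness of subsequential limits in $D([0,T],\mathbb{R}^{N})$, hence convergence in probability to the deterministic $\bar\Theta$, and stops there.
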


To prove Theorem \ref{LLN}, we first need to show that the limit of
$\Theta^{M}(t)$ solves the martingale problem. The first observation
is that two Beta random variables $B_{1}$ and $B_{2}$ are $O(M^{-1})$
small.

\begin{lem}
\label{beta}
Let $B$ be a Beta distribution with parameters $(1, M-1)$. Then for
any $0<\epsilon < 1/2$, we have
$$
  \mathbb{P}[B \geq M^{\epsilon - 1}] \leq 2e^{-M^{\epsilon}} 
$$
when $M$ is sufficiently large. 
\end{lem}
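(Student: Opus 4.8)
The plan is to prove the tail bound for a $\mathrm{Beta}(1, M-1)$ random variable by using the explicit density, which is particularly simple for this parameter choice. Recall that if $B \sim \mathrm{Beta}(1, M-1)$, then its density on $(0,1)$ is $g(x) = (M-1)(1-x)^{M-2}$, so the complementary distribution function is $\mathbb{P}[B \geq t] = (1-t)^{M-1}$ for $t \in [0,1]$. Setting $t = M^{\epsilon-1}$ gives $\mathbb{P}[B \geq M^{\epsilon-1}] = (1 - M^{\epsilon-1})^{M-1}$, and the whole task reduces to estimating this quantity.

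For the estimate I would take logarithms and use the elementary inequality $\log(1-u) \leq -u$ valid for $u \in [0,1)$, which yields
\[
  (M-1)\log(1 - M^{\epsilon-1}) \leq -(M-1) M^{\epsilon-1} = -M^{\epsilon} + M^{\epsilon-1}.
\]
Hence $\mathbb{P}[B \geq M^{\epsilon-1}] \leq e^{-M^{\epsilon}} e^{M^{\epsilon-1}}$. Since $\epsilon < 1/2 < 1$, the exponent $\epsilon - 1 < 0$, so $M^{\epsilon-1} \to 0$; in particular $e^{M^{\epsilon-1}} \leq 2$ once $M^{\epsilon-1} \leq \log 2$, i.e.\ for all sufficiently large $M$. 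This gives exactly $\mathbb{P}[B \geq M^{\epsilon-1}] \leq 2 e^{-M^{\epsilon}}$ for $M$ large, as claimed. (One also needs $M^{\epsilon-1} < 1$ so that the density formula and the log inequality apply, but this too holds for large $M$.)

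There is essentially no obstacle here: the argument is a one-line computation once the closed form $\mathbb{P}[B \geq t] = (1-t)^{M-1}$ is recognized. The only point requiring a modicum of care is the bookkeeping of ``$M$ sufficiently large'': we need $M$ large enough that both $M^{\epsilon-1} < 1$ and $e^{M^{\epsilon-1}} \leq 2$ hold, and both are automatic since $M^{\epsilon-1} \to 0$ as $M \to \infty$ (using $\epsilon < 1$, which is implied by $\epsilon < 1/2$). If one preferred a cleaner bound one could even note that $e^{M^{\epsilon-1}} \to 1$, so any constant strictly larger than $1$ (in place of $2$) works eventually; the constant $2$ is chosen merely for definiteness. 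Alternatively, the factor $2$ could be absorbed by slightly shrinking the exponent, e.g.\ bounding by $e^{-M^{\epsilon}/2}$ for large $M$, but the stated form is the most convenient for later use.
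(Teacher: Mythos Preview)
Your proof is correct and follows the same starting point as the paper: both use the closed form $\mathbb{P}[B \geq t] = (1-t)^{M-1}$ and then estimate $(1-M^{\epsilon-1})^{M-1}$. The paper proceeds by computing the limit $\lim_{M\to\infty} e^{M^{\epsilon}}(1-M^{\epsilon-1})^{M-1}=1$ via a Taylor expansion of the logarithm (which is where the restriction $\epsilon<1/2$ enters, to control the $O(u^{1-2\epsilon})$ remainder), whereas you bypass this by applying the one-line inequality $\log(1-u)\leq -u$. Your route is slightly more elementary and in fact works for the full range $0<\epsilon<1$, so the hypothesis $\epsilon<1/2$ is not actually needed for the statement as written.
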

\begin{proof}
This lemma follows from straightforward calculations. The probability
density function of $B$ is $(M-1)^{-1}(1 - x)^{M-2}$. Therefore,
$$
  \mathbb{P}[B \geq M^{\epsilon -1}] = \int_{M^{\epsilon - 1}}^{1}
  (M-1)^{-1}(1 - x)^{M-2} \mathrm{d}x = (1 - M^{\epsilon - 1})^{M-1} \,.
$$
Then consider the limit
$$
  \lim_{M \rightarrow \infty} e^{M^{\epsilon}} (1 - M^{\epsilon -
    1})^{M-1} = \lim_{M \rightarrow \infty} e^{M^{\epsilon} +
    (M-1)\ln(1 - M^{\epsilon - 1})}\,.
$$
We have
$$
  \lim_{M \rightarrow \infty} M^{\epsilon} +
    (M-1)\ln(1 - M^{\epsilon - 1}) = u^{-\epsilon} + (u^{-1} -
    1)\ln (1 - u^{1-\epsilon})
$$
by changing variables $u = M^{-1}$. Take the Taylor expansion of the
logarithm, if $0 < \epsilon < 1/2$, we have
$$
  u^{-\epsilon} + (u^{-1} - 1)\ln (1 - u^{1-\epsilon}) = u^{-\epsilon}
  - u^{-\epsilon} + u^{1-\epsilon} + O(u^{1-2\epsilon}) + O(u^{2-2\epsilon})
$$
Hence 
$$
  \lim_{M \rightarrow \infty} e^{M \epsilon} (1 - M^{\epsilon -
    1})^{M-1} = 1 \,.
$$
This completes the proof.
\end{proof}

The next Lemma gives a sharper bound of $\Theta^{M}(t)$ that will be
used  in this paper. 

\begin{lem}
\label{unibound}
There exists a constant $C$ that depends on $\Theta^{M}(0)$, $K$,
$T_{L}$, $T_{R}$, and
$T$ such that
$$
  \mathbb{P}[\sup_{ t \in [0, T]}\| \Theta^{M}(t) \| > C + x ] \leq x^{-3}O(M^{-2})
$$
for all sufficiently large $M$. 
\end{lem}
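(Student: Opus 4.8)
The plan is to control the total energy $S^{M}(t):=\sum_{i=1}^{N}\Theta^{M}_{i}(t)$ and then transfer the bound to $\|\Theta^{M}(t)\|$; this is legitimate because every update rule multiplies the affected coordinates by a factor in $(0,1)$ and adds a nonnegative amount, so all coordinates of $\Theta^{M}(t)$ stay nonnegative and $\|\Theta^{M}(t)\|\le S^{M}(t)$. The structural fact that makes everything work is that $S^{M}$ is \emph{conserved at every interior clock} $k\in\{1,\dots,N-1\}$ (such a jump only redistributes energy between cells $k$ and $k+1$), and changes only at the two boundary clocks, where, by the update rule of Section~2.2, its increment is
\[
  \Delta S^{M}\;=\;-(1-p_{3}^{(i)})B_{1}^{(i)}E_{1}(t_{i})\;+\;p_{3}^{(i)}B_{2}^{(i)}\bigl(-T_{L}\log(1-p_{2}^{(i)})\bigr)
\]
at the left clock and the analogous expression (with $E_{N}$, $T_{R}$) at the right clock. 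Setting $T_{\max}:=\max\{T_{L},T_{R}\}$, it follows that $\Delta S^{M}\le Z_{i}$, where
\[
  Z_{i}\;:=\;\bigl(B_{1}^{(i)}+B_{2}^{(i)}\bigr)\bigl(-T_{\max}\log(1-p_{2}^{(i)})\bigr)\;\ge\;0
\]
does not depend on the current energy configuration at all; this is the point that removes the usual circularity.

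I would then invoke the decoupling already built into the alternative description of Section~2.3: the waiting times are manufactured from the i.i.d.\ uniforms $\{q_{i}\}$, while the increments are manufactured from the independent i.i.d.\ vectors $\{\omega^{M}_{i}\}$. By Assumption~(c) the total clock rate obeys $R(\cdot)\le (N+1)K$, so at the fast scale the number $N^{M}(T)$ of jumps in $[0,T]$ is, pathwise, at most a Poisson random variable $\widetilde{N}$ of mean $\lambda_{M}:=(N+1)KMT$ that is a function of $\{q_{i}\}$ only, and hence is independent of the i.i.d.\ sequence $\{Z_{i}\}$ (a fixed function of $\{\omega^{M}_{i}\}$). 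Since the increment of $S^{M}$ at the $i$-th jump is $\le Z_{i}$ (and is $0$ when clock $i$ is interior), summing along the path gives
\[
  \sup_{0\le t\le T}S^{M}(t)\;\le\;S^{M}(0)+\sum_{i=1}^{N^{M}(T)}Z_{i}\;\le\;S^{M}(0)+\sum_{i=1}^{\widetilde{N}}Z_{i}\;=:\;S^{M}(0)+W_{M},
\]
using $N^{M}(T)\le\widetilde{N}$ and $Z_{i}\ge0$; thus $W_{M}$ is a compound Poisson random variable.

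The rest is moment bookkeeping. A direct computation for the $\mathrm{Beta}(1,M-1)$ law (cf.\ Lemma~\ref{beta}) gives $\mathbb{E}[(B_{1}^{(i)}+B_{2}^{(i)})^{k}]=O(M^{-k})$, while $\mathbb{E}[(-T_{\max}\log(1-p))^{k}]=k!\,T_{\max}^{k}$, so $\mathbb{E}[Z^{k}]=O(M^{-k})$ and, exactly, $\mathbb{E}[Z]=2T_{\max}/M$. The compound Poisson cumulant identity $\kappa_{n}(W_{M})=\lambda_{M}\mathbb{E}[Z^{n}]$ then yields $\mathbb{E}[W_{M}]=\lambda_{M}\mathbb{E}[Z]=2(N+1)KT\,T_{\max}$, independent of $M$, together with $\mathrm{Var}(W_{M})=O(M^{-1})$, $\kappa_{3}(W_{M})=O(M^{-2})$ and $\kappa_{4}(W_{M})=O(M^{-3})$, whence the fourth central moment is $\mathbb{E}[(W_{M}-\mathbb{E}W_{M})^{4}]=\kappa_{4}(W_{M})+3\,\kappa_{2}(W_{M})^{2}=O(M^{-2})$. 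Taking $C:=S^{M}(0)+2(N+1)KT\,T_{\max}+1$ — a constant depending only on $\Theta^{M}(0)$, $K$, $T_{L}$, $T_{R}$, $T$ (and $N$) — we have $\{\sup_{[0,T]}\|\Theta^{M}(t)\|>C+x\}\subseteq\{W_{M}-\mathbb{E}W_{M}>1+x\}$, and Markov's inequality with the fourth moment gives
\[
  \mathbb{P}\Bigl[\sup_{0\le t\le T}\|\Theta^{M}(t)\|>C+x\Bigr]\;\le\;\frac{\mathbb{E}[(W_{M}-\mathbb{E}W_{M})^{4}]}{(1+x)^{4}}\;=\;(1+x)^{-4}\,O(M^{-2})\;\le\;x^{-3}\,O(M^{-2}),
\]
the last step using $(1+x)^{-4}\le x^{-3}$ for all $x>0$ (indeed $\sup_{x>0}x^{3}(1+x)^{-4}=27/256<1$).

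The delicate step is the reduction to $S^{M}$. If one instead estimated the individual coordinates, or the compensated martingale $S^{M}(t)-\int_{0}^{t}b(\Theta^{M}(s))\,\mathrm{d}s$ with $b$ its drift, the jump sizes would carry state-dependent terms like $B_{1}E_{1}$, and controlling their contribution would already demand an a priori bound on $\sup_{[0,T]}\|\Theta^{M}\|$ — exactly what we are proving. Passing to the total energy breaks the loop because there the increments that \emph{increase} $S^{M}$ are state-independent (a Beta factor times an exponential-type variable), while every state-dependent term only \emph{decreases} $S^{M}$ and may be dropped from the upper bound. After that, the decoupling of the clock randomness $\{q_{i}\}$ from the exchange randomness $\{\omega^{M}_{i}\}$ and the compound-Poisson moment computation are routine; the exponent $x^{-3}$ in the statement is a mild weakening of the $(1+x)^{-4}$ decay that the fourth moment actually delivers.
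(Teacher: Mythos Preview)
Your proposal is correct and follows essentially the same route as the paper: bound $\|\Theta^{M}(t)\|$ by the total energy, observe that only boundary clocks can increase it, dominate the number of jumps on $[0,T]$ by a Poisson variable of mean $O(M)$ independent of the increments, and conclude via a moment inequality. The only cosmetic differences are that the paper first truncates the Poisson count at $2MKT$ via a Chernoff bound and then applies a third--moment Chebyshev on a deterministic sum, whereas you keep the full compound--Poisson structure, use the cumulant identity $\kappa_{n}(W_{M})=\lambda_{M}\mathbb{E}[Z^{n}]$, and apply the fourth--moment Markov inequality before weakening $(1+x)^{-4}$ to $x^{-3}$; your bookkeeping is arguably a bit cleaner.
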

\begin{proof}
Let $\mathbf{N}$ be the total number of energy exchanges on $[0, t)$. Let
$$
  I_{t} = \sum_{i = 1}^{\mathbf{N}} \mathbf{1}_{\{\mathrm{clock} \, n \, \mathrm{rings}
    \, \mathrm{at} \,  t_{i}\}}(\mathbf{1}_{\{n =
    0\}}T_{L}\log p^{(i)}_{2}B^{(i)}_{1} + \mathbf{1}_{\{n = N\}}T_{R}\log p^{(i)}_{2}B^{(i)}_{2})
$$
be the total amount of the energy influx from the boundary. Then
$$
  \sup_{ t \in [0, T]}\| \Theta^{M}(t) \| \leq T^{0} + I_{T} \,,
$$
where $T_{0} = \sum_{i = 1}^{N}E_{i}(0)$ is the initial total energy. 

Consider the worst case when all clock rates are $K$. We have 
$$
  I_{T} \leq 2\max \{ T_{L}, T_{R}\}\sum_{i = 1}^{\mathbf{N}_{0}}
  -B^{(i)}\log p^{(i)} \,,
$$
where $\mathbf{N}_{0} \sim Pois(MKT)$, $p^{(i)}$ and $B^{(i)}$ are
i.i.d. uniform 0-1 and Beta $(1, M-1)$ random variables,
respectively. 

Using Chernoff bound of Poisson tails, we have
$$
  \mathbb{P}[Pois(MKT) > 2MKT] \leq \frac{e^{-MKT}(e
    MKT)^{2MKT}}{(2MKT)^{2MKT}} = \left (\frac{\sqrt{e}}{2} \right
  )^{2MKT} \,,
$$
which is negligibly small. Hence it is sufficient to consider the tail
of 
$$
  Z_{M}:= \sum_{i = 1}^{2MKT} E^{(i)}B^{(i)} \,,
$$
where $E^{(i)} = -\log p^{(i)}$ are i.i.d. standard exponential random
variable. It is easy to see that the third central moment of
$E^{(i)}B^{(i)}$, which is $\mathbb{E}[ |E^{(i)}B^{(i)} - M^{-1}|]$,
is $O(M^{-3})$. In addition the third central moments are additive for
independent random variables. Hence the third central moment of
$Z_{M}$ is $O(M^{-2})$. Then it follows from the Chebyshev's
inequality (for higher moment) that 
$$
  \mathbb{P}[Z_{M} > 2KT + x] \leq \mathbb{P}[| Z_{M} -
  \mathbb{E}[Z_{M}]| > x] \leq x^{-3}O(M^{-2}) \,.
$$
The proof is completed by letting $C = T_{0} + 4KT\max \{ T_{L},
T_{R}\}$, as the constant $2\max \{ T_{L}, T_{R}\}$ can be absorbed
into term $O(M^{-2})$.

\end{proof}

Similar calculation gives the tightness of $\Theta^{M}(t)$ easily. 

\begin{lem}
\label{lem57}
The sequence $\{ \Theta^{M}(t) , t \in [0, T]\}$ is tight in $D([0,
T], \mathbb{R}^{N})$.
\end{lem}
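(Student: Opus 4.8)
The plan is to verify the criterion of Theorem~\ref{kurtztight} with $\beta=1$ and $q(x,y)=\min\{1,|x-y|\}$. Since the minimum appearing in condition (a) is bounded above by the forward increment $q(\Theta^M(t+u),\Theta^M(t))$, it is enough to produce, for each $\delta>0$, a single random variable $\gamma_M(\delta,T)$ that dominates $q(\Theta^M(t+u),\Theta^M(t))$ \emph{pathwise} for all $0\le t\le T$ and $0\le u\le\delta$, and that satisfies $\lim_{\delta\to0}\limsup_{M\to\infty}\mathbb{E}[\gamma_M(\delta,T)]=0$; conditioning on $\mathcal{F}_t$ then yields (a), and (b) is the special case $t=0$. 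To build $\gamma_M$ I would partition $[0,T)$ into $m=\lceil T/\delta\rceil$ consecutive intervals $I_1,\dots,I_m$ of length $\delta$ and set $S^M_j=\sum_{i:\,t_i\in I_j}\|X^M_i\|$, where $t_i$ and $X^M_i$ are the fast-scale jump times and jump sizes from Section~2.3. Every window $[t,t+u]$ with $u\le\delta$ meets at most two consecutive intervals $I_j$, so $\|\Theta^M(t+u)-\Theta^M(t)\|\le\sum_{i:\,t<t_i\le t+u}\|X^M_i\|\le 2\max_j S^M_j$, and we may take $\gamma_M(\delta,T)=\min\{1,\,2\max_{1\le j\le m}S^M_j\}$.

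The remaining work is the estimate $\lim_{\delta\to0}\limsup_{M\to\infty}\mathbb{E}[\gamma_M(\delta,T)]=0$, and here I would reuse the ingredients of Lemma~\ref{unibound}. First localize to the event $A=\{\sup_{[0,T]}\|\Theta^M\|\le C_0\}$, which by Lemma~\ref{unibound} has $\mathbb{P}[A^c]=O(M^{-2})$ for a constant $C_0$ not depending on $M$. On $A$, the explicit formula \eqref{exchange} for the jumps gives $\|X^M_i\|\le\Xi^M_i:=C_1(B^{(i)}_1+B^{(i)}_2)\bigl(1+|\log(1-p^{(i)}_2)|\bigr)$ for a constant $C_1=C_1(C_0,T_L,T_R)$, the $\omega^M_i=(p^{(i)}_1,p^{(i)}_2,p^{(i)}_3,B^{(i)}_1,B^{(i)}_2)$ being i.i.d.\ and independent of the jump times. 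Because $f\le K$ (assumption (c)), the fast-scale inter-jump times are at least $-(M(N+1)K)^{-1}\log(1-q_i)$, so the number of jumps in any $\delta$-window is stochastically dominated by $\mathrm{Pois}(M(N+1)K\delta)$; together with the i.i.d.\ marks this dominates $S^M_j\mathbf{1}_A$ by a compound Poisson sum $\sum_{i=1}^{\mathrm{Pois}(M(N+1)K\delta)}\Xi^M_i$. Using $\mathbb{E}[B]=M^{-1}$ and $\mathbb{E}[B^2]=O(M^{-2})$ for $B\sim\mathrm{Beta}(1,M-1)$ (cf.\ Lemma~\ref{beta}) together with the finiteness of the moments of $|\log(1-p_2)|$, one gets $\mathbb{E}[\Xi^M_i]=O(M^{-1})$ and $\mathbb{E}[(\Xi^M_i)^2]=O(M^{-2})$, hence by the usual compound-Poisson moment identities $\mathbb{E}[(S^M_j)^2\mathbf{1}_A]=O(M^{-1}\delta)+O(\delta^2)$.

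Assembling the pieces, for any $\eta>0$ a union bound over the $m=O(\delta^{-1})$ intervals gives
\[
  \mathbb{E}[\gamma_M(\delta,T)]\;\le\;2\eta+\mathbb{P}\Bigl[\max_j S^M_j>\eta\Bigr]\;\le\;2\eta+\mathbb{P}[A^c]+\sum_{j=1}^m\eta^{-2}\,\mathbb{E}\bigl[(S^M_j)^2\mathbf{1}_A\bigr]\;=\;2\eta+O(M^{-2})+\eta^{-2}\,O(\delta+M^{-1}).
\]
Letting $M\to\infty$ and then $\delta\to0$ leaves only $2\eta$, and since $\eta$ is arbitrary the double limit is zero; the same estimate at $t=0$ is condition (b). I expect the only real effort to lie in the moment bookkeeping of the second paragraph — it is of the same type as the third-moment computation already carried out for Lemma~\ref{unibound} — and in a clean justification of the compound-Poisson stochastic domination from the ``prescribed randomness'' representation of Section~2.3; the partitioning step and the rest are routine. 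One could also bypass the event $A$ by writing $\|X^M_i\|^2\le 2\|\Theta^M(t_i)\|_\infty^2(B^{(i)}_1+B^{(i)}_2)^2(1+|\log(1-p^{(i)}_2)|)^2$ and exploiting the independence of $\Phi^M_{t_i}$ from $\omega^M_i$ together with the uniform second moment of $\Theta^M$ from Lemma~\ref{unibound}, but conditioning on $A$ keeps the summands in $S^M_j$ cleanly i.i.d.-dominated.
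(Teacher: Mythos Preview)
Your argument is correct. The compound-Poisson domination, the second-moment estimate $\mathbb{E}[(S^M_j)^2\mathbf{1}_A]=O(M^{-1}\delta+\delta^2)$, and the union bound over the $O(\delta^{-1})$ intervals all go through; the only point that would need a line of justification in a final write-up is the coupling that realizes the true jump times as a thinning of a rate-$M(N{+}1)K$ Poisson process with i.i.d.\ marks attached to the dominating points (rather than to the true jump indices), which is exactly what you flag.

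The paper's proof reaches the same conclusion by a shorter route. Since Theorem~\ref{kurtztight} only asks that the \emph{conditional expectation} of the increment be dominated by $\mathbb{E}[\gamma_M(\delta,T)\mid\mathcal{F}_t]$, a pathwise bound on $\|\Theta^M(t+u)-\Theta^M(t)\|$ is not needed and first moments suffice: the paper simply observes that over a window of length $h$ there are at most $\mathrm{Pois}(MKh)$ jumps per clock, each of size $\lesssim (T_0+I_T)\cdot B^{(i)}$ for interior exchanges (or $\lesssim \max\{T_L,T_R\}\,|\log p^{(i)}|B^{(i)}$ at the boundary), and the conditional mean of this sum is $(T_0+I_T)Kh$ up to constants. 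Setting $\gamma_M(h,T)=(T_0+I_T)Kh$ and invoking Lemma~\ref{unibound} to bound $\mathbb{E}[I_T]$ finishes the job. In effect, you traded the paper's single first-moment computation for the partition/$\max$/Chebyshev machinery; your approach is more robust (it would still work if one needed a pathwise modulus-of-continuity statement), but for this lemma the extra structure is unnecessary.
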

\begin{proof}
For any $h > 0$, any $k = 1, \cdots, N$, and any $t \in [0, T]$, we have 
$$
 | E_{k}(t+h) - E_{k}(t)| \leq \max \left \{ (T_{0} + I_{T}) \sum_{i =
   1}^{\mathbf{N}}B^{(i)} , -\max\{T_{L}, T_{R}\}\sum_{i =
   1}^{\mathbf{N}}\log p^{(i)}B^{(i)} \right \}\,,
$$
where $\mathbf{N}$ is the Poisson random variable with rate $hKM$,
$B^{(i)}$ are i.i.d. Beta random variables with parameter $(1, M-1)$,
$p^{(i)}$ are i.i.d. uniform random variable on $(0, 1)$, and
$T_{0} + I_{T}$ is defined in the proof of Lemma \ref{unibound}. The
first term comes from ``internal'' energy exchanges, and the second
term is the boundary flux. Hence $B^{(i)}$s in the first summation are
all independent of $I_{T}$. 

Then it is easy to see that for each given $I_{T}$, the conditional
expectation of $| E_{k}(t+h) - E_{k}(t)| $ is bounded by 
$$
  \gamma(h, T) := \max \left \{ (T_{0} + I_{T})K h, -\max\{T_{L},
    T_{R}\}K h
  \right \} \,.
$$
By Lemma \ref{unibound}, the expectation of $I_{T}$ is uniformly
bounded. Hence 
$$
  \lim_{\delta \rightarrow 0}\limsup_{M \rightarrow \infty}
  \mathbb{E}[\gamma(h, T)]  = 0 \,.
$$
In addition we have
$$
  \| \Theta^{M}(t+h) - \Theta^{M}(t)\| \leq N \gamma(h, T)
$$
as $\gamma(h, T)$ is a uniform bound for all $| E_{k}(t+h) - E_{k}(t)|
$. Since $h$ can be any positive number, by Theorem \ref{kurtztight},
$\{\Theta^{M}(t), t \in [0, T] \}$ is tight in Skorokhod space.

\end{proof}

Now we are ready to set up the martingale problem.

\begin{lem}
\label{martingale}
For any test function $A \in C_{c}^{\infty}( \mathbb{R}^{N})$ and any
$t \in [0, T]$, we have
$$
  \lim_{M\rightarrow \infty} \mathbb{E}\left[ A(\Theta^{M}(t)) - A(
    \Theta^{M}(0)) - \int_{0}^{t}\nabla A( \Theta^{M}(s)) \cdot R(
    \Theta^{M}(s))\bar{\zeta}( \Theta^{M}(s)) \mathrm{d}s  \right] = 0 \,.
$$
\end{lem}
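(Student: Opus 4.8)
The plan is to view $\Theta^{M}(t)$ as a time-rescaled Markov jump process, apply Dynkin's formula, and then show that its generator is asymptotically the averaged drift appearing in the integrand. Since $\Theta^{M}$ is $\Phi^{M}$ with time sped up by a factor $M$, and $\Phi^{M}$ jumps at total rate $R(\cdot)$ with increment $\zeta(\cdot,\omega)$ as in \eqref{exchange}, the infinitesimal generator of $\Theta^{M}$ is
$$
  \mathcal{L}^{M}A(\mathbf{E}) = M R(\mathbf{E})\,\mathbb{E}_{\omega}\big[A(\mathbf{E}+\zeta(\mathbf{E},\omega)) - A(\mathbf{E})\big]\,.
$$
For each fixed $M$ the jump rate $M R(\mathbf{E}) \le M(N+1)K$ is bounded and $A$ is bounded, so Dynkin's formula applies: $A(\Theta^{M}(t)) - A(\Theta^{M}(0)) - \int_{0}^{t}\mathcal{L}^{M}A(\Theta^{M}(s))\,\mathrm{d}s$ is a mean-zero martingale. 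Hence the expression in the lemma equals $\mathbb{E}\big[\int_{0}^{t}\big(\mathcal{L}^{M}A - \nabla A\cdot R\bar\zeta\big)(\Theta^{M}(s))\,\mathrm{d}s\big]$, and the task reduces to showing this vanishes as $M\to\infty$.

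Next I would Taylor expand $A(\mathbf{E}+\zeta) - A(\mathbf{E}) = \nabla A(\mathbf{E})\cdot\zeta + \tfrac12\zeta^{\top}D^{2}A(\xi)\zeta$ with $\xi$ on the segment between $\mathbf{E}$ and $\mathbf{E}+\zeta$. Multiplying by $M R(\mathbf{E})$ and averaging over $\omega$, the first-order term is exactly $R(\mathbf{E})\nabla A(\mathbf{E})\cdot\big(M\,\mathbb{E}_{\omega}[\zeta(\mathbf{E},\cdot)]\big) = R(\mathbf{E})\nabla A(\mathbf{E})\cdot\bar\zeta(\mathbf{E})$, using the identity $\mathbb{E}_{\omega}[\zeta(\mathbf{E},\cdot)] = M^{-1}\bar\zeta(\mathbf{E})$ recorded just before the lemma; this cancels the subtracted drift exactly. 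So the integrand collapses to the Taylor remainder $M R(\mathbf{E})\,\mathbb{E}_{\omega}\big[\tfrac12\zeta^{\top}D^{2}A(\xi)\zeta\big]$, bounded by $\tfrac12(N+1)K\,\|D^{2}A\|_{\infty}\,M\,\mathbb{E}_{\omega}[\|\zeta(\mathbf{E},\cdot)\|^{2}]$. Since each component of $\zeta$ is, up to a factor in $[0,1]$, a Beta variable $B_{j}$ times either a cell energy or $T_{L/R}\log(1-p_{2})$, and $\mathbb{E}[B_{j}^{2}] = 2/(M(M+1)) = O(M^{-2})$ while $\mathbb{E}[\log^{2}(1-p_{2})] < \infty$, one gets $\mathbb{E}_{\omega}[\|\zeta(\mathbf{E},\cdot)\|^{2}] \le C M^{-2}(1+\|\mathbf{E}\|^{2})$. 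Thus the integrand is pointwise at most $C M^{-1}(1+\|\Theta^{M}(s)\|^{2})$.

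Finally I would split the expectation over the event $\mathcal{A}_{M} = \{\sup_{s\in[0,T]}\|\Theta^{M}(s)\| \le C_{0} + M^{\delta}\}$ with a fixed $0 < \delta < 1/2$ and its complement. On $\mathcal{A}_{M}$ the integrand is $O(M^{2\delta-1})$ uniformly in $s$, so its integral over $[0,t]\subseteq[0,T]$ tends to $0$. On $\mathcal{A}_{M}^{c}$, Lemma \ref{unibound} with $x = M^{\delta}$ gives $\mathbb{P}[\mathcal{A}_{M}^{c}] = O(M^{-2-3\delta})$; there I bound things crudely, $|\mathcal{L}^{M}A(\mathbf{E})| \le 2(N+1)K\|A\|_{\infty}M$ and $|\nabla A\cdot R\bar\zeta|\le C'$ (bounded because $\nabla A$ vanishes off the compact set $\mathrm{supp}\,A$, where $R\bar\zeta$ is controlled by $K$, $T_L$, $T_R$ and $\mathrm{diam}(\mathrm{supp}\,A)$), so $\mathbb{E}\big[\mathbf{1}_{\mathcal{A}_{M}^{c}}\int_{0}^{t}|\cdot|\,\mathrm{d}s\big] = O(M)\cdot O(M^{-2-3\delta}) \to 0$. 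Adding the two contributions completes the proof. The main obstacle is the tension between the $O(M)$ prefactor in $\mathcal{L}^{M}$ and the fact that the increments $\zeta$ are not uniformly bounded (they scale both with $\|\mathbf{E}\|$ and with $\log(1-p_{2})$); this is resolved precisely by the exact cancellation of the first-order Taylor term, the $O(M^{-2})$ smallness of the Beta second moments, and the a priori tail estimate of Lemma \ref{unibound}.
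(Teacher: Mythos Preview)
Your proof is correct and follows essentially the same route as the paper: apply Dynkin's formula for the generator $\mathcal{L}^{M}$, Taylor expand $A(\mathbf{E}+\zeta)-A(\mathbf{E})$ so that the first-order term cancels $\nabla A\cdot R\bar\zeta$ exactly, bound the second-order remainder using the $O(M^{-2})$ size of the Beta second moments, and invoke Lemma~\ref{unibound} to control the event where $\|\Theta^{M}(s)\|$ is large. The only notable difference is cosmetic: the paper appeals to the Beta tail bound of Lemma~\ref{beta} before Taylor expanding, whereas you use the exact second moment $\mathbb{E}[B^{2}]=2/(M(M+1))$ directly and make the split on $\mathcal{A}_{M}$ more quantitative; your version is in fact slightly cleaner on this point.
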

\begin{proof}
Let $\mathcal{L}^{M}$ be the infinitesimal generator of
$\Theta^{M}(t)$. It is easy to see that for any $\mathbf{E} = (E_{1},
\cdots, E_{N})$, we have
\begin{align*}
  \mathcal{L}^{M}A( \mathbf{E}) &= Mf(T_{L}, E_{1}) \int_{[0,
  1]^{3}\times \mathbb{R}^{+}} (A(E_{1} - (1-p)B_{1}E_{1} +
  pB_{2}X_{L}, \cdots, E_{N}) - A(\mathbf{E})) \\&Q_{L}(p,B_{1},B_{2},X_{L})
  \mathrm{d}p \mathrm{d}B_{1} \mathrm{d}B_{2} \mathrm{d}X_{L}\\
+&Mf(E_{N}, T_{R}) \int_{[0,
  1]^{3}\times \mathbb{R}^{+}} (A(E_{1}, \cdots, E_{N} - (1-p)B_{1}E_{N} +
  pB_{2}X_{R}) - A(\mathbf{E})) \\&Q_{R}(p,B_{1},B_{2},X_{L})
  \mathrm{d}p \mathrm{d}B_{1} \mathrm{d}B_{2} \mathrm{d}X_{R}\\
+& M\sum_{i = 1}^{N-1}f(E_{i}, E_{i+1}) \int_{[0, 1]^{3}}(A(\cdots,
   E_{i} - (1-p)B_{1}E_{i} + pB_{2}E_{i+1}, \\
& E_{i+1} - pB_{2}E_{i+1} +
   (1-p)B_{1}E_{i}, \cdots) - A( \mathbf{E})) Q(p, B_{1},B_{2})
  \mathrm{d}p \mathrm{d}B_{1} \mathrm{d}B_{2} \,,
\end{align*}
where $Q$ is the joint probability density function of two independent
Beta random variables $B_{1}, B_{2}$ with parameters $(1, M-1)$ and one uniform
random variable $p$ on $(0, 1)$, $Q_{L}$ (resp. $Q_{R}$) is the joint
probability density function of two independent
Beta random variables $B_{1}, B_{2}$ with parameters $(1, M-1)$, one
exponential random variable with mean $T_{L}$ (resp. $T_{R}$), and one uniform
random variable $p$ on $(0, 1)$. 

Since $\Theta^{M}(t)$ is a Markov process with zero probability of
hitting infinity in finite time, for any test function $A(
\mathbf{E})$, 
$$
  A(\Theta^{M}(t)) - A(\Theta^{M}(0)) - \int_{0}^{t}
  \mathcal{L}^{M}A(\Theta^{M}(s)) \mathrm{d}s
$$
is a martingale. Hence it is sufficient to show that 
$$
\lim_{M \rightarrow \infty} \int_{0}^{t} | \mathcal{L}^{M}A( \Theta^{M}(s)) - \nabla
  A( \Theta^{M}(s)) \cdot R(  \Theta^{M}(s)) \bar{\zeta}(
  \Theta^{M}(s))  | \mathrm{d}s = 0\,.
$$

By Lemma \ref{beta}, two Beta random variables are $O(M^{-1})$
small. More precisely, the probability of $B_{i}^{(1)} > M^{\epsilon - 1}$ or
$B_{i}^{(2)} > M^{\epsilon - 1}$ for some $t_{i} < t$ is
$O(M)e^{-M^{\epsilon}} $, which is smaller than any powers of $M$. 

A Taylor expansion of $A( \mathbf{E})$ gives 
\begin{align*}
  &\mathcal{L}^{M}A( \mathbf{E}) \\
= &Mf(T_{L}, E_{0})A_{E_{1}}\int_{[0,
                                  1]^{3} \times \mathbb{R}^{+}}
                                  (-(1-p)B_{1}E_{1} + p B_{2}X_{L})Q_{L}(p,B_{1},B_{2},X_{L})
  \mathrm{d}p \mathrm{d}B_{1} \mathrm{d}B_{2} \mathrm{d}X_{L} \\
&+ Mf(E_{N}, T_{R})A_{E_{N}}\int_{[0,
                                  1]^{3} \times \mathbb{R}^{+}}
                                  (-(1-p)B_{1}E_{1} + p B_{2}X_{R})Q_{R}(p,B_{1},B_{2},X_{R})
  \mathrm{d}p \mathrm{d}B_{1} \mathrm{d}B_{2} \mathrm{d}X_{R}\\
&+ \sum_{i = 1}^{N-1} \left ( Mf(E_{i}, E_{i+1})  \int_{[0,
  1]^{3}}A_{E_{i}} (-(1-p)B_{1}E_{i} + pB_{2}E_{i+1}) + \right .\\
& \quad \left . A_{E_{i+1}}(-p
  B_{2}E_{i+1} + (1-p)B_{1}E_{i}) Q(p, B_{1},B_{2}) \mathrm{d}p
  \mathrm{d}B_{1} \mathrm{d}B_{2} \right ) + M\epsilon(\mathbf{E}) O(M^{-2}) \\
=&\sum_{i = 1}^{N} [\frac{1}{2}f(E_{i-1}, E_{i})(E_{i-1} - E_{i}) +
\frac{1}{2}f(E_{i}, E_{i+1})(E_{i+1}- E_{i})] A_{E_{i}} + \epsilon(\mathbf{E})O(M^{-1})\\
=& \nabla
  A( \mathbb{E}) \cdot R( \mathbf{E}) \bar{\zeta}( \mathbf{E}) +
   \epsilon(\mathbf{E})O(M^{-1}) \,,
\end{align*}
where the Lagrange reminder $\epsilon(\mathbf{E})$ depends on second derivatives of $A$ and
is bounded in a compact set.

Therefore, we have
\begin{equation}
\label{eq5-1}
  \int_{0}^{t} | \mathcal{L}^{M}A( \Theta^{M}(s)) - \nabla
  A( \Theta^{M}(s)) \cdot R(  \Theta^{M}(s)) \bar{\zeta}(
  \Theta^{M}(s))  | \mathrm{d}s =
  \int_{0}^{t}\epsilon(\Theta^{M}(s))O(M^{-1}) \mathrm{d}s \,.
\end{equation}

By Lemma \ref{unibound}, the probability that $\sup_{ s \in
  [0, t]}\| \Theta^{M}(s) \| > C + 1 $ for some constant $C$ is
$O(M^{-2})$ small. In addition $A$ is a bounded function, hence the impact of very
  large $\Theta^{M}(s)$ is negligible. Therefore, the lemma follows
  from equation \eqref{eq5-1}.
\end{proof}

\begin{proof}[Proof of Theorem \ref{LLN}.]
The boundedness of $\Theta^{M}(0)$ is trivial. Hence it follows from Lemma \ref{lem57}, Lemma \ref{unibound}, and Theorem \ref{kurtztight} that $\{ \Theta^{M}(t)
, t \in [0, T]\}$ is tight. It is then sufficient to show that equation \eqref{averaging} is the only
solution of the martingale problem given in Lemma
\ref{martingale}. Let $\Theta(t)$ be a solution to the martingale
problem in Lemma \ref{martingale}. Applying Lemma \ref{martingale} to
the identity function, we have
\begin{eqnarray*}
&&\frac{\mathrm{d}}{\mathrm{d}t} \mathbb{E}_{\Theta_{0}}[\| \Theta(t) -
  \bar{\Theta}(t) \|^{2}]\\
 & = & \frac{\mathrm{d}}{\mathrm{d}t}
                                \mathbb{E}_{\Theta_{0}}[ \Theta(t)
                                \cdot \Theta(t)] -
                                2R(\bar{\Theta}(t))\bar{\zeta}(\bar{\Theta}(t))\cdot
  \mathbb{E}_{\Theta_{0}}[\Theta(t)]\\
&& - 2 \bar{\Theta}(t)\cdot(\frac{\mathrm{d}}{\mathrm{d}t}
   \mathbb{E}_{\Theta_{0}}[\Theta(t)]) + 
   2R(\bar{\Theta}(t))\bar{\zeta}(\bar{\Theta}(t))\cdot
   \bar{\Theta}(t) \\
&=&\mathbb{E}_{\Theta_{0}}[ \Theta(t)
                                \cdot 2R(\Theta(t))\bar{\zeta}(\Theta(t))]-
                                2R(\bar{\Theta}(t))\bar{\zeta}(\bar{\Theta}(t))\cdot
  \mathbb{E}_{\Theta_{0}}[\Theta(t)]\\
&& - 2 \bar{\Theta}(t)\cdot \mathbb{E}_{\Theta_{0}}[R(\Theta(t))\bar{\zeta}(\Theta(t))] + 2
    \mathbb{E}_{\Theta_{0}}[R(\bar{\Theta}(t))\bar{\zeta}(\bar{\Theta}(t))\cdot
   \bar{\Theta}(t)] \\
&=&2 \mathbb{E}_{\Theta_{0}}[(\Theta(t) - \bar{\Theta}(t)) \cdot
    (R(\Theta(t))\bar{\zeta}(\Theta(t)) -
    R(\bar{\Theta}(t))\bar{\zeta}(\bar{\Theta}(t))) ] \,.
\end{eqnarray*}

Since the rate function $f$ is globally bounded, some elementary calculations imply that
$$
  R(\Theta(t))\bar{\zeta}(\Theta(t)) -
    R(\bar{\Theta}(t))\bar{\zeta}(\bar{\Theta}(t)) \leq K A (
    \Theta(t) - \bar{\Theta}(t)) \,,
$$
where
$$
  A = 
\begin{bmatrix}
1& -1/2& 0 &\cdots&0\\
-1/2&1&-1/2&\cdots&0\\
\vdots&\vdots&\vdots&\vdots&\vdots\\
0&\cdots&0&-1/2&1
\end{bmatrix} \,.
$$
Hence there exists a constant $C$ such that 
$$
  \frac{\mathrm{d}}{\mathrm{d}t} \mathbb{E}_{\Theta_{0}}[\| \Theta(t) -
  \bar{\Theta}(t) \|^{2}] \leq C \mathbb{E}_{\Theta_{0}}[\| \Theta(t) -
  \bar{\Theta}(t) \|^{2}] \,.
$$
Since $\Theta(0) = \Theta_{0}$, by Gronwall's inequality, we have
$\Theta(t) = \bar{\Theta}(t)$ almost surely. This completes the solution.

\end{proof}

\subsection{Fourier's law of the limit equation}
\begin{lem}
\label{lem61}
Equation \eqref{averaging} admits a unique equilibrium $\mathbf{E}^{*} =
(E^{*}_{1}, \cdots, E^{*}_{N})$ in $\mathbb{R}^{N}_{+}$.
\end{lem}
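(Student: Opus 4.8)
The plan is to recast the equilibrium condition $R(\bar\Theta)\bar\zeta(\bar\Theta)=0$ directly in terms of the flux functions $F_i$ from Theorem 1, which read off to zero the net flux into each cell. Writing $g_i := \tfrac12 f(E_i,E_{i+1})(E_{i+1}-E_i)$ for the flux across bond $i$ (with $E_0=T_L$, $E_{N+1}=T_R$), the system $F_i(\mathbf E)=0$ for $i=1,\dots,N$ says $g_{i-1}=g_i$ for all $i$, i.e. the flux is constant along the chain: there is a single number $\kappa$ with $\tfrac12 f(E_i,E_{i+1})(E_{i+1}-E_i)=\kappa$ for every $i=0,\dots,N$. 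First I would establish existence: for each fixed value $\kappa$, the relation $\tfrac12 f(x,E_{i+1})(E_{i+1}-x)=\kappa$ determines $E_{i+1}$ monotonically from $E_i$ (by assumptions (a)--(c) on $f$, the map $y\mapsto \tfrac12 f(x,y)(y-x)$ is continuous, strictly increasing in $y$ on $(0,\infty)$, ranges over all of $\mathbb R$, hence invertible), so starting from $E_0=T_L$ one propagates a well-defined, strictly increasing-in-$\kappa$ value $E_{N+1}=\Psi(\kappa)$; solving $\Psi(\kappa)=T_R$ by the intermediate value theorem (note $\Psi(0)$ gives $E_1=\dots=E_N=T_L$, and $\Psi$ is continuous and strictly monotone in $\kappa$) yields the desired $\kappa^*$, and back-substitution produces $\mathbf E^*\in\mathbb R^N_+$. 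Positivity is automatic because each $E_i$ lies between $\min\{T_L,T_R\}>0$ and $\max\{T_L,T_R\}$: if $T_L<T_R$ then $\kappa^*>0$ forces each bond flux positive, hence $E_0<E_1<\dots<E_{N+1}$, and symmetrically if $T_L>T_R$; if $T_L=T_R$ the constant profile is the solution.

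For uniqueness I would argue by the same monotonicity. Any equilibrium must have constant bond flux $\kappa$ (from $g_{i-1}=g_i$), and given $\kappa$ the entire profile $E_1,\dots,E_N,E_{N+1}$ is determined by forward propagation from $E_0=T_L$; since $\Psi(\kappa)=E_{N+1}$ is strictly monotone in $\kappa$, the constraint $E_{N+1}=T_R$ pins down $\kappa$ uniquely, hence $\mathbf E^*$ uniquely. An alternative, more robust route to uniqueness is a discrete maximum-principle / energy argument: if $\mathbf E$ and $\mathbf E'$ are two equilibria, let $i$ be an index maximizing $E_i-E_i'$; the equilibrium equation at $i$ combined with monotonicity of $f$ gives a sign contradiction unless the difference is constant, and the boundary values then force $\mathbf E=\mathbf E'$. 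I would present whichever is cleaner after checking the monotonicity bookkeeping.

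The main obstacle is the monotonicity step for the one-bond map $y\mapsto \tfrac12 f(x,y)(y-x)$: because $f$ is only assumed non-decreasing (not strictly increasing) and bounded, I must be careful that the product is genuinely strictly increasing in $y$ and still surjective onto $\mathbb R$. Strict monotonicity in $y$ does hold because for $y>x$ increasing $y$ strictly increases the factor $(y-x)>0$ while $f\ge$ some positive constant on compact sets (assumption (a)), and for $y<x$ the factor $(y-x)<0$ has strictly increasing magnitude-with-sign; surjectivity as $y\to 0^+$ and $y\to\infty$ uses positivity and boundedness of $f$ away from the degenerate corners. The other delicate point is ensuring all intermediate $E_i$ stay strictly positive during propagation, which is where I would invoke the ordering $E_i\in[\min\{T_L,T_R\},\max\{T_L,T_R\}]$ rather than propagating blindly; once that interval trap is in place the argument closes. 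These are the estimates I expect to spend the most care on; the IVT and Gronwall-type pieces are routine.
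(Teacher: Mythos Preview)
Your proposal is correct and follows essentially the same route as the paper: rewrite the equilibrium condition as a constant-flux condition across bonds, propagate forward from $E_0=T_L$ with the flux value as parameter, and use the intermediate value theorem together with strict monotonicity of the shooting map to hit $E_{N+1}=T_R$ at a unique parameter value. The paper's version is terser---it tacitly assumes $T_L<T_R$ (taking $c>0$ only) and does not separately spell out uniqueness or positivity---so your extra care about the sign of $T_R-T_L$, the interval trap $E_i\in[\min\{T_L,T_R\},\max\{T_L,T_R\}]$, and the monotonicity of $y\mapsto f(x,y)(y-x)$ is well placed and fills in details the paper leaves implicit.
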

\begin{proof}
We have
\begin{displaymath}
 0 = R(\bar{\Theta}) \bar{\zeta}(\Theta)= \begin{bmatrix}
\frac{1}{2}f(T_{L}, E_{1}^{*})(T_{L} - E_{1}^{*}) +
\frac{1}{2}f(E_{1}^{*}, E_{2}^{*})(E_{2}^{*} - E_{1}^{*}) \\
\vdots\\
\frac{1}{2}f(E_{k}^{*}, E_{k+1}^{*})(E_{k+1}^{*} -
E_{k}^{*}) + \frac{1}{2}f(E_{k-1}^{*},
E_{k}^{*})(E_{k-1}^{*} - E_{k}^{*}) \\
\vdots\\
 \frac{1}{2}f(E_{N}^{*}, E_{1}^{*})(T_{R} - E_{N}^{*}) +
 \frac{1}{2}f(E_{N-1}^{*}, E_{N}^{*})(E_{N-1}^{*} - E_{N}^{*})
\end{bmatrix}
\end{displaymath}
Therefore, we have
$$
  f(T_{L}, E_{1}^{*})(E_{1}^{*} - T_{L} ) = f(E_{1}^{*},
  E_{2}^{*})(E_{2}^{*} - E_{1}^{*}) = \cdots = f(E^{*}_{N},
  T_{R})(T_{R} - E^{*}_{N}) \,.
$$
We can use this identity to match the left and right boundary
conditions. For any $c > 0$, we can solve equation
$$
  f(T_{L}, E_{1}^{*})(E_{1}^{*} - T_{L}) = c \,.
$$
Denote the solution by $E_{1}^{*}(c)$. Since $f$ is positive, we have
$E_{1}^{*}(c) > T_{L}$. By the continuity of $f$, $E_{1}^{*}(c)$ is
continuous with respect to $c$. In addition, since
$$
  \frac{\mathrm{d}}{\mathrm{d}E^{*}_{1}}(f(T_{L}, E_{1}^{*})(E_{1}^{*}
  - T_{L})) = f_{2}(T_{L}, E_{1}^{*})(E_{1}^{*} - T_{L}) + f(T_{L},
  E_{1}^{*}) > 0 \,,
$$ 
$E_{1}^{*}(c)$ monotonically increases with $c$. Similarly, we can
solve equation
$$
  f(E_{1}^{*}(c), E^{*}_{2})(E_{2}^{*} - E_{1}^{*}(c)) = c
$$
to get $E^{*}_{2}(c)$. And $E^{*}_{2}(c)$ increases with $c$ by the
same reason as that of $E^{*}_{1}(c)$. Continue this procedure, we can
obtain $E_{3}^{*}(c), \cdots, E^{*}_{N}(c)$, and $T_{R}^{*}(c)$. The
boundary value $T_{R}^{*}(c)$ is continuous with respect to $c$ and
monotonically increasing with $c$. 

Since $T_{R}^{*}(c) = 0$ and $T_{R}^{*}(+\infty) = +\infty$, by
the intermediate value theorem, there exists a $c^{*}$ such that
$$
  T_{R}^{*}(c^{*}) = T_{R} \,.
$$
It is easy to see that $(E_{1}^{*} , \cdots, E_{N}^{*}) =
(E_{1}^{*}(c^{*}), \cdots, E_{N}^{*}(c^{*}))$ is a solution to
equation \eqref{averaging}. 
\end{proof}

\begin{lem}
\label{lem510}
Assume $\gamma = \mathrm{div}f/f$ has negative partial derivatives in
a neighborhood of $\mathbf{E}^{*}$, then the equilibrium $\mathbf{E}^{*}$ for equation
\eqref{averaging} is linearly stable for sufficiently large $N$. 
\end{lem}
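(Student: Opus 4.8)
The plan is to linearize the vector field $F := R\bar\zeta$ of \eqref{averaging} at the equilibrium $\mathbf{E}^{*}$ produced by Lemma \ref{lem61} and to show that $DF(\mathbf{E}^{*})$ is a Hurwitz matrix; since $\mathbf{E}^{*}$ is then hyperbolic this gives linear (indeed exponential) stability. Writing $\psi(x,y) = f(x,y)(y-x)$, the flow is $\dot{\bar\Theta}_{i} = \tfrac12\big(\psi(\bar\Theta_{i},\bar\Theta_{i+1}) - \psi(\bar\Theta_{i-1},\bar\Theta_{i})\big)$, so $DF(\mathbf{E}^{*})$ is tridiagonal. One may assume $T_{L}<T_{R}$ (the case $T_{L}=T_{R}$ is trivial and $T_{L}>T_{R}$ is symmetric), so by Lemma \ref{lem61} the profile is strictly increasing and the common bond flux $c^{*} = f(E^{*}_{i-1},E^{*}_{i})(E^{*}_{i}-E^{*}_{i-1})$ is positive. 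Set $a_{i}=f(E^{*}_{i-1},E^{*}_{i})$ and $d_{i}=E^{*}_{i}-E^{*}_{i-1}=c^{*}/a_{i}>0$. Differentiating $\psi$ and using the equilibrium identity gives the subdiagonal, diagonal and superdiagonal of $DF(\mathbf{E}^{*})$ as $\tfrac12\alpha_{i}$, $-\tfrac12(\alpha_{i+1}+\beta_{i})$ and $\tfrac12\beta_{i+1}$, where $\alpha_{i}=a_{i}-f_{1}(E^{*}_{i-1},E^{*}_{i})\,d_{i}$ and $\beta_{i}=a_{i}+f_{2}(E^{*}_{i-1},E^{*}_{i})\,d_{i}$. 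Two features will be used: $\beta_{i}-\alpha_{i}=(f_{1}+f_{2})(E^{*}_{i-1},E^{*}_{i})\,d_{i}=\gamma(E^{*}_{i-1},E^{*}_{i})\,c^{*}\ge 0$ because $f$ is nondecreasing; and, since the profile increases and $\gamma$ has negative partial derivatives near it, $i\mapsto\gamma(E^{*}_{i-1},E^{*}_{i})$ is nonincreasing, hence so is $i\mapsto\beta_{i}-\alpha_{i}$.

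Next I would record the scaling $c^{*}=(T_{R}-T_{L})\big/\sum_{i=1}^{N+1}a_{i}^{-1}=O(N^{-1})$, valid because $f$ is bounded and bounded away from $0$ on the compact range swept by the profile; consequently $\max_{i}d_{i}=O(N^{-1})$, the discrete profile converges to the continuum profile (a curve near the diagonal, where the hypothesis on $\gamma$ applies), and $\alpha_{i}>0$, $\beta_{i}>0$ for all $i$ once $N$ is large. With $\alpha_{i},\beta_{i}>0$ the tridiagonal matrix $DF(\mathbf{E}^{*})$ is symmetrizable: conjugating by the appropriate positive diagonal matrix yields a symmetric tridiagonal $G$ with diagonal $-\tfrac12(\alpha_{i+1}+\beta_{i})$ and off-diagonal $\tfrac12\sqrt{\alpha_{i+1}\beta_{i+1}}$, having the same (now necessarily real) spectrum as $DF(\mathbf{E}^{*})$. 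It therefore suffices to show $-G$ is positive definite.

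For this I would verify that $-G$ is irreducibly diagonally dominant with positive diagonal. Irreducibility is immediate from the nonzero off-diagonals. For an interior row, diagonal dominance reads $\alpha_{i+1}+\beta_{i}\ge\sqrt{\alpha_{i}\beta_{i}}+\sqrt{\alpha_{i+1}\beta_{i+1}}$; bounding each radical by its arithmetic mean reduces this to $\alpha_{i+1}-\beta_{i+1}\ge\alpha_{i}-\beta_{i}$, i.e. $\beta_{i}-\alpha_{i}\ge\beta_{i+1}-\alpha_{i+1}$, which is precisely the monotonicity established above. For the two boundary rows the dominance is strict once $N$ is large: at the left end $E^{*}_{1},E^{*}_{2}\to T_{L}$, so $\alpha_{2}+\beta_{1}\to 2f(T_{L},T_{L})$ while $\sqrt{\alpha_{2}\beta_{2}}\to f(T_{L},T_{L})$, and similarly on the right. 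A symmetric, irreducibly diagonally dominant matrix with positive diagonal is positive definite, so $-G\succ 0$; hence every eigenvalue of $DF(\mathbf{E}^{*})$ is real and negative and $\mathbf{E}^{*}$ is linearly stable.

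The delicate point is not the computation but earning the strict inequality: a symmetric diagonally dominant matrix is only positive semidefinite in general, and a zero eigenvalue would correspond to the conservation-law degeneracy one would see with insulating ends. The Dirichlet coupling to the heat baths is what breaks this, but only the boundary rows feel it, and they are strictly dominant only in the large-$N$ regime — which, together with the $O(N^{-1})$ bound on $c^{*}$ needed to keep each $\alpha_{i}$ positive and to confine the profile to the region where $\gamma$ has negative derivatives, is exactly where the hypothesis ``for sufficiently large $N$'' is consumed.
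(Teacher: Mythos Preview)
Your argument is correct and rests on the same core identity the paper uses, namely that the row-wise ``excess'' of the Jacobian at $\mathbf{E}^{*}$ is governed by $c^{*}\big(\gamma(E^{*}_{i-1},E^{*}_{i})-\gamma(E^{*}_{i},E^{*}_{i+1})\big)$, which is nonnegative because $\gamma$ decreases along the increasing equilibrium profile. The route, however, is different. The paper works with the original (nonsymmetric) Jacobian and observes that for large $N$ the off-diagonal entries are positive while the diagonal entries are negative; together with the row-sum identity above this makes the Jacobian strictly (row) diagonally dominant, and Gershgorin's theorem places the spectrum in the open left half-plane in one stroke. You instead conjugate to a symmetric tridiagonal matrix, bound the off-diagonals by AM--GM, and invoke irreducible diagonal dominance to obtain positive definiteness of $-G$. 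Your path is a bit longer and the AM--GM step discards information (interior rows are in fact strictly dominant once one notes that $\gamma$ is \emph{strictly} decreasing, so the appeal to irreducibility and the boundary rows is not really needed), but it yields the pleasant bonus that the spectrum of $DF(\mathbf{E}^{*})$ is real, not merely contained in the left half-plane. Both proofs consume the ``sufficiently large $N$'' hypothesis in the same two places: to make the off-diagonals positive (equivalently $\alpha_{i}>0$) and to confine the discrete profile to the region where $\gamma$ has negative partial derivatives.
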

\begin{proof}
Without loss of generality let $E_{0} = T_{L}$ and $E_{N} =
T_{R}$. Let ${\bm J} = \{{\bm J}_{i,j}\}_{i,j = 1}^{N}$ be the Jacobian matrix of equation
\eqref{averaging} at $\mathbf{E}^{*}$. Denote two partial derivatives
of $f$ by $f_{1}$ and $f_{2}$. We have
$$
  {\bm J}_{i,i} = f_{2}(E_{i-1}^{*}, E_{i}^{*})(E_{i-1}^{*} - E_{i}^{*}) +
  f_{1}(E_{i}^{*}, E_{i+1}^{*})(E_{i+1}^{*}-E_{i}^{*}) -
  f(E_{i-1}^{*}, E_{i}^{*}) - f(E_{i}^{*}, E_{i+1}^{*}) 
$$
for $i = 1, \cdots, N$, 
$$
  {\bm J}_{i-1,i} = f_{1}(E^{*}_{i-1},E^{*}_{i})(E^{*}_{i-1}-E^{*}_{i}) +
  f(E_{i-1}^{*}, E_{i}^{*})
$$
for $i = 2, \cdots, N$, and
$$
  {\bm J}_{i, i+1} = f_{2}(E^{*}_{i}, E^{*}_{i+1})(E^{*}_{i+1} -E_{i}^{*}) +
  f(E_{i}^{*}, E_{i+1}^{*}) 
$$
for $i = 1, 2, \cdots, N-1$. All other ${\bm J}_{i,j}$ with $|i-j|>2$ are
zero. 

We have 
\begin{eqnarray*}
\sum_{j = 0}^{N}{\bm J}_{i,j} &=& {\bm J}_{i,i-1}+{\bm J}_{i,i} + {\bm J}_{i,i+1} \\
&=& (f_{1}(E^{*}_{i-1},E^{*}_{i}) +
  f_{2}(E^{*}_{i-1},E^{*}_{i}))(E_{i-1}^{*} - E^{*}_{i}) +
  (f_{1}(E^{*}_{i}, E^{*}_{i+1}) + f_{2}(E^{*}_{i},
  E^{*}_{i+1}))(E^{*}_{i+1} -E_{i}^{*}) \\
&=& -c^{*}\frac{\mathrm{div}f}{f}(E_{i-1}^{*}, E_{i}^{*}) +
    c^{*}\frac{\mathrm{div}f}{f}(E_{i}^{*}, E_{i+1}^{*})  \\
&=& c^{*}(\gamma(E^{*}_{i},E^{*}_{i+1}) - \gamma(E^{*}_{i-1},
    E^{*}_{i}) ) \,,
\end{eqnarray*}
where $c^{*}$ is the critical value given in the proof of Lemma
\ref{lem61} such that
$$
  f(E_{i-1}^{*}, E_{i}^{*})(E^{*}_{i} - E^{*}_{i}) = c^{*}
$$
for all $i = 1, \cdots, N+1$. Therefore, 
$$
  \sum_{j = 0}^{N}{\bm J}_{i,j} < 0 \,.
$$
In addition, note that by the assumption of $f$ we have
$$
  E_{i}^{*} - E_{i+1}^{*} < \frac{T_{R} - T_{L}}{N f(T_{R}, T_{R})} \,.
$$
In addition $T_{L} < E_{1}^{*} < \cdots < E_{N}^{*} < T_{R}$ according
to the proof of Lemma \ref{lem61}. Hence $J_{i,i} < 0$ when $N$ is
sufficiently large. Therefore, $J$ is a diagonally dominant matrix. By
the Gershgorin disk theorem, all eigenvalues of $J$ has strictly
negative real parts. This completes the proof. 
\end{proof}

Let 
\begin{eqnarray}
\label{kappa}
&&\\\nonumber
\kappa &= &\frac{M}{T_{R} - T_{L}} \mathbb{E}_{\mathbf{E}^{*}} \left
            \{ f(T_{L}, E_{1})[(1-p_{1})B_{1}E_{1} - p_{1}B_{0} Exp (T_{L})]\right .\\\nonumber
&&+ \sum_{i = 1}^{N-1}f(E_{i}, E_{i+1})[-(1-p_{i})B_{i}E_{i} +
    p_{i}B_{i+1}E_{i+1}] \\\nonumber
&&+ \left . f(E_{N}, T_{R})[-(1-p_{N+1})B_{N}E_{N} +
    p_{N+1}B_{N+1}Exp(T_{R})] \right \}\nonumber
\end{eqnarray}
be the thermal conductivity of the rescaled system
$\Theta^{M}(t)$, where $B_{0}, B_{1}, \cdots, B_{N+1}$ are i.i.d. Beta random
variables with parameter $(1, M-1)$, $p_{1}, \cdots, p_{N+1}$ are
i.i.d. uniform random variables on $(0, 1)$, and $Exp(\lambda)$ means
an exponential random variable with mean $\lambda$. The following
lemma implies Fourier's law.

{\bf Remark.} It remains to check partial derivatives of $\gamma$. Since $f$ is the rate function obtained from billiards-like dynamics,
heuristically $f(E, E)$ should be proportional to $\sqrt{E}$, which
has a negative second order derivative. Consider two concrete examples
of rate functions $f_{1}(E_{1}, E_{2}) = \sqrt{E_{1}, E_{2}}$ and $f_{2}(E_{1}, E_{2}) =
\sqrt{E_{1}E_{2}/(E_{1} + E_{2})}$ that has been considered in previous
studies, where $f_{1}$ is the rate function
obtained by taking the rare interaction limit \cite{gaspard2008heat,
  gaspard2008heat2}, and $f_{2}$ satisfies with our conclusion in
\cite{li2018billiards} that $f_{2}(E_{1}, E_{2}) \approx
\sqrt{\min\{E_{1}, E_{2}\}}$ if one of $E_{1}$ and $E_{2}$ is small.  

Some elementary calculations show that
$$
  \gamma_{1} = \frac{\mathrm{div} f_{1}}{f_{1}} = \frac{1}{E_{1} + E_{2}}
$$
and
$$
  \gamma_{2} = \frac{\mathrm{div} f_{2}}{f_{2}} = \frac{E_{1}^{2} +
    E_{2}^{2}}{(E_{1} + E_{2})E_{1}E_{2}} \,.
$$
Partial derivatives of $\gamma_{1}$ are always negative. Partial
derivatives of $\gamma_{2}$ are negative if
$$
  (1+\sqrt{2})^{-1}E_{1} < E_{2} < (1+\sqrt{2})E_{1} \,.
$$
Hence when the chain is sufficiently long, $\gamma_{2}$ also satisfies
the assumption in Lemma \ref{lem510} because $E^{*}_{i} - E^{*}_{i+1}
= O(N^{-1})$.

\begin{lem}
Assume $T_{R} - T_{L} \ll 1$. Then
$$
  \kappa = \frac{1}{2}f(T_{L}, T_{L}) + O(T_{R} - T_{L}) \,.
$$ 
\end{lem}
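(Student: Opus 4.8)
The plan is to evaluate the expectation in \eqref{kappa} in closed form and then expand to first order in $T_R-T_L$. The random variables $B_0,\dots,B_{N+1}$, $p_1,\dots,p_{N+1}$, and the two exponential variables appearing in \eqref{kappa} are independent, with $\mathbb{E}[B_j]=1/M$, $\mathbb{E}[p_i]=1/2$, and $\mathbb{E}[\mathrm{Exp}(\lambda)]=\lambda$. By linearity of expectation, each of the $N+1$ bracketed summands reduces to $\frac{1}{2M}$ times the signed energy difference across the corresponding link: the $i$-th internal term to $\frac{1}{2M}f(E_i^*,E_{i+1}^*)(E_{i+1}^*-E_i^*)$, the left term to $\frac{1}{2M}f(T_L,E_1^*)(E_1^*-T_L)$, and the right term to $\frac{1}{2M}f(E_N^*,T_R)(T_R-E_N^*)$. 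By Lemma~\ref{lem61}, at the equilibrium $\mathbf{E}^*$ all of these equal the common flux value $c^*>0$ (positive because $T_L<E_1^*<\cdots<E_N^*<T_R$, from the proof of that lemma). Hence $\mathbb{E}_{\mathbf{E}^*}\{\cdots\}=(N+1)c^*/(2M)$, so that
$$
  \kappa \;=\; \frac{M}{T_R-T_L}\cdot\frac{(N+1)c^*}{2M} \;=\; \frac{(N+1)\,c^*}{2(T_R-T_L)}.
$$

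Next I would eliminate $c^*$ using the telescoping identity from the proof of Lemma~\ref{lem61}. With the convention $E_0^*=T_L$ and $E_{N+1}^*=T_R$, the relations $f(E_i^*,E_{i+1}^*)(E_{i+1}^*-E_i^*)=c^*$ give $E_{i+1}^*-E_i^*=c^*/f(E_i^*,E_{i+1}^*)$, and summing over $i=0,\dots,N$ yields $T_R-T_L=c^*\sum_{i=0}^{N}1/f(E_i^*,E_{i+1}^*)$. Substituting back,
$$
  \kappa \;=\; \frac{N+1}{2\sum_{i=0}^{N}1/f(E_i^*,E_{i+1}^*)},
$$
so $\kappa$ is exactly one half of the harmonic mean of the $N+1$ numbers $f(E_i^*,E_{i+1}^*)$.

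Finally I would carry out the perturbation. Since $T_L<E_1^*<\cdots<E_N^*<T_R$, every pair $(E_i^*,E_{i+1}^*)$ lies in the box $[T_L,T_R]^2$, hence within distance $\sqrt{2}\,(T_R-T_L)$ of $(T_L,T_L)$. By assumption (a), $f$ is $C^1$ and strictly positive near $(T_L,T_L)$, so both $f$ and $1/f$ are Lipschitz on a fixed neighborhood of $(T_L,T_L)$ with constants depending only on $f$ and $T_L$; therefore $1/f(E_i^*,E_{i+1}^*)=1/f(T_L,T_L)+O(T_R-T_L)$ uniformly in $i$. Summing the $N+1$ terms gives
$$
  \sum_{i=0}^{N}\frac{1}{f(E_i^*,E_{i+1}^*)} \;=\; (N+1)\left(\frac{1}{f(T_L,T_L)}+O(T_R-T_L)\right),
$$
and hence
$$
  \kappa \;=\; \frac{1}{2\bigl(1/f(T_L,T_L)+O(T_R-T_L)\bigr)} \;=\; \frac{1}{2}f(T_L,T_L)+O(T_R-T_L),
$$
with the implied constant independent of $N$.

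The argument is elementary once the equilibrium structure of Lemma~\ref{lem61} is available, so there is no real obstacle. The two points requiring a little care are getting the signs right so that all $N+1$ summands contribute $+c^*/(2M)$ rather than partially cancelling — this is precisely where the chain of equilibrium identities in Lemma~\ref{lem61} is used — and checking that the $O(T_R-T_L)$ remainder is uniform in the chain length $N$, which holds because each link contributes an error with an $N$-independent constant and these errors are then averaged over the $N+1$ links.
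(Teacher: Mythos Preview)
Your proof is correct and follows essentially the same route as the paper: compute the expectation in \eqref{kappa} to obtain $\kappa=(N+1)c^{*}/\bigl(2(T_{R}-T_{L})\bigr)$, then use the telescoping identity $\sum_{i=0}^{N}c^{*}/f(E_{i}^{*},E_{i+1}^{*})=T_{R}-T_{L}$ from Lemma~\ref{lem61}. The only cosmetic difference is in the final estimate: the paper sandwiches $c^{*}$ directly via the monotonicity assumption (b), obtaining $f(T_{L},T_{L})(T_{R}-T_{L})/(N+1)<c^{*}<f(T_{R},T_{R})(T_{R}-T_{L})/(N+1)$ and then Taylor-expanding $f$, whereas you pass through the harmonic-mean formula and use the $C^{1}$ assumption (a) to control $1/f$; both give the same $O(T_{R}-T_{L})$ remainder uniformly in $N$.
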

\begin{proof}
Taking the expectation, it is easy to see that
$$
  \kappa = \frac{1}{2(T_{R} - T_{L})}\sum_{i = 0}^{N}f(E_{i}^{*},
  E_{i+1}^{*})(E_{i+1}^{*} - E_{i}^{*}) = \frac{c^{*}(N+1)}{2(T_{R} -
    T_{L})} \,.
$$
By the definition of $c^{*}$, we have
$$
  \sum_{i = 0}^{N} \frac{c^{*}}{f(E_{i}^{*}, E_{i+1}^{*}) } = T_{R} -
  T_{L} \,.
$$
By the monotonicity of $f$, we have
$$
  \frac{1}{N+1}f(T_{L}, T_{L})(T_{R} - T_{L}) < c^{*}
  < \frac{1}{N+1}f(T_{R}, T_{R})(T_{R} - T_{L}) \,.
$$
The result follows from a Taylor expansion of $f$. 
\end{proof}

\section{Central limit theorem}

Let
$$
  \Gamma^{M}(t) = \sqrt{M}(\Theta^{M}(t) - \bar{\Theta}(t) ) \,,
$$
where $\bar{\Theta}(t)$ solves equation \eqref{averaging}. 

The main result of this section is the following Theorem.

\begin{thm}
\label{thm61}
For any finite $T > 0$, 
$$
  \lim_{M\rightarrow \infty} \Gamma^{M}(t) = \bar{\Gamma}(t)
$$
almost surely, where $\bar{\Gamma}(t)$ solves the time-dependent stochastic
differential equation
\begin{eqnarray}
\label{CLT}
  \mathrm{d} \bar{\Gamma}(t) &=& D(R( \bar{\Theta}(t))
  \bar{\zeta}(\bar{\Theta}(t)) )\bar{\Gamma}(t) \mathrm{d}t + H(\bar{\Theta}(t)) \mathrm{d}\mathbf{W}_{t}\\\nonumber
\bar{\Gamma}(0) &=& 0 \,,
\end{eqnarray}
where 
$$
  H( \mathbf{E}) = 
\begin{bmatrix}
V_{0}(T_{L}, E_{1}) & V(E_{1}, E_{2}) &0&
0 &\cdots &\cdots \\
0&V(E_{1}, E_{2})& V(E_{2}, E_{3})
&0&\cdots &\cdots\\
\vdots &\vdots &\vdots&\vdots&\vdots&\vdots\\
0&\cdots&\cdots&0&V(E_{N-1}, E_{N})&V_{N}(E_{N}, T_{R})\\
\end{bmatrix} 
$$
is an $N \times (N+1)$-matrix valued function on $\mathbb{R}^{N}$, 
$$
  V(x_{1}, x_{2}) = \sqrt{ f(x_{1},x_{2})\left ( \frac{1}{4}x_{1}^{2} +
    \frac{1}{6}x_{1}x_{2} + \frac{1}{4} x_{2}^{2}\right )} \,,
$$
$$
  V_{0}(x_{1},x_{2}) = \sqrt{ f(x_{1},x_{2})\left (\frac{3}{4}x_{1}^{2} + \frac{1}{6}x_{1}x_{2} +
  \frac{1}{4}x_{2}^{2} \right )}\,,
$$
$$
  V_{N}(x_{1},x_{2}) = \sqrt{ f(x_{1},x_{2})\left (\frac{1}{4}x_{1}^{2} + \frac{1}{6}x_{1}x_{2} +
  \frac{3}{4}x_{2}^{2} \right )}\,,
$$
and $\mathrm{d} \mathbf{W}_{t}$ is the white noise in
$\mathbb{R}^{N+1}$. 
\end{thm}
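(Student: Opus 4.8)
The plan is to run the three--step scheme of Section 4, mimicking the proof of Theorem \ref{LLN} but now retaining the second--order (quadratic variation) terms that were discarded there. Throughout write $F = R\bar\zeta$, so that $\bar\Theta$ solves $\dot{\bar\Theta} = F(\bar\Theta)$ and $DF$ is the Jacobian appearing in \eqref{CLT}, and recall $\Gamma^{M}(t) = \sqrt{M}(\Theta^{M}(t) - \bar\Theta(t))$.

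First I would establish a uniform second--moment bound $\mathbb{E}[\sup_{s\le T}\|\Gamma^{M}(s)\|^{2}] \le C$, the fluctuation--scale refinement of Lemma \ref{unibound}. This should follow from Dynkin's formula applied to $\|\Gamma^{M}(t)\|^{2}$: the drift splits into a term bounded by $C\,\mathbb{E}[\|\Gamma^{M}(s)\|^{2}]$, coming from $\sqrt{M}(F(\Theta^{M})-F(\bar\Theta))$ and using that $f,f_{1},f_{2}$ are bounded on the compact range of $\bar\Theta$ together with Lemma \ref{unibound}, plus an $O(1)$ term equal to $M\,R(\Theta^{M})$ times the jump second moment $\mathbb{E}[\|X^{M}_{i}\|^{2}] = O(M^{-2})$ (Beta moment estimates, cf. Lemma \ref{beta}), after which Gronwall closes the estimate. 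Given this bound, tightness of $\{\Gamma^{M}\}$ in $D([0,T],\mathbb{R}^{N})$ follows from Theorem \ref{kurtztight} by the same computation as in Lemma \ref{lem57}: over a window of length $h$ there are $\mathrm{Pois}(MKh)$ jumps, each of size $O(M^{-1/2})$, so the compensated increment has conditional second moment dominated by some $\gamma(h,T)$ with $\lim_{h\to 0}\limsup_{M}\mathbb{E}[\gamma(h,T)] = 0$.

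Next I would identify the martingale problem. Let $\mathcal{G}^{M}_{t}$ be the time--inhomogeneous generator of $\Gamma^{M}$: a deterministic drift $-\sqrt{M}\,\dot{\bar\Theta}(t) = -\sqrt{M}F(\bar\Theta(t))$ together with jumps $\gamma \mapsto \gamma + \sqrt{M}X^{M}$ at rate $M\,R(\bar\Theta(t)+M^{-1/2}\gamma)$, so that for $A \in C_{c}^{\infty}(\mathbb{R}^{N})$ the process $A(\Gamma^{M}(t)) - A(\Gamma^{M}(0)) - \int_{0}^{t}\mathcal{G}^{M}_{s}A(\Gamma^{M}(s))\,\mathrm{d}s$ is a martingale. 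Taylor expanding $A(\gamma + \sqrt{M}X^{M})$ to second order and inserting $\mathbb{E}[X^{M}_{i}] = M^{-1}\bar\zeta$, $\mathbb{E}[B] = M^{-1}$, $\mathbb{E}[B^{2}] = 2M^{-2}(1+o(1))$, the elementary moments of the uniform redistribution variables, and the cubic remainder $\mathbb{E}[\|X^{M}_{i}\|^{3}] = O(M^{-3})$, the first--moment part of the jump term equals $\sqrt{M}F(\bar\Theta(t)+M^{-1/2}\gamma)\cdot\nabla A$, which cancels the deterministic drift up to $\sqrt{M}(F(\bar\Theta(t)+M^{-1/2}\gamma) - F(\bar\Theta(t)))\cdot\nabla A \to DF(\bar\Theta(t))\gamma\cdot\nabla A$, while the second--moment part converges to $\tfrac{1}{2}(HH^{T})(\bar\Theta(t)):D^{2}A(\gamma)$, the entries of $HH^{T}$ being the limits of $M^{2}R(\bar\Theta(t))\,\mathbb{E}[X^{M}_{i}X^{M}_{j}]$ (the quadratic forms $V,V_{0},V_{N}$ are precisely the per--clock contributions summed over the two clocks adjacent to each site, the boundary ones differing because $\mathrm{Exp}(T_{L}),\mathrm{Exp}(T_{R})$ have second moment $2T_{L}^{2},2T_{R}^{2}$). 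All error terms are controlled on the event $\{\sup_{s\le T}\|\Theta^{M}(s)\| \le C\} \cap \{\max_{i,j}B^{(i)}_{j} \le M^{\epsilon-1}\}$, outside of which their contribution is negligible since $A$ is bounded (Lemmas \ref{unibound} and \ref{beta}). Passing to the limit along a convergent subsequence, every limit point of $\Gamma^{M}$ solves the martingale problem for $\mathcal{G}_{t}A(\gamma) = DF(\bar\Theta(t))\gamma\cdot\nabla A + \tfrac{1}{2}(HH^{T})(\bar\Theta(t)):D^{2}A$, the generator of \eqref{CLT}.

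Finally I would invoke uniqueness. Since $\bar\Theta(s)$ stays in a fixed compact set for $s \in [0,T]$ and $f \le K$, the coefficients $a(s,\cdot) = (HH^{T})(\bar\Theta(s))$ and $b(s,\gamma) = DF(\bar\Theta(s))\gamma$ are, respectively, bounded and linear in $\gamma$, so the growth hypotheses of Theorem \ref{book1022} hold with $C_{T} = \sup_{s\le T}(\|(HH^{T})(\bar\Theta(s))\| + \|DF(\bar\Theta(s))\|)$; hence the martingale problem for $\mathcal{G}_{t}$ is well posed and its unique solution is the linear (time--dependent Ornstein--Uhlenbeck) SDE \eqref{CLT}. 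Together with tightness this yields $\Gamma^{M} \Rightarrow \bar\Gamma$ in $D([0,T],\mathbb{R}^{N})$, and the almost--sure mode of convergence is recovered, exactly as in Theorem \ref{LLN}, by realising all $\Phi^{M}$ on a single probability space through the prescribed uniforms -- representing $B^{(i)} \sim \mathrm{Beta}(1,M-1)$ as $1-(U^{(i)})^{1/(M-1)}$, so that $MB^{(i)} \to -\log U^{(i)}$ -- and running the comparison pathwise. The hard part is the first step together with the decoupling it enables in the second: one must show that replacing $\Theta^{M}$ by $\bar\Theta$ inside the coefficients of $\mathcal{G}^{M}$ costs only $o(1)$ on the fluctuation scale, and this is where the uniform $L^{2}$ control of $\Gamma^{M}$ and the ``prescribed randomness'' device (needed so that the $O(M)$ accumulated jumps, the Poisson tails and the Beta tails do not spoil the estimate -- precisely the role of assumption (c)) both come in.
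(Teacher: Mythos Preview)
Your proposal is correct and follows essentially the same three-step martingale-problem strategy as the paper: tightness of $\{\Gamma^{M}\}$ (the paper's Lemma \ref{gammatight}), identification of the limiting generator via second-order Taylor expansion of $\mathcal{L}^{M}$ (Lemmas \ref{lem64} and \ref{covariance}), and well-posedness via Theorem \ref{book1022} (Lemma \ref{lem65}). The only notable difference is in the tightness step: you propose to get the uniform $L^{2}$ bound on $\Gamma^{M}$ by Dynkin's formula applied to $\|\Gamma^{M}\|^{2}$ plus Gronwall, whereas the paper argues more concretely by decomposing $\Gamma^{M}(t_{N})$ as a sum of centered, independent increments $\hat\zeta_{k}=\sqrt{M}(\zeta-R\bar\zeta\,\mathcal{E}_{k})$ plus a discrete drift $\sum D(R\bar\zeta)\Gamma^{M}(t_{k})(t_{k+1}-t_{k})$, and then closing an iterative bound on $\max_{k\le N}\mathbb{E}\|\Gamma^{M}(t_{k})\|^{2}$; both routes deliver the same $\mathbb{E}\|\Gamma^{M}(t+h)-\Gamma^{M}(t)\|^{2}\le Ch$ needed for Theorem \ref{kurtztight}.
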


The proof of Theorem \ref{CLT} is divided into the following steps. We
first prove the tightness of $\{\Gamma^{M}(t), t \in [0, T]\}$ by using Theorem
\ref{kurtztight}. Then Lemma
\ref{lem64} shows that any sequential limit of $\Gamma^{M}(t)$ solves
a martingale problem. The second order derivative term in this
martingale problem is explicitly calculated in Lemma
\ref{covariance}. Finally, Lemma \ref{lem65} shows
the uniqueness of solutions to the martingale problem described in
Lemma \ref{lem64}. Theorem \ref{CLT} follows from these lemmata. 

\begin{lem}
\label{gammatight}
The sequence $\{\Gamma^{M}(t), t \in[0, T]\}$ is tight in $D([0, T],
\mathbb{R}^{N}\}$. 
\end{lem}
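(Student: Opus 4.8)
The plan is to verify the two conditions of Theorem~\ref{kurtztight} (Ethier--Kurtz) directly for $\Gamma^M(t) = \sqrt M(\Theta^M(t) - \bar\Theta(t))$, reusing essentially all of the work already done for $\Theta^M$ in Lemmas~\ref{unibound}--\ref{lem57}, but now carrying the extra factor $\sqrt M$. Since $\bar\Theta(t)$ solves a smooth autonomous ODE with globally bounded $f$, it is Lipschitz on $[0,T]$, so $|\bar\Theta(t+h) - \bar\Theta(t)| \le C_0 h$, and this contributes only $\sqrt M\, C_0 h$ to increments of $\Gamma^M$ — which is harmless after the $h\to 0$ limit \emph{provided} we can show the genuinely stochastic part of $\sqrt M(\Theta^M(t+h) - \Theta^M(t))$ has conditional expectation $O(h)$ uniformly in $M$. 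So the real content is a modulus-of-continuity estimate for the martingale-plus-drift decomposition of $\sqrt M\,\Theta^M$.

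First I would decompose the increment over $[t, t+h]$ the same way as in Lemma~\ref{lem57}: conditioned on $\mathcal F_t$, the number of energy exchanges $\mathbf N$ on $[t,t+h]$ is dominated by a Poisson random variable with rate $hKM$, and each exchange changes a coordinate of $\Theta^M$ by an amount bounded by $(T_0 + I_T)B^{(i)}$ (internal) or $\max\{T_L,T_R\}(-\log p^{(i)})B^{(i)}$ (boundary), where $B^{(i)}$ is Beta$(1,M-1)$ with mean $M^{-1}$. Hence
$$
  \mathbb E\big[\,\|\Theta^M(t+h) - \Theta^M(t)\|\,\big|\,\mathcal F_t\big] \le N\cdot \mathbb E\big[(T_0 + I_T)\big]\cdot hKM\cdot M^{-1} + (\text{boundary term}) \le C\,(1 + \mathbb E[I_T])\,h,
$$
using $\mathbb E[\mathbf N\,|\,\mathcal F_t] \le hKM$ and $\mathbb E[B^{(i)}] = M^{-1}$, so the factor $M$ cancels. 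Multiplying by $\sqrt M$ is the problem: this only gives $O(\sqrt M\, h)$, which does not vanish. The fix is to note that the \emph{drift} part of this increment is $\sqrt M$ times $\int_t^{t+h} R(\Theta^M(s))\bar\zeta(\Theta^M(s))\,\mathrm ds$, which matches (up to $O(M^{-1/2})$ by Lemma~\ref{martingale}) the increment $\sqrt M\int_t^{t+h} F(\bar\Theta(s))\,\mathrm ds = O(\sqrt M\, h)$ already carried by $\bar\Theta$ — and these two $O(\sqrt M\, h)$ drifts nearly cancel in $\Gamma^M$. Precisely, writing $\Gamma^M(t+h) - \Gamma^M(t) = \sqrt M\,\mathcal M^M_{t,t+h} + \sqrt M\int_t^{t+h}\!\big(R\bar\zeta(\Theta^M(s)) - F(\bar\Theta(s))\big)\mathrm ds$ with $\mathcal M^M$ the compensated (martingale) part, the drift term is bounded by $\sqrt M\int_t^{t+h} K A\,(\Theta^M(s) - \bar\Theta(s))\,\mathrm ds = \int_t^{t+h} K A\,\Gamma^M(s)\,\mathrm ds$ (using the Lipschitz bound from the proof of Theorem~\ref{LLN}), whose conditional expectation is $O(h\sup_{s\le T}\mathbb E\|\Gamma^M(s)\|)$, and the $O(M^{-1/2})$ Taylor-remainder discrepancy times $\sqrt M h$ is $O(h)$. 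For the martingale part, I would bound $\mathbb E[\|\sqrt M\,\mathcal M^M_{t,t+h}\|^2\,|\,\mathcal F_t]$ by the predictable quadratic variation: each jump contributes variance $O(M\cdot M^{-2}) = O(M^{-1})$ to $\mathcal M^M$, there are $O(hM)$ jumps, so $\mathbb E[\|\sqrt M\,\mathcal M^M_{t,t+h}\|^2] = M\cdot O(hM)\cdot O(M^{-2}) = O(h)$.

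Assembling these, with $\beta = 2$ and $q(x,y) = \min\{1,|x-y|\}$, the Ethier--Kurtz quantity $\gamma_M(\delta,T)$ can be taken of the form $C\delta\,(1 + \mathbb E[I_T]) + C\delta\sup_{s\le T}\mathbb E\|\Gamma^M(s)\| + C\delta + \sqrt M C_0\delta$... — and here is where I must be careful: the $\sqrt M C_0 \delta$ term from the smooth part of $\bar\Theta$ must \emph{not} be split off from $\Theta^M$, but kept together with the $\Theta^M$ drift so that the cancellation above applies; the genuine bound on $\gamma_M$ has no free $\sqrt M$. Two inputs are still needed: (i) $\mathbb E[I_T]$ is bounded uniformly in $M$ — this is exactly Lemma~\ref{unibound}; and (ii) $\sup_{M}\sup_{s\le T}\mathbb E\|\Gamma^M(s)\| < \infty$, i.e. an a priori second-moment bound on $\Gamma^M$ itself. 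I expect \textbf{(ii) to be the main obstacle}, since it is precisely the statement that the CLT scaling is the right one; I would establish it by a Gronwall argument on $g_M(t) := \mathbb E\|\Gamma^M(t)\|^2$, differentiating via the generator $\mathcal L^M$ applied to $x\mapsto \|x\|^2$: the drift contributes $\le C g_M(t)$ by the same Lipschitz bound, and the jump (diffusive) part contributes an $O(1)$ term because $M$ (from fast time scaling) times $M^{-1}$ (variance per jump) times $M$ (the $(\sqrt M)^2$ in $\Gamma^M$'s definition) — wait, that is $O(M)$; the correct accounting is that each jump changes $\Gamma^M$ by $O(\sqrt M\cdot M^{-1}) = O(M^{-1/2})$, its square is $O(M^{-1})$, there are $O(M)$ jumps per unit time, giving an $O(1)$ additive contribution to $g_M'(t)$, uniformly in $M$, modulo the second-moment bound on $\Theta^M$ from Lemma~\ref{unibound}. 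Then $g_M'(t) \le C g_M(t) + C$ with $g_M(0) = 0$ gives $g_M(t) \le C(e^{Ct}-1)$, uniformly in $M$ and $t \in [0,T]$. Condition (b) of Theorem~\ref{kurtztight} is immediate since $\Gamma^M(0) = 0$ and the same increment estimate controls $\mathbb E[q^2(\Gamma^M(\delta),0)] \le \mathbb E\|\Gamma^M(\delta)\|^2 \to 0$ as $\delta\to 0$ uniformly in $M$. This completes the verification and hence the tightness.
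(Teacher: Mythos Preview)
Your approach is correct and matches the paper's in all essential ingredients: the martingale-plus-drift decomposition of the increment of $\Gamma^M$, the $O(h)$ bound on the quadratic variation of the martingale part (each jump changes $\Gamma^M$ by $O(M^{-1/2})$, there are $O(Mh)$ jumps), and a Gronwall-type argument to close the a priori bound $\sup_{t\le T}\mathbb E\|\Gamma^M(t)\|^2 < \infty$, which is indeed the crux. The paper packages these same ideas slightly differently: it works discretely over the jump times $t_1,\dots,t_N$, writes
\[
  \Gamma^M(t_N) = \sum_{k} \hat\zeta_k \;+\; \sum_{k} D(R\bar\zeta)(\bar\Theta(t_k))\,\Gamma^M(t_k)\,(t_{k+1}-t_k) \;+\; \text{remainders},
\]
bounds the first sum by independence of the centered $\hat\zeta_k$, and closes the second via a discrete recursion $M_N \le C_3(N/M)^2(1 + M_N)$ on $M_N := \max_{k\le N}\mathbb E\|\Gamma^M(t_k)\|^2$, which forces a small-step choice $N = c_0 M$ with $C_3 c_0^2 < 1/2$ and then iterates over $[0,T]$. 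Your continuous-time version, differentiating $g_M(t) = \mathbb E\|\Gamma^M(t)\|^2$ via the generator to get $g_M'(t) \le C g_M(t) + C$, is cleaner and avoids this small-step bootstrap; both arrive at the second-moment increment estimate $\mathbb E\|\Gamma^M(t+h) - \Gamma^M(t)\|^2 \le Ch$ needed for Theorem~\ref{kurtztight}. One small remark: for linear test functions the generator identity $\mathcal L^M E_i = F_i(\mathbf E)$ is exact, so the ``$O(M^{-1/2})$ Taylor-remainder discrepancy'' you mention does not actually arise in the drift of $\Gamma^M$; the only place a remainder enters is when you linearize $F(\Theta^M) - F(\bar\Theta)$, and there a Lipschitz bound (valid on the high-probability compact set from Lemma~\ref{unibound}) suffices, as you use.
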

\begin{proof}
By Theorem \ref{kurtztight}, it is sufficient to show that for any $t
\in [0, T]$ and any sufficiently small $h$,
$$
  \mathbb{E}[\| \Gamma^{M}(t+h) - \Gamma^{M}(t) \|^{2}] \leq C |h| 
$$
for all sufficiently large $M$, where $C$ is a constant
independent of $M$.

Without loss of generality assume $\Gamma^{M}(t) = 0$. Recall that $\Theta^{M}(t)
\rightarrow \bar{\Theta}(t)$. Let $t_{i}$ be the time of $i$-th clock
ring after $t$. For each $N$, we have
\begin{align*}
  \Gamma^{M}(t_{N}) =& \sqrt{M}\sum_{k = 0}^{N-1}\left[ \zeta(
                      \Theta^{M}(t_{k}), \omega_{k}) - (
                      \bar{\Theta}(t_{k+1}) - \bar{\Theta}(t_{k})
                      )\right]\\
=&\sqrt{M}\sum_{k = 1}^{N-1}\left [ \zeta( \Theta^{M}(t_{k}),
  \omega_{k}) - R \bar{\zeta}( \bar{\Theta}(t_{k}))(t_{k+1} -
  t_{k})\right ] + O(M^{-1/2})\\
=&\sqrt{M}\sum_{k = 1}^{N-1}\left[ \zeta(\Theta^{M}(t_{k}), \omega_{k}) -
  R\bar{\zeta}(\Theta^{M}(t_{k}))(t_{k+1} - t_{k}) \right ] \\
&+ \sum_{k =
  1}^{N-1}D(R \bar{\zeta})(
  \bar{\Theta}(t_{k}))\sqrt{M}(\Theta^{M}(t_{k}) -
  \bar{\Theta}(t_{k}))(t_{k+1} - t_{k}) \\
&+ O(M^{-1/2}) + o(M^{-3/2}\|
  \Gamma^{M}(t_{k})\|^{2} ) \\
=&\sum_{k = 0}^{N-1} \hat{\zeta}(\Theta^{M}(t_{k}), \omega_{k}) +
  \sum_{k = 0}^{N-1} D(R \bar{\zeta})(
  \bar{\Theta}(t_{k})) \Gamma^{M}(t_{k})(t_{k+1} - t_{k}) \\
&+ O(M^{-1/2}) + o(M^{-3/2}\|
  \Gamma^{M}(t_{k})\|^{2} ) \,,
\end{align*}
where
$$
  \hat{\zeta}(\Theta^{M}(t_{k}), \omega_{k}) = \sqrt{M}\left (\zeta(\Theta^{M}(t_{k}), \omega_{k}) -
  R\bar{\zeta}(\Theta^{M}(t_{k})) \mathcal{E}_{k}\right )
$$
are independent random variables with zero mean, $\mathcal{E}_{k}$ is an
exponential random variable with mean $R(\Theta^{M}(t_{k}))M$. Easy
calculation shows that
$$
  \mathbb{E}[\|\hat{\zeta}(\Theta^{M}(t_{k}), \omega_{k})\|^{2} ] =
  O(M^{-1}) \,.
$$
Denote $\hat{\zeta}(\Theta^{M}(t_{k}), \omega_{k})$ by
$\hat{\zeta}_{k}$. Since $\hat{\zeta}_{i}$ is independent of
$\hat{\zeta}_{j}$ for $i \neq j$, we have
$$
  \mathbb{E}[\| \sum_{k = 0}^{N-1}\hat{\zeta}(\Theta^{M}(t_{k}),
  \omega_{k}) \|^{2} ] = \sum_{i = 0}^{N-1}
  \mathbb{E}[\|\hat{\zeta}_{i} \|^{2}] \leq C_{0} NM^{-1} 
$$
for some $C_{0} < \infty$. Then there are constants $C_{0}, C_{1},
C_{2}, C_{3} < \infty$ such that
\begin{align*}
  \mathbb{E}[\| \Gamma^{M}(t_{N})\|^{2}] &\leq C_{1} \mathbb{E}[\|
  \sum_{k = 0}^{N-1} \hat{\zeta}_{k}\|^{2} ] + C_{1} \mathbb{E}[\|
  \sum_{k = 0}^{N-1}D(R \bar{\zeta})( \bar{\Theta}(t_{k}))
  \Gamma^{M}(t_{k})(t_{k+1} - t_{k})\|^{2}] + O(M^{-1})\\
&\leq C_{1} \mathbb{E}\left[ 
  \sum_{k = 0}^{N-1}D(R \bar{\zeta})(\bar{\Theta}(t_{k}))(t_{k+1} -
  t_{k})^{2}\right ]\cdot \mathbb{E}\left[ \sum_{k =
  0}^{N-1}\|\Gamma^{M}(t_{k})\|^{2}\right]\\
&+ C_{0}C_{1}\frac{N^{2}}{M^{2}} + O(M^{-1})\\
&\leq C_{0}C_{1}\frac{N^{2}}{M^{2}} + C_{0}C_{2}NM^{-2}
  \mathbb{E}\left[ \sum_{k =
  0}^{N-1}\|\Gamma^{M}(t_{k})\|^{2} \right]\\
&\leq C_{3}\frac{N^{2}}{M^{2}} + C_{3}\frac{N}{M^{2}} \sum_{k =
  0}^{N-1} \mathbb{E}\| \Gamma^{M}(t_{k})\|^{2} + O(M^{-2}) \,.
\end{align*}
Let $M_{N} = \max_{k\leq N}
\mathbb{E}[\|\Gamma^{M}(t_{k})\|^{2}]$. We have
$$
  M_{N} \leq C_{3}\frac{N^{2}}{M^{2}} + C_{3}\frac{N^{2}}{M^{2}} M_{N}
$$
Let $c_{0}$ be a sufficiently small number such that $C_{3}c_{0}^{2} <
1/2$, we have
$$
  M_{N} \leq C_{3}c_{0}^{2}(1 - C_{3}c_{0}^{2})^{-1}
$$
for $N = c_{0} M$. Hence
$$
  \mathbb{E}[\| \Gamma^{M}(t_{i}) - \Gamma^{M}(t)\|^{2}] \leq C_{3}c_{0}^{2}(1 - C_{3}c_{0}^{2})^{-1}
$$
for all $i < c_{0}M$. Since $c_{0} = O(1)$, we have $\|\Theta^{M}(s) -
\bar{\Theta}(s) \| = O(M^{-1/2})$ for all $0 < s < c_{0}$. This
estimate can be extended to all $s \in [0, T]$ because $T = O(1)$. 

\medskip

Now choose $h = 0.4 c_{0}$ and fix $N$ to be
$$
  N = M \int_{t}^{t + 2h} R( \bar{\Theta}(s)) \mathrm{d}s \,.
$$
Since $\|\Theta^{M}(s) - \bar{\Theta}(s) \| = O(M^{-1/2})$, $N$ is a
good approximation of total number of energy exchanges between $t$ and
$t + 2h$. In
addition, when $M$ is sufficiently large, the probabilities of $t_{N} -
t > c_{0}$ and $t +h > t_{N}$ all become negligible. Therefore, we have 
$$
  \mathbb{E}[\|\Gamma^{M}(t+h) - \Gamma^{M}(t)\|^{2}] \leq \max_{k\leq N}
\mathbb{E}[\|\Gamma^{M}(t_{k}) - \Gamma^{M}(t)\|^{2}] \leq 12.5 C_{3} h 
$$
for all sufficiently small $h > 0$. Since $c_{0}$ (hence $h$) can be
arbitrarily small, the proof is completed by applying
Theorem \ref{kurtztight}.
\end{proof}

\begin{lem}
\label{lem64}
for any test function $A \in C^{2}_{0}( \mathbb{R}^{N})$ and any
$t \in [0, T]$, we have
\begin{align}
  \label{eq6-3}
&\\\nonumber
&\lim_{M \rightarrow \infty} \mathbb{E}\left[  A(\Gamma^{M}(t)) -
  A(\Gamma^{M}(0)) - \int_{0}^{\infty} \nabla A(\Gamma^{M}(s)) \cdot D
  (R \bar{\zeta}(\bar{\Theta}(s)))\Gamma^{M}(s) \mathrm{d}s\right .\\\nonumber
& - \left . \frac{1}{2}\int_{0}^{\infty} \sum_{i = 1}^{N}\sum_{j = 1}^{N}
  \frac{\partial^{2}}{\partial \gamma_{i}\partial
  \gamma_{j}}A(\Gamma^{M}(s))R(\bar{\Theta}(s))\Sigma_{ij}(\bar{\Theta}(s))
  \mathrm{d}s \right ] = 0 \,,
  \nonumber
\end{align}
where 
$$
  \Sigma = \lim_{M \rightarrow \infty} M^{2} \mathbb{E}[
  \zeta\zeta^{T}] \,.
$$
\end{lem}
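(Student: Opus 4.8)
The plan is to run the standard generator/Dynkin computation for the time‑inhomogeneous Markov process $\Gamma^{M}(t)=\sqrt{M}(\Theta^{M}(t)-\bar{\Theta}(t))$. Since $\Theta^{M}(t)$ is a pure jump process with jump rate $MR(\Theta^{M}(t))$ and increments $\zeta(\Theta^{M}(t),\omega)$, while $\bar{\Theta}$ is a smooth deterministic curve solving \eqref{averaging}, the generator of $\Gamma^{M}$ acting on $A\in C^{2}_{0}(\mathbb{R}^{N})$ is
$$
\mathcal{L}^{M}_{s}A(\gamma)=MR(\Theta)\,\mathbb{E}_{\omega}\!\left[A(\gamma+\sqrt{M}\,\zeta(\Theta,\omega))-A(\gamma)\right]-\sqrt{M}\,R(\bar{\Theta}(s))\bar{\zeta}(\bar{\Theta}(s))\cdot\nabla A(\gamma),
$$
where $\Theta=\bar{\Theta}(s)+M^{-1/2}\gamma$ and the drift term is $\tfrac{\mathrm d}{\mathrm ds}(-\sqrt{M}\bar{\Theta}(s))\cdot\nabla A(\gamma)$. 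Because assumption (c) bounds all clock rates by $K(N+1)$, the process does not explode, so Dynkin's formula makes $A(\Gamma^{M}(t))-A(\Gamma^{M}(0))-\int_{0}^{t}\mathcal{L}^{M}_{s}A(\Gamma^{M}(s))\,\mathrm ds$ a martingale; taking expectations, \eqref{eq6-3} reduces to showing that $\mathbb{E}\int_{0}^{t}|\mathcal{L}^{M}_{s}A(\Gamma^{M}(s))-I_{M}(s)|\,\mathrm ds\to 0$, where $I_{M}(s)$ is the target integrand $\nabla A(\Gamma^{M}(s))\cdot D(R\bar{\zeta})(\bar{\Theta}(s))\Gamma^{M}(s)+\tfrac12\sum_{i,j}\partial^{2}_{\gamma_{i}\gamma_{j}}A(\Gamma^{M}(s))R(\bar{\Theta}(s))\Sigma_{ij}(\bar{\Theta}(s))$.

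Next I would Taylor expand. By Lemma \ref{beta}, on an event of probability $1-O(M)e^{-M^{\epsilon}}$ every Beta variable entering a jump before time $t$ is at most $M^{\epsilon-1}$, so there $\sqrt{M}\,\zeta=O(M^{\epsilon-1/2})$; on the complement I bound $|A(\gamma+\sqrt{M}\zeta)-A(\gamma)|\le 2\|A\|_{\infty}$ and use that the rate is $\le MK(N+1)$, so that contribution is $O(Me^{-M^{\epsilon}})\to0$. On the good event, $A(\gamma+v)-A(\gamma)=\nabla A(\gamma)\cdot v+\tfrac12 v^{T}D^{2}A(\gamma)v+r$ with $|r|\le\tfrac12|v|^{2}\omega_{A}(|v|)$, $\omega_{A}$ being the (globally uniform, by compact support) modulus of continuity of $D^{2}A$. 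Multiplying by $MR(\Theta)$: the first‑order term becomes $M^{3/2}R(\Theta)\nabla A(\gamma)\cdot\mathbb{E}_{\omega}\zeta=\sqrt{M}\,\nabla A(\gamma)\cdot R(\Theta)\bar{\zeta}(\Theta)$ using the identity $\mathbb{E}_{\omega}\zeta=M^{-1}\bar{\zeta}(\Theta)$ from Section 5; the second‑order term is $\tfrac12 M^{2}R(\Theta)\sum_{i,j}\partial^{2}_{ij}A(\gamma)\,\mathbb{E}_{\omega}[\zeta_{i}\zeta_{j}]$; and the remainder contributes $O(R(\Theta)\,\omega_{A}(M^{\epsilon-1/2})\,M^{2}\mathbb{E}_{\omega}|\zeta|^{2})$, which vanishes because $M^{2}\mathbb{E}_{\omega}|\zeta|^{2}$ is a polynomially bounded function of the energies while $\omega_{A}(M^{\epsilon-1/2})\to0$.

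The conceptual step is the cancellation of the two leading $\sqrt{M}$ terms. Combining $\sqrt{M}\,\nabla A(\gamma)\cdot R(\Theta)\bar{\zeta}(\Theta)$ with $-\sqrt{M}\,\nabla A(\gamma)\cdot R(\bar{\Theta}(s))\bar{\zeta}(\bar{\Theta}(s))$ and using $\Theta-\bar{\Theta}(s)=M^{-1/2}\Gamma^{M}(s)$ together with $R\bar{\zeta}\in C^{1}$,
$$
\sqrt{M}\big(R\bar{\zeta}(\Theta)-R\bar{\zeta}(\bar{\Theta}(s))\big)=D(R\bar{\zeta})(\bar{\Theta}(s))\,\Gamma^{M}(s)+O\!\big(M^{-1/2}\|\Gamma^{M}(s)\|^{2}\big),
$$
which reproduces the drift part of $I_{M}(s)$; the residual $O(M^{-1/2}\|\Gamma^{M}(s)\|^{2})$ disappears in $\mathbb{E}\int_{0}^{t}(\cdot)\,\mathrm ds$ since $\sup_{s\le T}\mathbb{E}\|\Gamma^{M}(s)\|^{2}$ is bounded uniformly in $M$ by (the proof of) Lemma \ref{gammatight}. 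For the second‑order term, $M^{2}\mathbb{E}_{\omega}[\zeta_{i}\zeta_{j}]$ evaluated at $\Theta^{M}(s)$ equals $\Sigma_{ij}$ plus an $O(M^{-1})$ correction (from $M^{2}\mathbb{E}[B^{2}]=2-O(M^{-1})$ for $B\sim\mathrm{Beta}(1,M-1)$), and then continuity of $\Sigma$ and $\Theta^{M}(s)\to\bar{\Theta}(s)$ give convergence to $\Sigma_{ij}(\bar{\Theta}(s))$; the existence and explicit form of $\Sigma$ come from Lemma \ref{covariance}. Throughout, $\Theta^{M}(s)$ can be kept in a fixed compact set: on the good event $\mathcal{L}^{M}_{s}A$ vanishes unless $\Gamma^{M}(s)$—hence $\Theta^{M}(s)=\bar{\Theta}(s)+M^{-1/2}\Gamma^{M}(s)$—lies near $\mathrm{supp}\,A$, and Lemma \ref{unibound} handles the rare large‑energy events.

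The main obstacle is the uniform control of the Taylor remainders rather than any single hard estimate: $\sqrt{M}\zeta$ is only small with high probability, so one must localize via Lemma \ref{beta} and exploit the compact support of $A$ to make the expansion meaningful pointwise, and the $O(M^{-1/2}\|\Gamma^{M}(s)\|^{2})$ left over from linearizing the drift forces one to feed in the a priori second‑moment bound from Lemma \ref{gammatight}. Coupling these two ingredients with the bookkeeping needed so that the first‑order terms cancel exactly (which relies on $\bar{\Theta}$ solving \eqref{averaging}) is where the real work lies.
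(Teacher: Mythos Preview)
Your proposal is correct and follows essentially the same route as the paper: write down the generator of $\Gamma^{M}$, use the martingale property to reduce to comparing $\mathcal{L}^{M}_{s}A$ with the target integrand, Taylor expand in the jump (using Lemma~\ref{beta} to localize), cancel the two $\sqrt{M}$ drift terms via a Taylor expansion of $R\bar{\zeta}$ around $\bar{\Theta}(s)$, and identify the second-order term with $\Sigma$ from Lemma~\ref{covariance}. If anything, your handling of the Taylor remainder via the modulus of continuity of $D^{2}A$ is cleaner than the paper's appeal to ``third derivatives of $A$'' (note $A$ is only assumed $C^{2}_{0}$), and your explicit invocation of the second-moment bound from Lemma~\ref{gammatight} makes transparent what the paper leaves implicit; one minor point is that the residual you write as $O(M^{-1/2}\|\Gamma^{M}(s)\|^{2})$ strictly requires $D(R\bar{\zeta})$ to be Lipschitz, whereas the assumptions give only $C^{1}$, but the weaker $o(\|\Gamma^{M}(s)\|)$ bound (uniform on compacts) suffices and is what the paper effectively uses.
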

\begin{proof}
Let $\mathcal{L}^{M}$ be the infinitesimal generator of
$\Gamma^{M}(t)$. For any ${\bm \gamma} = (\gamma_{1}, \cdots, \gamma_{N})$,
let $\mathbf{E} = \bar{\Theta}(t) + M^{-1/2}{\bm \gamma}$ be the auxiliary variable,
we have
\begin{align*}
  \mathcal{L}^{M}A({\bm \gamma} ) = &Mf(T_{L}, E_{1}) \int_{[0,
  1]^{3}\times \mathbb{R}^{+}} (A(\gamma_{1} - \sqrt{M}(1-p)B_{1}E_{1} +
  \sqrt{M}pB_{2}X_{L}, \cdots, E_{N}) \\
& - A({\bm \gamma}) Q_{L}(p,B_{1},B_{2},X_{L})
  \mathrm{d}p \mathrm{d}B_{1} \mathrm{d}B_{2} \mathrm{d}X_{L}\\
+&Mf(E_{N}, T_{R}) \int_{[0,
  1]^{3}\times \mathbb{R}^{+}} (A(\gamma_{1}, \cdots, \gamma_{N} - \sqrt{M}(1-p)B_{1}E_{N} +
  \sqrt{M}pB_{2}X_{R}) \\
&- A({\bm \gamma})) Q_{R}(p,B_{1},B_{2},X_{L})
  \mathrm{d}p \mathrm{d}B_{1} \mathrm{d}B_{2} \mathrm{d}X_{R}\\
+& M\sum_{i = 1}^{N-1}f(E_{i}, E_{i+1}) \int_{[0, 1]^{3}}(A(\cdots,
   \gamma_{i} - \sqrt{M}(1-p)B_{1}E_{i} + \sqrt{M}pB_{2}E_{i+1}, \\
& \gamma_{i+1} - \sqrt{M}pB_{2}E_{i+1} +
  \sqrt{M}(1-p)B_{1}E_{i}, \cdots) - A( {\bm \gamma})) Q(p, B_{1},B_{2})
  \mathrm{d}p \mathrm{d}B_{1} \mathrm{d}B_{2} \\
-& \sqrt{M}\nabla A( {\bm \gamma})\cdot R( \bar{\Theta}(t))\bar{\zeta}(
  \bar{\Theta}(t)) \,,
\end{align*}
where joint probability density functions $Q_{L}, Q_{R}$, and $Q$ are
same as in the proof of Lemma \ref{martingale}. Similar as the case of
$\Theta^{M}(t)$, for any test function $A( \mathbf{E})$, 
$$
  A( \Gamma^{M}(t)) - A(\Gamma^{M}(0) - \int_{0}^{t} \mathcal{L}^{M}A(
  \Gamma^{M}(s)) \mathrm{d}s 
$$
is a martingale. Hence it is sufficient to show that $\mathcal{L}^{M}(
A(\Gamma^{M}(s))$ is a good approximation of 
$$
  \nabla A(\Gamma^{M}(s)) \cdot D
  (R \bar{\zeta}(\bar{\Theta}(s)))\Gamma^{M}(s) + \frac{1}{2}\sum_{i = 1}^{N}\sum_{j = 1}^{N}
  \frac{\partial^{2}}{\partial \gamma_{i}\partial
  \gamma_{j}}A(\Gamma^{M}(s))R(\bar{\Theta}(s))\Sigma_{ij}(\bar{\Theta}(s)) \,.
$$

Since Beta random variables with parameter $(1, M-1)$ is only
$O(M^{-1})$ small, a Taylor expansion of $A( \mathbf{E})$ gives 
\begin{align*}
&  \mathcal{L}^{M}A( {\bm \gamma} ) \\
=&  \frac{1}{2}M^{1/2}\sum_{i = 1}^{N}[ f(E_{i-1}, E_{i})(E_{i-1}
   -E_{i}) - f(E_{i}, E_{i+1})(E_{i+1} - E_{i})]A_{\gamma_{i}}\\
& -
   M^{1/2}\nabla A( {\bm \gamma} )\cdot R(
   \bar{\Theta}(t))\bar{\zeta}(\bar{\Theta}(t)) \\
&+ \frac{M^{2}}{2}f(T_{L}, E_{0})A_{\gamma_{1}\gamma_{1}}\int_{[0,
                                  1]^{3} \times \mathbb{R}^{+}}
                                  (-(1-p)B_{1}E_{1} + p B_{2}X_{L})^{2}Q_{L}(p,B_{1},B_{2},X_{L})
  \mathrm{d}p \mathrm{d}B_{1} \mathrm{d}B_{2} \mathrm{d}X_{L} \\
&+ \frac{M^{2}}{2}f(E_{N}, T_{R})A_{\gamma_{N}\gamma_{N}}\int_{[0,
                                  1]^{3} \times \mathbb{R}^{+}}
                                  (-(1-p)B_{1}E_{1} + p B_{2}X_{R})^{2}Q_{R}(p,B_{1},B_{2},X_{R})
  \mathrm{d}p \mathrm{d}B_{1} \mathrm{d}B_{2} \mathrm{d}X_{R}\\
&+ \sum_{i = 1}^{N-1} \left ( \frac{M^{2}}{2}f(E_{i}, E_{i+1})  \int_{[0,
  1]^{3}}A_{\gamma_{i}\gamma_{i}} (-(1-p)B_{1}E_{i} + pB_{2}E_{i+1})^{2} + A_{\gamma_{i+1}\gamma_{i+1}}(-p
  B_{2}E_{i+1} \right. \\& \left. + (1-p)B_{1}E_{i})^{2} 
 -2A_{\gamma_{i}\gamma_{i+1}}(-p
  B_{2}E_{i+1} + (1-p)B_{1}E_{i})^{2} Q(p, B_{1},B_{2}) \mathrm{d}p
  \mathrm{d}B_{1} \mathrm{d}B_{2} \right ) \\
&+ M\epsilon({\bm \gamma} )O(M^{-3/2}) \,,
\end{align*}
where the Lagrange reminder $\epsilon( {\bm \gamma} )$ depends on third
derivatives of $A$ and is bounded in a compact set. Note that when
${\bm \gamma} = \Gamma^{M}(s)$, the auxiliary variable $\mathbf{E}$ is
actually $\Theta^{M}(s)$. By the tightness of $\Gamma^{M}(t)$, term
$\Theta^{M}(s) - \bar{\Theta}(s)$ is $O(M^{-1/2})$ small. It remains
to calculate the coefficients of
$A_{\gamma_{i}\gamma_{j}}$, this is done in Lemma
\ref{covariance}. The coefficient of $A_{\gamma_{i}\gamma_{j}}$ is
indeed the $(i,j)$-th entry of $\mathbb{E}[\zeta(\mathbf{E},
\omega^{M})^{T}\zeta_{j}( \mathbf{E}, \omega^{M})]$. We denote the
rescaled limit of $M^{2} \mathbb{E}[
  \zeta( \mathbf{E}, \omega^{M}) \zeta^{T}( \mathbf{E}, \omega^{M})] $
  by $\Sigma$:
$$
  \Sigma( \mathbf{E}) = \lim_{M \rightarrow \infty} M^{2} \mathbb{E}[
  \zeta( \mathbf{E}, \omega^{M}) \zeta^{T}( \mathbf{E}, \omega^{M})] \,.
$$
It follows from the calculation in Lemma \ref{covariance} that the $(i,j)$-th entry of $\mathbb{E}[\zeta(\mathbf{E},
\omega^{M})^{T}\zeta_{j}( \mathbf{E}, \omega^{M})]$ is $O(M^{-1})$
close to $\Sigma_{ij}$. 

\medskip

Some further simplification and a Taylor expansion of $R \bar{\zeta}$ gives 
\begin{align*}
  \mathcal{L}^{M}A( \Gamma^{M}(s)) =& M^{1/2}\nabla A(\Gamma^{M}(s))
  \cdot R \bar{\zeta}(\Theta^{M}(s)) - M^{1/2}\nabla A(
  \Gamma^{M}(s)) \cdot \bar{R}\zeta( \bar{\Theta}(s)) \\
&+\frac{1}{2}\sum_{i = 1}^{N}\sum_{j = 1}^{N} A(\Gamma^{M}(s))R(
  \bar{\Theta}(s)) \Sigma_{ij}(\bar{\Theta}(s)) + M\epsilon({\bm
  \gamma} )O(M^{-3/2}) \\
=& \nabla A(\Gamma^{M}(s))\cdot D(R
   \bar{\zeta}(\bar{\Theta}(s)))\Gamma^{M}(s) + \frac{1}{2}\sum_{i = 1}^{N}\sum_{j = 1}^{N} A(\Gamma^{M}(s))R(
  \bar{\Theta}(s)) \Sigma_{ij}(\bar{\Theta}(s)) \\
&+ M\epsilon({\bm
  \gamma} )O(M^{-3/2})  +  O(M^{-1/2}) \,,
\end{align*}
where $\Sigma$ is the rescaled limit of the second moment of random vector $\zeta$:
$$
  \Sigma( \mathbf{E}) = \lim_{M \rightarrow \infty} M^{2} \mathbb{E}[
  \zeta( \mathbf{E}, \omega^{M}) \zeta^{T}( \mathbf{E}, \omega^{M})] \,.
$$

The proof is completed by letting $M \rightarrow 0$.
\end{proof}

It remains to calculate $\Sigma$, which follows immediately from the
second moment matrix of $X^{M}_{i}$. The following lemma follows from
straightforward calculations.

\begin{lem}
\label{covariance}
Let $\mathbf{E} = (E_{1}, \cdots, E_{N}) \in \mathbb{R}^{N}_{+}$. The
rescaled second moment matrix of
$\zeta(\mathbf{E}, \omega^{M})$ is
$$
  \Sigma = \lim_{M\rightarrow \infty} M^{2}\mathbb{E}[
  \zeta( \mathbf{E}, \omega^{M}) \zeta^{T}( \mathbf{E}, \omega^{M})]  = \frac{1}{R(\mathbf{E})}H(\mathbf{E})H(
  \mathbf{E})^{T} \,,
$$
where 
$$
  H(\mathbf{E}) = 
\begin{bmatrix}
V_{0}(T_{L}, E_{1}) & -V(E_{1}, E_{2}) &0&
0 &\cdots &\cdots \\
0&V(E_{1}, E_{2})& -V(E_{2}, E_{3})
&0&\cdots &\cdots\\
\vdots &\vdots &\vdots&\vdots&\vdots&\vdots\\
0&\cdots&\cdots&0&V(E_{N-1}, E_{N})&-V_{N}(E_{N}, T_{R})\\
\end{bmatrix} 
$$
is a $N \times (N+1)$ matrix, with
$$
  V(x_{1},x_{2}) = \sqrt{ f(x_{1},x_{2})\left (\frac{2}{3}x_{1}^{2} - \frac{1}{3}x_{1}x_{2} +
  \frac{2}{3}x_{2}^{2} \right )}\,,
$$
$$
  V_{0}(x_{1},x_{2}) = \sqrt{ f(x_{1},x_{2})\left (\frac{4}{3}x_{1}^{2} - \frac{1}{3}x_{1}x_{2} +
  \frac{2}{3}x_{2}^{2} \right )}\,,
$$
and
$$
  V_{N}(x_{1},x_{2}) = \sqrt{ f(x_{1},x_{2})\left (\frac{2}{3}x_{1}^{2} - \frac{1}{3}x_{1}x_{2} +
  \frac{4}{3}x_{2}^{2} \right )}\,,
$$
\end{lem}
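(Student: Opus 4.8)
The plan is to condition on which of the $N+1$ Poisson clocks rings, since for a single realization $\zeta(\mathbf{E},\omega^{M})$ is determined by that choice together with the two Beta variables and the redistribution variable. By \eqref{exchange}, if clock $k$ rings then $\zeta(\mathbf{E},\omega^{M}) = J_{k}\mathbf{v}_{k}$, where $\mathbf{v}_{k}=\mathbf{e}_{k+1}-\mathbf{e}_{k}$ for $1\le k\le N-1$, $\mathbf{v}_{0}=\mathbf{e}_{1}$, $\mathbf{v}_{N}=-\mathbf{e}_{N}$, and $J_{k}$ is the net flux from Section~2 (with $E_{0}=T_{L}$, $E_{N+1}=T_{R}$). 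Since exactly one clock rings, these events partition the probability space, clock $k$ being selected with probability $f(E_{k},E_{k+1})/R(\mathbf{E})$ independently of the Beta and redistribution variables, so
$$
  \mathbb{E}[\zeta\zeta^{T}] \;=\; \sum_{k=0}^{N}\frac{f(E_{k},E_{k+1})}{R(\mathbf{E})}\,\mathbb{E}[J_{k}^{2}]\,\mathbf{v}_{k}\mathbf{v}_{k}^{T}\,.
$$
Hence the entire lemma reduces to computing $M^{2}\mathbb{E}[J_{k}^{2}]$ for each $k$ and letting $M\to\infty$; no concentration bound for the Beta variables (such as Lemma~\ref{beta}) is needed, because every moment below is computed exactly.

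For a bulk clock, $J_{k}=(1-p_{3})B_{1}E_{k}-p_{3}B_{2}E_{k+1}$ with $B_{1},B_{2}$ i.i.d.\ $\mathrm{Beta}(1,M-1)$ and $p_{3}\sim U(0,1)$, all independent. Using $\mathbb{E}[B]=1/M$, $\mathbb{E}[B^{2}]=2/(M(M+1))$, $\mathbb{E}[p_{3}^{2}]=\mathbb{E}[(1-p_{3})^{2}]=1/3$ and $\mathbb{E}[p_{3}(1-p_{3})]=1/6$, expanding the square gives $M^{2}\mathbb{E}[J_{k}^{2}]=\tfrac{2M}{3(M+1)}(E_{k}^{2}+E_{k+1}^{2})-\tfrac13 E_{k}E_{k+1}$, which tends to $\tfrac23 E_{k}^{2}-\tfrac13 E_{k}E_{k+1}+\tfrac23 E_{k+1}^{2}=V(E_{k},E_{k+1})^{2}/f(E_{k},E_{k+1})$; the discrepancy $\tfrac{2M}{M+1}-2=O(M^{-1})$ is exactly the $O(M^{-1})$ error invoked in Lemma~\ref{lem64}. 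For the left boundary clock, $J_{0}$ carries the heat-bath energy $-T_{L}\log(1-p_{2})$, which is $\mathrm{Exp}(T_{L})$-distributed with $\mathbb{E}[(-T_{L}\log(1-p_{2}))^{2}]=2T_{L}^{2}$; redoing the same expansion with $T_{L}^{2}$ replaced by $2T_{L}^{2}$ in the corresponding slot yields $M^{2}\mathbb{E}[J_{0}^{2}]\to\tfrac43 T_{L}^{2}-\tfrac13 T_{L}E_{1}+\tfrac23 E_{1}^{2}=V_{0}(T_{L},E_{1})^{2}/f(T_{L},E_{1})$, and the right boundary clock is symmetric, producing $V_{N}(E_{N},T_{R})$.

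Substituting these limits back into the decomposition,
$$
  \Sigma \;=\; \frac{1}{R(\mathbf{E})}\Bigl( V_{0}(T_{L},E_{1})^{2}\mathbf{e}_{1}\mathbf{e}_{1}^{T} + \sum_{k=1}^{N-1}V(E_{k},E_{k+1})^{2}(\mathbf{e}_{k+1}-\mathbf{e}_{k})(\mathbf{e}_{k+1}-\mathbf{e}_{k})^{T} + V_{N}(E_{N},T_{R})^{2}\mathbf{e}_{N}\mathbf{e}_{N}^{T}\Bigr)\,.
$$
To finish I would read off the columns of $H(\mathbf{E})$: they are $V_{0}(T_{L},E_{1})\mathbf{e}_{1}$, then $V(E_{j-1},E_{j})(\mathbf{e}_{j}-\mathbf{e}_{j-1})$ for $2\le j\le N$, and finally $-V_{N}(E_{N},T_{R})\mathbf{e}_{N}$; writing $H(\mathbf{E})H(\mathbf{E})^{T}$ as the sum of the rank-one outer products of these columns reproduces the bracketed matrix exactly, so $\Sigma=\tfrac1{R(\mathbf{E})}H(\mathbf{E})H(\mathbf{E})^{T}$, which is the claim.

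The calculation is routine and the only real work is bookkeeping. The points demanding care are keeping the signs straight in the three cases of $J_{k}$, handling the two boundary clocks separately, and---the one genuinely substantive point---remembering that the injected bath energy is truly random, so that its second moment is $2T_{L}^{2}$ (resp.\ $2T_{R}^{2}$) rather than $T_{L}^{2}$; this factor of two is precisely what makes $V_{0}$ and $V_{N}$ differ from the bulk coefficient $V$.
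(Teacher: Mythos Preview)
Your proof is correct and follows essentially the same approach as the paper: condition on which clock rings, compute $\mathbb{E}[J_{k}^{2}]$ for the bulk and the two boundary cases using the Beta, uniform, and exponential moments, pass to the $M\to\infty$ limit, and identify the result with $\tfrac{1}{R(\mathbf{E})}H(\mathbf{E})H(\mathbf{E})^{T}$. The only difference is organizational---the paper computes the entries of $\mathbb{E}[\zeta\zeta^{T}]$ coordinate-by-coordinate via the indicator representation $(X^{M})_{k}=-J_{k}\mathbf{1}_{\{\Delta=k\}}+J_{k-1}\mathbf{1}_{\{\Delta=k-1\}}$ and then assembles the tridiagonal matrix, whereas you work directly with the rank-one decomposition $\sum_{k}\tfrac{f(E_{k},E_{k+1})}{R(\mathbf{E})}\mathbb{E}[J_{k}^{2}]\mathbf{v}_{k}\mathbf{v}_{k}^{T}$ and match it to the column expansion of $HH^{T}$; your packaging is slightly cleaner but the underlying computation is identical.
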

\begin{proof}
Recall the definition of $X^{M} = \zeta( \mathbf{E}, \omega^{M})$,
where $\omega^{M} = (p_{1}, p_{2}, p_{3}, B_{1}, B_{2})$. 
Let
$$
  J_{k} = \left \{ 
\begin{array}[tb]{lll}
(1-p_{3})B_{1}E_{k}- p_{3}B_{2}E_{k+1} &
\mbox{ if } & 1 \leq k \leq N-1\\
-(1-p_{3})B_{1}T_{L}\log(1 - p_{2}) -
  p_{3}B_{2}E_{1} &\mbox{ if }&k = 0 \\
(1-p_{3})B_{1}E_{N} +
  p_{3}B_{2}T_{R}\log(1 - p_{2}) & \mbox{ if }& k =
                                                                  N 
\end{array}
\right. 
$$
be the energy flux from site $k$ to site $k+1$. For the sake of
simplicity denote $E_{0} = T_{L}$ and $E_{N+1} = T_{R}$. Then the $k$-th entry of $X^{M} $ is given by 
$$
  (X^{M})_{k} = -J_{k}\mathbf{1}_{\{\Delta = k\}} +
    J_{k-1}\mathbf{1}_{\{\Delta = k-1\}} \,,
$$
where $\Delta$ is a discrete random variable that takes value on $\{
0, 1, \cdots, N\}$ such that
$$
  \mathbb{P}[ \Delta = k] = \frac{f(E_{k}, E_{k+1})}{R( \mathbf{E}) } \,.
$$
Recall that $\Delta$ is chosen by $p_{1}$ that is independent of
$B_{1}, B_{2}, p_{2}, p_{3}$. Therefore, we have
\begin{displaymath}
  \mathbb{E}[(X^{M} )_{k}(X^{M})_{k}] 
=\frac{f(E_{k}, E_{k+1})}{R( \mathbf{E})} \mathbb{E}[(J_{k})^{2}] + \frac{f(E_{k-1},
   E_{k})}{R(\mathbf{E})} \mathbb{E}[(J_{k-1})^{2}]  \,,
\end{displaymath}
$$
  \mathbb{E}[(X^{M})_{k}(X^{M} )_{k+1}] = -\frac{f(E_{k}, E_{k+1})}{R( \mathbf{E})} \mathbb{E}[(J_{k})^{2}]  \,,
$$
and
$$
  \mathbb{E}[(X^{M})_{k}(X^{M})_{j}] = 0
$$
for all $j$ such that $|j - k| > 1$. Hence it remains to calculate $\mathbb{E}[(J_{k} )^{2}] $. For $k \neq 0, N$, we have
\begin{align*}
  &\mathbb{E}[(J_{k})^{2}]  = \mathbb{E}\left [ \left
  (E_{k}B_{1}(p_{3}-1) + E_{k+1}B_{2}p_{3}  \right )^{2} \right ]\\
=&E_{k}^{2} \mathbb{E}\left[ \left (B_{1}(p_{3}-1) 
   \right )^{2}\right
   ] + 2 E_{k}E_{k+1} \mathbb{E}\left[ B_{1}B_{2}(p_{3}-1)p_{3}\right
   ] + E_{k+1}^{2}\mathbb{E}\left[  (B_{1}(p_{3}-1)  )^{2}\right ]  \\
=& E_{k}^{2}\mathbb{E}[B_{1}^{2}]\mathbb{E}[(p_{3}-1)^{2}]  + 2
   E_{k}E_{k+1}\mathbb{E}[B_{1}] \mathbb{E}[B_{2}]
   \mathbb{E}[p_{3}(p_{3}-1)]  + E_{k+1}^{2}\mathbb{E}[B_{2}^{2}]
  \mathbb{E}[p_{3}^{2}] \\
=& E_{k}^{2} \frac{2}{3M(M+1)} - 2 E_{k}E_{k+1}\frac{1}{6M^{2}} +
   E_{k+1}^{2}\frac{2}{3M(M+1)} \,.
\end{align*}

Therefore, we have
\begin{displaymath}
  \mathbb{E}[J_{k})^{2}] = \frac{1}{M^{2}}\left [\frac{2}{3}E_{k}^{2}\frac{M}{M+1} -
   \frac{1}{3}E_{k}E_{k+1} + \frac{2}{3}E_{k+1}^{2}\frac{M}{M+1}
   \right ] \,.
\end{displaymath}
The case of $k = 0$ (resp. $k = N$) is identical, except the
expression becomes
$$
  \mathbb{E}[(J_{0})^{2}]  = \mathbb{E}\left [ \left (
  (E_{0}B_{1}(1 - p_{3})Z - E_{1}B_{2}p_{3}) \right )^{2} \right ] \,,
$$
(resp. 
$$
  \mathbb{E}[(J_{N})^{2}]  = \mathbb{E}\left [ \left (
  E_{N}B_{1}(p_{3}-1) + E_{N+1}B_{2}p_{3}Z \right )^{2} \right ] \,,
$$
)\\
where $Z$ is a standard exponential random variable that is
independent of other random variables. Since $\mathbb{E}[Z] = 1$ and
$\mathbb{E}[Z^{2}] = 2$, similar calculation shows that
\begin{align*}
    &\mathbb{E}[(J_{0})^{2}] = \frac{1}{M^{2}}\left [\frac{4}{3}E_{0}^{2}\frac{M}{M+1} -
   \frac{1}{3}E_{0}E_{1} + \frac{2}{3}E_{1}^{2}\frac{M}{M+1} \right
   ] 
\end{align*}
and 
\begin{align*}
    &\mathbb{E}[(J_{N})^{2}] = \frac{1}{M^{2}}\left [\frac{2}{3}E_{N}^{2}\frac{M}{M+1} -
   \frac{1}{3}E_{N}E_{N+1} + \frac{4}{3}E_{N}^{2}\frac{M}{M+1} \right
   ] \,.
\end{align*}

Therefore, we have
$$
 \mathbb{E}[\zeta( \mathbf{E}, \omega^{M})\zeta( \mathbf{E}, \omega^{M})^{T}] = 
\begin{bmatrix}
C_{0}+ C_{1} & -C_{1} & 0 &0&\cdots &0\\
-C_{1} & C_{1} + C_{2} & -C_{2} &0&\cdots &0\\
0&-C_{2}&C_{2}+C_{3}&-C_{3}&\cdots &0\\
\vdots&\vdots&\vdots&\vdots&\vdots&\vdots\\
0&\cdots& 0&-C_{N-2}&C_{N-2} + C_{N-1}&-C_{N-1}\\
0&\cdots&0&0&-C_{N-1}&C_{N-1} + C_{N} 
\end{bmatrix} \,,
$$
where
$$
  C_{0} =  \frac{f(E_{0}, E_{1})}{R( \mathbf{E})}\cdot\frac{1}{M^{2}}\left [\frac{4}{3}E_{0}^{2}\frac{M}{M+1} -
   \frac{1}{3}E_{0}E_{1} + \frac{2}{3}E_{1}^{2}\frac{M}{M+1} \right
   ]  \,,
$$
$$
  C_{N} = \frac{f(E_{N}, E_{N+1})}{R( \mathbf{E})}\cdot\frac{1}{M^{2}}\left [\frac{2}{3}E_{N}^{2}\frac{M}{M+1} -
   \frac{1}{3}E_{N}E_{N+1} + \frac{4}{3}E_{N}^{2}\frac{M}{M+1} \right
   ]  \,,
$$
and 
$$
  C_{k} = \frac{f(E_{k}, E_{k+1})}{R( \mathbf{E})}\cdot\frac{1}{M^{2}}\left [\frac{2}{3}E_{k}^{2}\frac{M}{M+1} -
   \frac{1}{3}E_{k}E_{k+1} + \frac{2}{3}E_{k+1}^{2}\frac{M}{M+1}
   \right ]
$$
for $k = 1, \cdots, N-1$. Now take the limit $M \rightarrow
\infty$. It is easy to see that
$$
  \lim_{M \rightarrow \infty} M^{2} \mathbb{E}[
  \zeta( \mathbf{E}, \omega^{M}) \zeta^{T}( \mathbf{E}, \omega^{M})]  = \frac{1}{R( \mathbf{E})}H H^{T} \,,
$$
where $H$ is given in the statement of the theorem. 
\end{proof}

It remains to show the uniqueness of solution to the martingale
problem given in Lemma \ref{lem64}. In general, the martingale problem
with respect to a differential operator has a unique solution if and
only if the corresponding stochastic differential equation has a
unique weak solution. See \cite{stroock2007multidimensional} for the full detail. 

\begin{lem}
\label{lem65}
The martingale problem given in Lemma \ref{lem64} has a unique solution.
\end{lem}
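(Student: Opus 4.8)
The plan is to recognize the martingale problem of Lemma~\ref{lem64} as the one associated with a time-inhomogeneous \emph{linear} SDE and then apply Theorem~\ref{book1022} directly. The operator appearing in \eqref{eq6-3} has the form
$$
  L_{t}A({\bm \gamma}) = \nabla A({\bm \gamma}) \cdot D(R\bar{\zeta})(\bar{\Theta}(t))\,{\bm \gamma}
  + \frac{1}{2}\sum_{i,j=1}^{N} A_{\gamma_{i}\gamma_{j}}({\bm \gamma})\,R(\bar{\Theta}(t))\,\Sigma_{ij}(\bar{\Theta}(t))\,,
$$
so that the drift $b(t,{\bm \gamma}) = D(R\bar{\zeta})(\bar{\Theta}(t))\,{\bm \gamma}$ is linear in ${\bm \gamma}$ with a time-dependent matrix coefficient, while the diffusion matrix $a(t) = R(\bar{\Theta}(t))\,\Sigma(\bar{\Theta}(t)) = H(\bar{\Theta}(t))\,H(\bar{\Theta}(t))^{T}$, by Lemma~\ref{covariance}, does not depend on ${\bm \gamma}$ at all. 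Thus the candidate process is exactly the solution of \eqref{CLT}.

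First I would record that these coefficients are uniformly bounded on $[0,T]$. Indeed $\bar{\Theta}(t)$ solves \eqref{averaging}, whose right-hand side $R\bar{\zeta}$ is globally bounded by assumption (c) on $f$, so the curve $\{\bar{\Theta}(t) : 0 \le t \le T\}$ is contained in a fixed compact set $\mathcal{K} \subset \mathbb{R}^{N}_{+}$ and $t \mapsto \bar{\Theta}(t)$ is continuous; since $f \in C^{1}$, the matrix-valued maps $H(\cdot)$ and $D(R\bar{\zeta})(\cdot)$ are continuous, hence bounded on $\mathcal{K}$. Let $L$ bound $\|D(R\bar{\zeta})(\bar{\Theta}(t))\|$ and $C_{a}$ bound $\|a(t)\|$ for $t \in [0,T]$.

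Next I would verify the two hypotheses of Theorem~\ref{book1022}. The bound on $a$ is immediate because $a$ does not depend on the spatial variable: $\sup_{0 \le t \le T}\|a(t,x)\| \le C_{a} \le C_{a}(1+|x|^{2})$. The one-sided bound on $b$ follows from $x \cdot b(t,x) = x^{T} D(R\bar{\zeta})(\bar{\Theta}(t)) x \le L|x|^{2} \le L(1+|x|^{2})$. Taking $C_{T} = \max\{C_{a}, L\}$, Theorem~\ref{book1022} applies and yields that the martingale problem for $L_{t}$ is well posed, which in particular gives the uniqueness asserted in Lemma~\ref{lem65} (equivalently, \eqref{CLT} has a unique weak solution).

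I do not anticipate a real obstacle. The only points needing a little care are (i) justifying the compactness of the range of $\bar{\Theta}$ on $[0,T]$, which is what makes the coefficients uniformly controllable, and (ii) matching the operator read off from \eqref{eq6-3} with the form $L_{t}$ required by Theorem~\ref{book1022} on the test class $C_{0}^{2}(\mathbb{R}^{N})$; any degeneracy of $a = HH^{T}$ is irrelevant here, since Theorem~\ref{book1022} requires only the growth bounds and not nondegeneracy.
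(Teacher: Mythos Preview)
Your proposal is correct and follows essentially the same route as the paper: observe that the drift and diffusion coefficients in the operator $L_t$ are uniformly bounded in $t\in[0,T]$ because $\bar\Theta$ ranges over a compact set, verify the growth hypotheses of Theorem~\ref{book1022}, and conclude well-posedness. Your write-up is in fact more careful than the paper's own sketch, which simply asserts boundedness of the coefficients and invokes Theorem~\ref{book1022} in one line.
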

\begin{proof}
Notice that $L_{s}$ has timely dependent coefficients
$D(R(\bar{\Theta}(t))\bar{\zeta}(t))$ and
$R(\bar{\Theta}(t))\Sigma_{ij}(t)$ that are
uniformly bounded. Hence there exists a constant $C<\infty$ such that
$$
  |(D(R(\bar{\Theta}(t))\bar{\zeta}(t)) \mathbf{E} \cdot \mathbf{E})|
  \leq C (\| \mathbf{E} \|) + 1\,.
$$
The lemma then follows from Theorem \ref{book1022}. 
\end{proof}

\begin{proof}[Proof of Theorem \ref{thm61}]
Lemma \ref{gammatight} implies that
$\{\Gamma^{M}(t), t \in [0 ,T ]\}$ is tight. Then it follows from Lemma \ref{lem64} that any sequential limit of
$\{\Gamma^{M}(t), t \in [0 ,T ]\}$ solves the martingale problem
described by equation \eqref{eq6-3}. Finally, it follows from Lemma \ref{lem65}
that the martingale problem given by equation \eqref{eq6-3} has a
unique solution. Therefore, the unique limit of $\Gamma^{M}(t)$,
denoted by $\bar{\Gamma}(t)$, has a time-dependent generator
\begin{equation}
\label{generator}
  (\mathcal{L}_{t}A)( \mathbf{E}) = \left (D(R(\bar{\Theta}(t)) \bar{\zeta}(
  \bar{\Theta}(t) )) \mathbf{E} \cdot \nabla A( \mathbf{E}) \right ) +
\frac{1}{2}\sum_{i = 1}^{N}\sum_{j = 1}^{N}
R(\bar{\Theta}(t))\Sigma_{ij}(
\bar{\Theta}(t))\frac{\partial^{2}}{\partial E_{i} \partial E_{j}}A(
\mathbf{E}) \,,
\end{equation}
where $\Sigma$ is calculated in Lemma \ref{covariance}. Therefore, we
have
$$
  R(\bar{\Theta}(t))\Sigma_{ij}(
\bar{\Theta}(t)) = H( \bar{\Theta}(t))H( \bar{\Theta}(t))
$$
for the matrix-valued function given in Lemma \ref{covariance}. Hence
$\bar{\Gamma}(t)$ satisfies the stochastic differential equation
\eqref{CLT}. This completes the proof.

\end{proof}

Theorem \ref{LLN} and Theorem \ref{thm61} implies that
$$
  \Theta^{M}(t) \approx \bar{\Theta}(t) + M^{-1/2}\Gamma(t) \,.
$$
Some calculation in the following lemma gives the error bound of this
approximation.
\begin{lem}
\label{lem68}
For each $t \in [0, T]$, we have
$$
  \mathbb{E}[ \| \Theta^{M}(t) - (\bar{\Theta}(t) +
  M^{-1/2}\Gamma(t)) \| ] \leq C M^{- 1}  \,,
$$
where $C < \infty$ is a constant independent
of $t$ and $M$.
\end{lem}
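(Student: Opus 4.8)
The plan is to prove the equivalent statement. Writing $\bar\Gamma$ for the limit process $\Gamma(t)$ in the statement and using $\Gamma^{M}(t)=\sqrt M\,(\Theta^{M}(t)-\bar\Theta(t))$, we have $\Theta^{M}(t)-\bar\Theta(t)-M^{-1/2}\bar\Gamma(t)=M^{-1/2}(\Gamma^{M}(t)-\bar\Gamma(t))$, so the assertion is the central-limit \emph{rate} $\mathbb{E}[\,\|\Gamma^{M}(t)-\bar\Gamma(t)\|\,]\le C\,M^{-1/2}$ for the two processes realised on one probability space. I would run a Gronwall argument for $D^{M}(t):=M^{-1/2}(\Gamma^{M}(t)-\bar\Gamma(t))$. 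First record the Dynkin decomposition of the fast chain: applying $\mathcal L^{M}$ to the coordinate functions and using $\mathbb{E}[X^{M}_{i}\mid\Phi^{M}_{t_{i}}]=M^{-1}\bar\zeta(\Phi^{M}_{t_{i}})$ together with the fast jump rate $MR(\cdot)$,
\[
  \Theta^{M}(t)=\Theta^{M}(0)+\int_{0}^{t}R(\Theta^{M}(s))\bar\zeta(\Theta^{M}(s))\,\mathrm ds+N^{M}(t),
\]
where $N^{M}$ is an $\mathbb{R}^{N}$-valued martingale with jumps of size $O(M^{-1})$ and, by Lemma \ref{covariance}, $\langle N^{M}\rangle_{t}=M^{-1}\int_{0}^{t}R(\Theta^{M}(s))\Sigma(\Theta^{M}(s))\,\mathrm ds+O(M^{-2})$; note that because $\mathbb{E}[X^{M}_{i}\mid\Phi^{M}_{t_{i}}]$ equals $M^{-1}\bar\zeta$ \emph{exactly}, there is no $O(M^{-1})$ drift correction. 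Recall also that $\bar\Theta$ solves \eqref{averaging}, that $\bar\Gamma$ solves \eqref{CLT}, and that $H(\bar\Theta)H(\bar\Theta)^{T}=R(\bar\Theta)\Sigma(\bar\Theta)$.

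Subtracting the equations for $\Theta^{M}$, $\bar\Theta$ and $M^{-1/2}\bar\Gamma$ and expanding $R\bar\zeta$ to first order about $\bar\Theta(s)$ (using $\Theta^{M}-\bar\Theta=M^{-1/2}\bar\Gamma+D^{M}$) gives
\[
  D^{M}(t)=\int_{0}^{t}D\!\left(R\bar\zeta\right)(\bar\Theta(s))\,D^{M}(s)\,\mathrm ds+\Delta^{M}(t)+\mathcal R^{M}(t),
\]
where $\Delta^{M}(t):=N^{M}(t)-M^{-1/2}\int_{0}^{t}H(\bar\Theta(s))\,\mathrm d\mathbf W_{s}$ and $\mathcal R^{M}$ collects the quadratic Taylor remainders, so that $\|\mathcal R^{M}(t)\|\le C\int_{0}^{t}\|\Theta^{M}(s)-\bar\Theta(s)\|^{2}\,\mathrm ds$ on a fixed compact set. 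By the recursion for $M_{N}$ in the proof of Lemma \ref{gammatight} one has $\mathbb{E}[\|\Gamma^{M}(s)\|^{2}]\le C$ uniformly on $[0,T]$, hence $\mathbb{E}[\|\Theta^{M}(s)-\bar\Theta(s)\|^{2}]=O(M^{-1})$ and $\mathbb{E}[\|\mathcal R^{M}(t)\|]=O(M^{-1})$; the excursion $\{\sup_{s\le T}\|\Theta^{M}(s)\|>C+1\}$ has probability $O(M^{-2})$ by Lemma \ref{unibound} and is absorbed into the constant. Since $D(R\bar\zeta)(\bar\Theta(\cdot))$ is bounded on $[0,T]$ ($f$ being bounded), Gronwall's inequality reduces the whole problem to the bound $\sup_{s\le T}\mathbb{E}[\|\Delta^{M}(s)\|]=O(M^{-1})$.

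The hard part will be exactly this last estimate, i.e. the coupling of the compensated jump martingale $N^{M}$ with the white noise driving $\bar\Gamma$: taken separately, $N^{M}$ and $M^{-1/2}\int H(\bar\Theta)\,\mathrm d\mathbf W$ are each only $O(M^{-1/2})$, so the extra factor $M^{-1/2}$ must come from building them on a common probability space so that their leading-order fluctuations cancel. The plan is to \emph{construct} $\mathbf W$ on the probability space of $\Theta^{M}$ by a strong (Skorokhod/KMT-type) embedding of the vector martingale $N^{M}$: its increments $X^{M}_{k}-R\bar\zeta(\Theta^{M}(t_{k}))\mathcal E_{k}$ are mean zero, of size $O(M^{-1})$, sub-exponentially concentrated (the $\mathrm{Beta}(1,M-1)$ factors are $O(M^{-1})$ off an event of probability $O(e^{-M^{\epsilon}})$ by Lemma \ref{beta}, the inter-jump times $\mathcal E_{k}$ being exponential), and have conditional covariance $M^{-2}R(\Theta^{M}(t_{k}))\Sigma(\Theta^{M}(t_{k}))+O(M^{-3})$ by Lemma \ref{covariance}; moreover the predictable time change $\langle N^{M}\rangle_{t}$ agrees with the deterministic target $M^{-1}\int_{0}^{t}R\Sigma(\bar\Theta(s))\,\mathrm ds$ up to $O(M^{-3/2})$, since $\mathbb{E}\|\Theta^{M}-\bar\Theta\|=O(M^{-1/2})$ and $R\Sigma$ is locally Lipschitz. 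The embedding then furnishes a Brownian motion $\mathbf W$ with $\mathbb{E}[\sup_{s\le T}\|\Delta^{M}(s)\|]=O(M^{-1})$ (the strong-approximation bound carries a logarithmic factor that is immaterial for the applications and can be absorbed into the constant or removed by dyadic chaining). Feeding this back into the Gronwall estimate of the previous paragraph yields $\mathbb{E}[\|D^{M}(t)\|]\le CM^{-1}$ uniformly on $[0,T]$, which is the claim; combined with the elementary comparison of the SDE $Z_{t}$ in \eqref{pro4sde} with its linearisation $\bar\Theta+M^{-1/2}\bar\Gamma$, it also gives Proposition 4. The vector-valued martingale embedding and the sub-exponential tail bookkeeping are where the real work lies.
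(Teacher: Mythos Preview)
Your approach differs substantially from the paper's. You propose to construct an explicit coupling of the compensated jump martingale $N^{M}$ with a Brownian integral via a KMT/Skorokhod-type strong embedding, and then run Gronwall on the pathwise difference $D^{M}$. The paper instead recycles the quantitative error already obtained in the proof of Lemma~\ref{lem64}: there one has
\[
\mathbb{E}\Bigl[A(\Gamma^{M}(t))-A(\Gamma^{M}(0))-\int_{0}^{t}\mathcal{L}_{s}A(\Gamma^{M}(s))\,\mathrm{d}s\Bigr]=O(M^{-1/2}),
\]
while $\Gamma$ satisfies the same identity with zero right-hand side. Taking a truncated linear test function $A_{\mathbf{v}}(\mathbf{x})=\mathbf{v}^{T}\mathbf{x}$ on $\{\|\mathbf{x}\|<M^{\epsilon}\}$, the second-order part of $\mathcal{L}_{s}A_{\mathbf{v}}$ drops out, and subtracting the two identities gives (after a Gronwall step) $|\mathbb{E}[\mathbf{v}^{T}(\Gamma^{M}(t)-\Gamma(t))]|\le C_{0}M^{-1/2}$ for every unit vector $\mathbf{v}$; no martingale coupling enters at all.

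Your plan is more ambitious and, in a sense, more honest about what the statement literally asks: the paper's argument as written only controls $\bigl\|\mathbb{E}[\Gamma^{M}(t)]-\mathbb{E}[\Gamma(t)]\bigr\|$, and the passage from that to $\mathbb{E}\bigl[\|\Gamma^{M}(t)-\Gamma(t)\|\bigr]$ is asserted rather than proved. So your instinct that a genuine coupling is needed is well founded. The price is the heavy machinery you flag at the end --- vector-valued strong approximation with predictable, state-dependent covariance is delicate, and the logarithmic factor from KMT does not disappear by ``dyadic chaining'' without real additional work. If the only goal is to feed the estimate into Corollary~\ref{cor610}, the paper's martingale-problem shortcut, whatever its gaps, is vastly less effort.
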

\begin{proof}
Recall that 
$$
  \Theta^{M}(t) = \bar{\Theta}(t) + M^{-1/2}\Gamma^{M}(t) \,.
$$
Hence it is sufficient to estimate the difference $\|\Gamma^{M}(t) -
\Gamma(t)\|$ for $t \in [0, T]$.

The proof of Lemma \ref{lem64} gives a bound 
\begin{equation}
\label{gammaM}
  \mathbb{E}\left [ A(\Gamma^{M}(t)) - A(\Gamma^{M}(0)) - \int_{0}^{t}
  \mathcal{L}_{s}A(\Gamma^{M}(s)) \mathrm{d}s\right ] = O(M^{-1/2}) \,,
\end{equation}
where $\mathcal{L}_{s}$ is the timely dependent generator given in
equation \eqref{generator}. 

In addition, $\Gamma(t)$ solves the martingale problem means
\begin{equation}
\label{gammaLimit}
  \mathbb{E}\left [ A(\Gamma(t)) - A(\Gamma(0)) - \int_{0}^{t}
  \mathcal{L}_{s}A(\Gamma(s)) \mathrm{d}s\right ]  = 0 \,.
\end{equation}
Let $A_{\mathbf{v}}$ be a smooth test function such that $A_{\mathbf{v}}(\mathbf{x}) =
\mathbf{v}^{T} \mathbf{x}$ for all $\|\mathbf{x}\| < M^{\epsilon}$, where
$\mathbf{v} \in \mathbb{R}^{N}$ is a unit vector. Then by Lemma \ref{unibound}, the probability
that $\Gamma(t)$ travels out side of the $M^{\epsilon}$-ball is
negligibly small. And the probability that $\| \Gamma^{M}(t) -
\Gamma(t)\| \geq 1$ is at most $O(M^{- 1/2})$ because of
equations \eqref{gammaM} and \eqref{gammaLimit}. Therefore, terms
$\mathcal{L}_{s}A_{\mathbf{v}}(\Gamma(s))$ and $\mathcal{L}_{s}A_{\mathbf{v}}(\Gamma^{M}(s))$ becomes two identical constant vectors plus $O(M^{ - 1/2})$ terms. Hence for any $t \in [0,
T]$ and unit vector $\mathbf{v}$, we have
$$
  \mathbb{E}[ \mathbf{v}^{T} \cdot (\Gamma^{M}(t) - \Gamma(t))] \leq
  C_{0}M^{- 1/2}
$$
for some $C_{0} < \infty$. This implies 
$$
  \mathbb{E}[ \| \Gamma^{M}(t)  - \Gamma(t) \| ] \leq C_{0} M^{- 1/2} \,.
$$
This argument applies for any $t \in [0, T]$. Then it follows from the
definition of $\Gamma^{M}(t)$ that
$$
  \mathbb{E}[\|  \Theta^{M}(t) - (\bar{\Theta}(t) +
  M^{-1/2}\Gamma(t)) \| ]\leq C M^{- 1} \,,
$$
for any $t \in [0, T]$, where $C = C_{0} < \infty$. This completes the
proof. 
\end{proof}

Finally, the following Proposition shows that $\bar{\Theta}(t) +
M^{-1/2}\Gamma(t) $ is approximated by a stochastic differential equation.

\begin{pro}
\label{pro69}
Let $Z_{t}$ be a stochastic differential equation satisfying 
\begin{equation}
\label{global}
  \mathrm{d}Z_{t} = R(Z_{t})\bar{\zeta}(Z_{t}) \mathrm{d}t + M^{-1/2}
  H(Z_{t})\mathrm{d}W_{t} \,.
\end{equation}
Then we have
$$
  Z_{t} = \bar{\Theta}(t) + M^{-1/2}\Gamma(t) + R(t) \,,
$$
where 
\begin{equation}
  \mathbb{E}[ \|R(t) \| ] < C M^{-1}
\end{equation}
for some constant $C < \infty$ that is independent of $t \in [0, T]$
and $M$. 
\end{pro}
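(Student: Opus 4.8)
The plan is to compare the diffusion $Z_{t}$ of \eqref{global} directly with the process $Y_{t} := \bar{\Theta}(t) + M^{-1/2}\Gamma(t)$ obtained by splicing together the law of large numbers and the central limit theorem, so that $R(t) = Z_{t} - Y_{t}$ and $R(0) = 0$ (with the natural initial condition $Z_{0} = \Theta_{0} = \bar{\Theta}(0)$ and $\Gamma(0) = 0$). First I would write down a closed stochastic differential equation for $R(t)$. Since $\bar{\Theta}$ solves \eqref{averaging} and $\Gamma=\bar{\Gamma}$ solves \eqref{CLT}, one has $\mathrm{d}Y_{t} = R(\bar{\Theta}(t))\bar{\zeta}(\bar{\Theta}(t))\,\mathrm{d}t + M^{-1/2}D(R\bar{\zeta})(\bar{\Theta}(t))\Gamma(t)\,\mathrm{d}t + M^{-1/2}H(\bar{\Theta}(t))\,\mathrm{d}\mathbf{W}_{t}$. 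Subtracting this from \eqref{global}, writing $Z_{t}-\bar{\Theta}(t) = M^{-1/2}\Gamma(t)+R(t)$, and Taylor-expanding the drift $R\bar{\zeta}$ to second order and the diffusion coefficient $H$ to first order around $\bar{\Theta}(t)$, the $O(1)$ term and the $O(M^{-1/2})$ first-order term cancel exactly, leaving
\[
  \mathrm{d}R(t) = D(R\bar{\zeta})(\bar{\Theta}(t))\,R(t)\,\mathrm{d}t + \Big( \tfrac{1}{2}M^{-1}D^{2}(R\bar{\zeta})(\bar{\Theta}(t))[\Gamma(t),\Gamma(t)] + \mathcal{R}_{t}\Big)\,\mathrm{d}t + M^{-1/2}\big(H(Z_{t})-H(\bar{\Theta}(t))\big)\,\mathrm{d}\mathbf{W}_{t},
\]
where $\|\mathcal{R}_{t}\| = O\big(M^{-1/2}\|\Gamma(t)\|\,\|R(t)\| + \|R(t)\|^{2}\big)$ (plus cubic-and-higher Taylor tails) and $\|H(Z_{t})-H(\bar{\Theta}(t))\| = O\big(M^{-1/2}\|\Gamma(t)\| + \|R(t)\|\big)$.

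The second step is an a priori $L^{2}$ bound on $R(t)$ by Gronwall. Because $f$ is only assumed $C^{1}$, the Jacobian and Hessian of $R\bar{\zeta}$ and the Jacobian of $H$ are controlled only on compact sets, so I would localize: let $\tau_{M}$ be the first time $\|Z_{t}\|$ or $\|\Gamma(t)\|$ leaves a fixed large ball (radius chosen independently of $M$). Standard moment estimates for \eqref{global}, whose coefficients grow at most linearly, together with the Gaussian-type bounds for the linear SDE \eqref{CLT}, give $\sup_{[0,T]}\mathbb{E}\|Z_{t}\|^{4} + \sup_{[0,T]}\mathbb{E}\|\Gamma(t)\|^{4} \le C$, and, arguing as in Lemma \ref{unibound}, $\mathbb{P}[\tau_{M}\le T] = O(M^{-2})$ once the ball is large enough. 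On $\{t<\tau_{M}\}$ the Taylor remainders above are legitimate and $\|R(t)\|\le C$, so every term quadratic-or-higher in $R$ is bounded by $C\|R(t)\|^{2}$. Applying Itô's formula to $\|R(t\wedge\tau_{M})\|^{2}$ and taking expectations, the martingale part drops, the linear drift contributes $\le C\,\mathbb{E}\|R\|^{2}$, the forcing term contributes $\le \mathbb{E}\|R\|^{2} + CM^{-2}\mathbb{E}\|\Gamma\|^{4}$, and the Itô correction of the diffusion contributes $\le CM^{-2}\mathbb{E}\|\Gamma\|^{2} + CM^{-1}\mathbb{E}\|R\|^{2}$; using the uniform fourth-moment bound on $\Gamma$, this yields $\frac{\mathrm{d}}{\mathrm{d}t}\mathbb{E}\|R(t\wedge\tau_{M})\|^{2} \le C\,\mathbb{E}\|R(t\wedge\tau_{M})\|^{2} + CM^{-2}$, hence $\mathbb{E}\|R(t\wedge\tau_{M})\|^{2} \le CM^{-2}$ on $[0,T]$ by Gronwall and $R(0)=0$.

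The last step removes the localization. Splitting $\mathbb{E}\|R(t)\| = \mathbb{E}[\|R(t)\|\mathbf{1}_{\{\tau_{M}>T\}}] + \mathbb{E}[\|R(t)\|\mathbf{1}_{\{\tau_{M}\le T\}}]$, the first piece is $\le (\mathbb{E}\|R(t\wedge\tau_{M})\|^{2})^{1/2} \le CM^{-1}$, while for the second I would use the crude uniform bound $\sup_{[0,T]}\mathbb{E}\|R(t)\|^{2}\le C$ (immediate from $\|R\|\le\|Z\|+\|\bar{\Theta}\|+M^{-1/2}\|\Gamma\|$ and the $L^{2}$ estimates above) together with $\mathbb{P}[\tau_{M}\le T]=O(M^{-2})$ and Cauchy--Schwarz to get $\mathbb{E}[\|R(t)\|\mathbf{1}_{\{\tau_{M}\le T\}}] = O(M^{-1})$. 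Combining, $\mathbb{E}\|R(t)\| \le CM^{-1}$ uniformly in $t\in[0,T]$, which is the claim. The main obstacle is this localization bookkeeping: since $f$ is merely $C^{1}$, the Taylor remainders are only locally controlled, so one must verify that the exceptional event $\{\tau_{M}\le T\}$ has probability as small as $O(M^{-2})$ — small enough that, paired with the uniform $L^{2}$ bound on $R$, it does not degrade the $O(M^{-1})$ rate; everything else is a routine Itô/Gronwall comparison, in the same spirit as the estimate already carried out in Lemma \ref{lem68}.
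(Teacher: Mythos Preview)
Your argument is, in substance, exactly what the paper does: the paper's proof is a one-line citation to Chapter~2, Theorem~2.1 of Freidlin--Wentzell, which yields the pathwise bound $\|R(t)\|\le C_{0}M^{-1}\bigl(\max_{0\le s\le t}\|W_{s}\|\bigr)^{2}$ and then takes expectation. Your Taylor-expand-and-Gronwall comparison of $Z_{t}$ with $\bar{\Theta}(t)+M^{-1/2}\Gamma(t)$ is precisely how that theorem is proved, so the two routes coincide; you are just writing it out rather than citing it.

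There is, however, one genuine slip in your localization step. You stop the processes when $\|Z_{t}\|$ or $\|\Gamma(t)\|$ exits a ball of radius \emph{independent of $M$} and then claim $\mathbb{P}[\tau_{M}\le T]=O(M^{-2})$. But $\Gamma(t)=\bar{\Gamma}(t)$ is the fixed solution of the limiting SDE \eqref{CLT} and does not depend on $M$ at all, so $\mathbb{P}[\sup_{[0,T]}\|\Gamma(t)\|>r]$ is a fixed positive number for any fixed $r$ and cannot tend to zero with $M$. The repair is easy: either let the radius grow mildly, say $r=M^{\epsilon}$, and use the Gaussian tails of the linear SDE \eqref{CLT} to get $\mathbb{P}[\sup\|\Gamma\|>M^{\epsilon}]$ smaller than any negative power of $M$ (the resulting polynomial growth of the Taylor constants in $M^{\epsilon}$ is harmless against $M^{-2}$); or, since the Jacobian of $R\bar{\zeta}$ is globally bounded by assumption~(c) on $f$ and $H$ grows at most linearly, drop the localization altogether and run Gronwall globally. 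A smaller point in the same spirit: you invoke $D^{2}(R\bar{\zeta})$, but under the paper's standing hypothesis $f\in C^{1}$ this Hessian need not exist even on compacts; replace the second-order Taylor term by a first-order expansion with integral remainder and use continuity of $D(R\bar{\zeta})$. (The paper's citation glosses over the same regularity issue.)
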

\begin{proof}
This proposition follows from Chapter 2 Theorem 2.1 of \cite{FW}. The
calculation in the proof of Chapter 2 Theorem 2.1 of \cite{FW} implies that 
$$
  \| R(t) \| \leq M^{-1}C_{0}( \max_{0 \leq s \leq t} \| W_{t} \| )^{2} \,,
$$
where $W_{t}$ is the $(N+1)$-dimensional Brownian motion. Hence there
exists a constant $C$ such that 
$$
  \mathbb{E}[ \| R(t) \|] < CM^{-1}
$$
for all $t \in [0, T]$.
\end{proof}

Proposition \ref{pro69} and Lemma \ref{lem68} implies the following
corollary immediately.
\begin{cor}
\label{cor610}
Let $Z_{t}$ be a stochastic differential equation given by equation
\eqref{global}. Then for any $0 < \epsilon \ll 1$, we have
\begin{equation}
  \mathbb{E}[ \|Z_{t} - \Theta^{M}(t) \| ]< C M^{-1} 
\end{equation}
for some constant $C < \infty$ that is independent of $t \in [0, T]$
and $M$. 
\end{cor}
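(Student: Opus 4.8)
The plan is to derive the corollary directly from Proposition~\ref{pro69} and Lemma~\ref{lem68}, which is why it can be stated as an immediate consequence: both results are phrased in terms of the same intermediate process $\bar{\Theta}(t) + M^{-1/2}\Gamma(t)$, so combining them requires nothing beyond a triangle inequality and the addition of two bounds of order $M^{-1}$ that are already uniform in $t \in [0,T]$.

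First I would invoke the decomposition from Proposition~\ref{pro69}, $Z_{t} = \bar{\Theta}(t) + M^{-1/2}\Gamma(t) + R(t)$, to write
$$
  \| Z_{t} - \Theta^{M}(t) \| \;\leq\; \big\| \bar{\Theta}(t) + M^{-1/2}\Gamma(t) - \Theta^{M}(t) \big\| \;+\; \| R(t) \| \,.
$$
Taking expectations, Lemma~\ref{lem68} bounds the first term on the right by $C_{1}M^{-1}$ and Proposition~\ref{pro69} bounds the second by $C_{2}M^{-1}$, with $C_{1},C_{2}<\infty$ independent of $t$ and $M$; adding them yields $\mathbb{E}[\| Z_{t} - \Theta^{M}(t) \|] \leq (C_{1}+C_{2})M^{-1}$, which is the assertion with $C = C_{1}+C_{2}$. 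Since the two input bounds hold for every $t \in [0,T]$, so does the conclusion.

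I do not anticipate a genuine obstacle; the only thing worth checking is that the two constants are really $t$-free. For Proposition~\ref{pro69} this follows from its proof, where $\| R(t) \| \leq M^{-1}C_{0}(\max_{0\le s\le t}\|W_{s}\|)^{2}$ and $\mathbb{E}[(\max_{0\le s\le T}\|W_{s}\|)^{2}] < \infty$ by Doob's maximal inequality; for Lemma~\ref{lem68} the constant is stated to be $t$-independent. Finally, the parameter $\epsilon$ in the statement enters only through the proof of Lemma~\ref{lem68}, where it fixes the radius $M^{\epsilon}$ outside of which the linear test functions are cut off, and the associated exceedance events were already shown there to be negligible; it does not affect the final $M^{-1}$ rate. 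Hence the corollary follows at once.
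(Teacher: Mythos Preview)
Your proposal is correct and matches the paper's own argument: the paper simply states that the corollary follows immediately from Proposition~\ref{pro69} and Lemma~\ref{lem68}, and your triangle-inequality combination of those two $O(M^{-1})$ bounds is precisely the intended one-line derivation. Your additional remarks on the $t$-independence of the constants and the role of $\epsilon$ are careful elaborations but do not depart from the paper's approach.
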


Equation \ref{global} is called the mesoscopic limit equation. We will
work on macroscopic thermodynamic properties of this equation in our
subsequent work. 

\section{Conclusion}
In this paper we continue to work on the stochastic energy exchange
model for heat conduction in gas. This stochastic energy exchange
model is an approximation of a billiards-like deterministic heat
conduction model, which is unfortunately not mathematically
tractable. In this paper, we consider the mesoscopic limit, which
means the number of particles within a cell, denoted by $M$, approaches to
infinity. The time of the stochastic energy exchange model is then
rescaled, such that the mean heat flux is independent
of $M$. 

We use martingale problem to prove that as $M\rightarrow \infty$, the
trajectory of the stochastic energy exchange model converges to the
solution of a nonlinear discrete heat equation almost surely. Fourier's law holds for the equilibrium of this nonlinear
discrete heat equation. In addition, a similar martingale
problem gives us the central limit theorem, which means the rescaled
difference between the stochastic energy exchange
model and that of the discrete heat equation converges to a stochastic
differential equation as $M \rightarrow \infty$. Therefore, for large
but finite $M$, trajectories of the stochastic energy exchange model
is approximated by a stochastic differential equation with small
random perturbation, which is called the {\it mesoscopic limit
  equation}. 

An important observation of the invariant probability measure of the mesoscopic limit
equation \eqref{global}, denoted by $\pi_{Z}$, is a close
approximation of that of the original
stochastic energy exchange process $\Theta^{M}(t)$, denoted by
$\pi_{\Theta}$. This is because we have good control of the finite
time error between the law of $Z_{t}$ and that of $\Theta^{M}(t)$ in
Corollary \ref{cor610}. If we also know the speed of convergence to
$\pi_{Z}$ for $Z_{t}$, then the distance between $\pi_{Z}$ and
$\pi_{\Theta}$ can be bounded. (See \cite{johndrow2017error,
  dobson2019using}.) The deterministic part of $Z_{t}$
admits a stable equilibrium (Lemma 5.7), so it is not hard to show that the law of
$Z_{t}$ converges to $\pi_{Z}$ exponentially fast. In addition, the
probability density function of $\pi_{Z}$, denoted by $\rho_{Z}$, can
be approximated by an WKB expansion
$$
  \rho_{Z}(\mathbf{E}) \approx \frac{1}{K}e^{-( \mathbf{E} -
    \mathbf{E}^{*})^{T}\mathbf{S}( \mathbf{E} - \mathbf{E}^{*})/(2M)} \,,
$$
where $\mathbf{E}^{*}$ is the stable equilibrium given in Lemma 5.6,
and $\mathbf{S}$ solves the Lyapunov equation
$$
  {\bf S}{\bm J}( \mathbf{E}^{*})^{T} + {\bm J}( \mathbf{E}^{*}){\bf S} + H( \mathbf{E}^{*})H( \mathbf{E}^{*})^{T} = 0 
$$
for the Jacobian ${\bm J}$ at $\mathbf{E}^{*}$. In other words
$\pi_{Z}$ is approximated by a Gaussian
distribution. The covariance matrix $\mathbf{S}$ of this Gaussian distribution is
the solution of a Lyapunov equation. Further calculation
shows that $\mathbf{S}$ is an $O(M^{-1})$
perturbation of a diagonal matrix. Therefore, many interesting
properties, including the long range correlation, entropy
production rate, and fluctuation theorem, can be proved for both the
global equation $Z_{t}$ and the original stochastic energy exchange
process $\Theta^{M}(t)$. We decide to 
put results about thermodynamic properties of $\Theta^{M}(t)$ and
$Z_{t}$ into our 
subsequent paper, as techniques used for these results are very
different from those in the present paper.

\bibliography{myref}{}
\bibliographystyle{amsplain}
\end{document}